\documentclass[11pt,letterpaper]{article}
\usepackage{amsmath,amsfonts,amsthm,amssymb}
\usepackage{setspace}
\usepackage{fancyhdr}
\usepackage{lastpage}
\usepackage{extramarks}
\usepackage{xspace}
\usepackage{chngpage}
\usepackage[ruled,linesnumbered,vlined]{algorithm2e}
\usepackage{soul,color}
\usepackage{graphicx,float,wrapfig}
\usepackage[font=small,labelfont=bf]{caption}
\usepackage[margin=1in]{geometry}
\linespread{1.0}
\usepackage{thmtools,thm-restate}
\usepackage{verbatim}
\usepackage[linesnumbered,ruled,vlined]{algorithm2e}

\usepackage[nice]{nicefrac}

\usepackage{mathrsfs}

\allowdisplaybreaks

\setlength{\parskip}{3pt}

\usepackage{mdframed}
\newtheorem*{mdresult}{Result}

\DeclareMathOperator*{\argmax}{arg\,max}

\newcommand{\stlb}{\mathrm{LB}}
\newcommand{\algbase}{\mathrm{Alg^{b}}}
\newcommand{\algbi}{\mathrm{Alg}^{\mathrm{b}}_i}

\newcommand{\mwm}{{\rm{MWM}}}

\newcommand{\lp}{\mathrm{LP}}
\newcommand{\feasib}{\mathcal{P}}
\newcommand{\lpexante}{\mathrm{LP_{ex-ante}}}

\newcommand{\lpcore}{\mathrm{LP_{slack}}}
\newcommand{\lpslack}{\mathrm{LP_{slack}}}

\newcommand{\lpsubmod}{\mathsf{LP_{submod}}}

\newcommand{\philo}{\mathrm{OPT_{online}}}

\newcommand{\deltnew}{\delta_0}

\newcommand{\probfr}{{\delta_x}}

\newcommand{\epsalg}{\varepsilon_{\mathsf{Alg}}}

\newcommand{\epsslack}{\varepsilon_{\mathrm{s}}}

\newcommand{\wit}[1][t]{w_{i#1}}

\newcommand{\xit}[1][t]{x_{i#1}}
\newcommand{\yit}[1][t]{y_{i#1}}

\newcommand{\vt}{t}
\newcommand{\ui}{i}

\newcommand{\eps}{\varepsilon}

\newcommand{\epsphilo}{\varepsilon_o}
\newcommand{\eqdef}{\overset{\mathrm{def}}{=\mathrel{\mkern-3mu}=}}
\newcommand{\tildxl}{{\tilde{x}^L}}
\newcommand{\tildyl}{{\tilde{y}^L}}
\newcommand{\tildx}{{\tilde{x}}}
\newcommand{\tildy}{{\tilde{y}}}
\newcommand{\tildz}{{\tilde{z}}}

\newcommand{\yifana}{\varepsilon_{\mathrm{s1}}}
\newcommand{\yifanb}{\varepsilon_{\mathrm{s2}}}

\newcommand{\fr}{\mathsf{FR}}

\newcommand{\dt}{\mathsf{DT}}

\newcommand{\alg}{\mathrm{ALG}}

\newcommand{\E}{\mathbb{E}}

\newcommand{\one}{\mathbf{1}}%
\renewcommand{\Pr}[2][]{\mbox{\rm\bf Pr}_{#1}\left[#2\right]}%




\usepackage[usenames,dvipsnames]{xcolor}
\usepackage[linktocpage=true,breaklinks,colorlinks,citecolor=blue,linkcolor=BrickRed]{hyperref}
\usepackage{cleveref}
\newcommand{\algref}[1]{\hyperref[{#1}]{Algorithm~\ref*{#1}}}
\newcommand{\IGNORE}[1]{}

\usepackage{tikz}
\usetikzlibrary{graphs, graphs.standard}
\usetikzlibrary{arrows.meta}

\usepackage{tcolorbox}
\usepackage{enumerate}

\title{Online Stochastic Matching with Unknown Arrival Order: \\
Beating $0.5$ against the Online Optimum}

\author{
Enze Sun\thanks{The University of Hong Kong (\texttt{sunenze@connect.hku.hk}). Supported by NSFC6212290003.}
\and
Zhihao Gavin Tang\thanks{ITCS, Key Laboratory of Interdisciplinary Research of Computation and Economics, Shanghai University of Finance and Economics (\texttt{tang.zhihao@mail.shufe.edu.cn}). Supported by National Key R\&D Program of China (2023YFA1009500).}
\and
Yifan Wang\thanks{School of Computer Science, Georgia Tech (\texttt{ywang3782@gatech.edu}). Supported in part by NSF awards CCF-2327010 and CCF-2440113.}
}
\date{}


\newtheorem{Theorem}{Theorem}[section]
\newtheorem{Lemma}[Theorem]{Lemma}

\newtheorem{Observation}[Theorem]{Observation}

\newtheorem{Claim}[Theorem]{Claim}

\newtheorem{Corollary}[Theorem]{Corollary}

\newtheorem*{Lemma*}{Lemma}

\begin{document}

\maketitle 

\begin{abstract}
We study the online stochastic matching problem. Against the offline benchmark, Feldman, Gravin, and Lucier (SODA 2015) designed an optimal $0.5$-competitive algorithm. A recent line of work, initiated by Papadimitriou, Pollner, Saberi, and Wajc (MOR 2024), focuses on designing approximation algorithms against the online optimum. The online benchmark allows positive results surpassing the $0.5$ ratio.

In this work, adapting the order-competitive analysis by Ezra, Feldman, Gravin, and Tang (SODA 2023), we design a $0.5+\Omega(1)$ order-competitive algorithm against the online benchmark with unknown arrival order. Our algorithm is significantly different from existing ones, as the known arrival order is crucial to the previous approximation algorithms.  

\end{abstract}

\section{Introduction}
\label{sec:intro}


We study the ``Bernoulli'' online stochastic bipartite matching problem.
Let there be an edge-weighted graph $G=(U\cup V,E,w)$.
One side of the graph $U$, known as offline vertices, is given upfront. Another side $V$ contains online vertices that arrive sequentially.
At each step, a vertex $t \in V$ (an online vertex) arrives and is realized (with possibly non-zero edge weights $w_{it}$) independently with probability $p_t$, and with probability $1 - p_t$ the vertex comes with edge weights $0$. The algorithm then decides immediately whether and how to match $t$ if $t$ is realized. The matching decisions are irrevocable and the goal is to maximize the total weight of the matching.
Notice that the probabilities $\{p_t\}$ and weights $\{w_{it}\}$ are known upfront, while the actual realizations of vertices are only observed upon arrival.
\footnote{A more general version of the problem assumes independent random types of online vertices, where the type of vertex $t$ determines the edge weights $\{\wit\}_{i}$.}

The problem is extensively studied in the literature and the conventional wisdom is to use the offline optimum (a.k.a. prophet) as the benchmark. A tight $0.5$ competitive ratio is established by Feldman, Gravin, and Lucier~\cite{FGL-SODA15} and is further generalized to the setting of online stochastic matching on non-bipartite graphs~\cite{EFGT-MOR22}. Restricted to the unweighted and vertex-weighted version of the problem, the optimal competitive ratio is in between $0.69$ and $0.75$~\cite{TWW-STOC22,CHS-arxiv24}. 

Recently, Papadimitriou, Pollner, Saberi, and Wajc~\cite{PPSW-MOR24} revisited the online stochastic matching problem by switching the benchmark from the offline optimum to the online optimum.
They focused on the computational perspective of the problem, proving that the online optimum is PSPACE-hard to approximate within a factor of $(1-\Omega(1))$. On the positive side, the authors designed an efficient $0.51$-approximation algorithm against the online optimum. The ratio is improved by a series of subsequent works~\cite{SW-icalp21,BDL-EC22,NSW-arxiv23,BDPSW-arxiv24}, and the state-of-the-art approximation ratio is $0.678$ by Braverman, Derakhshan, Pollner, Saberi, and Wajc~\cite{BDPSW-arxiv24}. 

Another line of work by Ezra, Feldman, Gravin, and Tang~\cite{EFGT-SODA23} proposed the concept of order-competitive ratio that measures the value of knowing the arrival order of an online problem. Specifically, the framework also considers the online optimum benchmark but aims at designing online algorithms with unknown arrival order. They focused on the single-choice setting and designed optimal deterministic order-competitive algorithms. The results are improved through randomized algorithms~\cite{CST-arxiv24}, and are generalized to combinatorial settings~\cite{EG-wine23}.

The research agenda of designing online algorithms against the online optimum opens avenues for online stochastic optimization, and fits into the trend of ``beyond worst-case analysis''. Besides the obvious fact that the worst-case competitive ratio lower bounds are surpassed using the new benchmark, the framework has also witnessed new algorithmic ideas and useful technical tools in online optimization.

\subsection{Our Contributions}
In this work, we adapt the order-competitive analysis framework and study the online stochastic matching problem with \emph{unknown arrival order}, 


Our main result confirms that the $0.5$ barrier is beatable against the online optimum, even without the prior knowledge of the arrival order.

\begin{Theorem}
\label{thm:main}
There exists a poly-time $0.5+\Omega(1)$ order-competitive algorithm for the online stochastic matching problem.	
\end{Theorem}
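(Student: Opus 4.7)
The plan is to follow the LP-based framework common in online stochastic matching, but with an LP strong enough to bound the \emph{online} optimum even under an adversarial arrival order. First, I would formulate an ex-ante relaxation $\lpexante$ with variables $\xit$ representing the probability that offline vertex $i$ is matched to online vertex $t$. The plain ex-ante LP only upper-bounds the offline optimum, so rounding it by the classical FGL scheme yields only the $0.5$ barrier. I would therefore strengthen the LP by adding slack constraints expressing that no online algorithm can oversaturate a single step (yielding $\lpslack$), together with auxiliary couplings in the spirit of $\lpsecond$ and $\lpsubmod$ that encode dynamic-programming inequalities satisfied by $\philo$ but not by the offline prophet. When these extra constraints are tight they lower the LP value, so a $\tfrac{1}{2}$-approximation of $\lp$ already beats $\tfrac{1}{2}\,\philo$; when slack, they expose structural regularity that the algorithm can exploit.

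The algorithm would proceed in two layers. The base layer is a permutation-invariant FGL-style rounding of the LP: upon an arrival of realized $t$, sample a candidate offline vertex $i$ with probability $\xit/p_t$ and apply a safe-acceptance rule that depends only on the \emph{set}, not the order, of prior arrivals; by the standard argument, edge $(i,t)$ is matched with probability at least $\xit/2$, delivering value at least $\tfrac{1}{2}\lp$. The second layer, invoked with a small probability $\deltnew$, runs an exploration routine in the spirit of \clever and \smart that redirects matches toward edges the LP identifies as underused by the base layer, paying $O(\deltnew)$ in expected base value but gaining an $\Omega(1)\cdot\philo$ boost on average.

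The analysis splits according to the size of the slack in $\lpslack$. In the large-slack regime, $\lp - \philo \ge \Omega(\philo)$ and the base layer alone gives $\tfrac{1}{2}\lp \ge (0.5+\Omega(1))\philo$. In the small-slack regime, the strengthened LP is nearly tight against $\philo$, which forces a macroscopic fraction of offline vertices to have base-layer matching probability bounded away from both $0$ and $1$; the exploration layer then, with constant probability over its random bits, uncovers a previously unmatched high-weight edge, yielding the second-layer boost. Parameterizing the two regimes uniformly and optimizing $\deltnew$ combines them into a single $0.5+\Omega(1)$ bound; polynomial-time solvability of the strengthened LP, together with the fact that only $O(1)$ random bits drive the exploration, gives the poly-time guarantee.

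The main obstacle will be proving the second-layer gain without knowing the arrival order. In the known-order setting one tunes exploration using the remaining horizon, so as not to preempt a later and larger-weight arrival; without order, a targeted swap risks committing to $(i,t)$ when $\philo$ would have reserved $i$ for some unknown future $t'$. To handle this, the plan is to build the exploration from a swap-test primitive: when the base would match $(i,t)$, randomize between keeping that match and swapping to a candidate $j$ whose marginal LP value is higher, calibrating the swap probability so that, in expectation over the adversary's permutation, the two choices are comparable. Symmetry of the LP solution and the permutation invariance of the base layer are exactly what make this calibration well-defined without order information, and pushing this swap argument through the small-slack regime is where the bulk of the technical work will lie.
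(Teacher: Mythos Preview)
Your proposal has a structural gap that prevents it from going through, and it diverges from the paper's actual mechanism in several essential places.

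\textbf{The strengthened LP cannot be written down.} You propose to tighten $\lpexante$ with ``slack constraints'' and ``dynamic-programming inequalities satisfied by $\philo$.'' But the only such inequalities we know, namely $\yit^* \le p_t(1-\sum_{s<t}\yit[s]^*)$, depend on the arrival order, which is precisely the information the algorithm does not have. So the LP you would need in the large-slack branch (``$\lp-\philo\ge\Omega(\philo)$'') is not computable, and you cannot tell which regime you are in. The paper's $\lpslack$ is not a tightening of the ex-ante LP at all; it is an auxiliary LP that, given a fixed pruned solution $\tildxl$, measures the weighted $L^1$-distance on large edges between $\tildxl$ and any other near-optimal ex-ante solution. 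Its value is a property of the instance, independent of order, and is therefore checkable.

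\textbf{The free--deterministic decomposition is missing.} The paper's engine is Lemma~\ref{lma:frdt}: if no fixed-threshold baseline beats $0.5+\eps$, then (after pruning) every offline vertex's LP mass splits almost evenly into ``free'' edges (tiny total probability, $\approx\!0.5$ of the value) and ``deterministic'' edges. This decomposition is what defines the edge set $L$ that $\lpslack$ is built on, and it is what makes the two-stage algorithm in the small-slack branch possible. Your proposal has no analogue of this, so there is nothing to hang the case analysis on.

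\textbf{The small-slack algorithm is different in kind.} The paper does not use an exploration layer or swap tests. When $\lpslack$ is small, every near-optimal $y$ (hence $y^*$, if $\philo\ge 1-\epsphilo$) must agree with $\tildxl$ on free edges. The algorithm then runs in two explicit stages per offline vertex: greedily accept free-edge proposals until almost all free mass has passed (contributing $\approx 0.5$), then flip to an OCRS on deterministic edges. The key lemma (Lemma~\ref{lem:opt_second}) shows that $\philo$ must earn $\approx 0.5$ from second-stage deterministic edges, and a $\tfrac12$-competitive greedy (with the roles of online/offline swapped) recovers $\approx 0.25$ of that, yielding $\approx 0.625$. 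The assumption $\philo\ge 1-\epsphilo$ is discharged by randomizing this algorithm with the plain $0.5$-competitive baseline. Your ``calibrated swap over the adversary's permutation'' argument does not address the actual obstacle, which is that without the uniqueness forced by small $\lpslack$, committing $i$ to a deterministic edge can block a later free edge worth $\Theta(\lp_i)$.

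\textbf{The large-slack branch is also different.} When $\lpslack\ge\epsslack$, the paper does not conclude that $\lp$ is far from $\philo$; it concludes that there are two near-optimal ex-ante solutions that disagree substantially on $L$, and explicitly merges them (Lemmas~\ref{lma:largerelax} and~\ref{lma:largenorm}) into a single $z$ with $\stlb(z)\ge 0.5+\eps$, so that the order-unaware baseline $\algbase(z)$ already wins.
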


Our result is aligned with the traditional online algorithm literatures, as the main difficulty comes from the lack of information, i.e., the arrival order. It complements the existing approximation results that focus on the computational perspective of the online stochastic matching problem with known arrival order.

The exact ratio of our order-unaware algorithm, provided in Section~\ref{sec:ratio}, is significantly smaller than the state-of-the-art approximation ratio by Braverman et al.~\cite{BKPS-arxiv24}.
On the other hand, due to the order-unawareness of our algorithm, our result implies a $0.5+\Omega(1)$ approximation algorithm for \emph{stochastic} arrival orders. To the best of our knowledge, previous approximation algorithms only work for deterministic arrivals.

\paragraph{Techniques.}

Observe that the online optimum is defined with respect to a fixed arrival order of the online vertices. All previous approximation algorithms rely on the same LP relaxation which utilizes the knowledge of the arrival order, hence, are not applicable in our setting.

To beat the $0.5$ barrier, we first examine the performance of the following class of algorithms: upon the arrival of an online vertex $t$, the online vertex proposes to offline vertex $i$ with probability $x_{it}/p_t$. Then, from the perspective of each offline vertex $i$, this can be viewed as a single-choice prophet inequality problem. We then examine whether the standard fixed-threshold algorithm, which is known to be $0.5$ competitive against offline optimum, already achieves a $0.5 + \Omega(1)$ competitive ratio. 
Furthermore, it is without loss of generality to assume that the online optimum is sufficiently close to the offline optimum. Indeed, if we can design a $0.5+\Omega(1)$ order-competitive algorithm under the assumption, a randomization between the classic $0.5$ competitive algorithm and the new algorithm would then guarantee a $0.5+\Omega(1)$ order-competitive algorithm for every instance.
Our approach heavily exploits the above two observations. 

\begin{itemize}
    \item We first introduce a family of non-adaptive algorithms whose behaviors are independent of the arrival order. These are natural generalizations of fixed-threshold algorithms in the single-choice setting. We prove that if none of the non-adaptive algorithms can guarantee a $0.5+\Omega(1)$ competitive ratio, the instance must be shaped in a so-called ``free-deterministic'' structure. This structure is motivated from the classic hard instance for single-choice setting (a.k.a., the prophet inequality): The first value $1$ arrives with probability $1$, while the second value $1/\epsilon$ arrives with probability $\epsilon$. As this instance suggests that no algorithm for single-choice setting can be $0.5 + \Omega(1)$ competitive, we provide a reversed lemma for the hard instance: For a single-choice instance, if no fixed-threshold algorithm can be $0.5 + \Omega(1)$ competitive, the instance can be decomposed into two parts: the free part and the deterministic part. Both parts contribute half of the expected reward while the total arrival probability of the free part is sufficiently close to $0$. 
    
    By further applying the ``free-deterministic'' structure to the matching setting, we may partition the edges into two parts: the free edges and the deterministic edges. Both parts contribute half of the expected reward, while the matching probability of the free edges is  close to $0$. The formal statements are provided in Section~\ref{sec:decomp}. We believe the lemmas shall find further applications in beating the worst-case competitive ratio against the online optimum. E.g., for designing $0.5+\Omega(1)$ order-competitive algorithm (or even approximation algorithm) for matroid online Bayesian selection~\cite{DP-sagt24}.
    

    
    \item Furthermore, we show that each online vertex has at most one free edge. En route, we introduce a novel linear program that we name as the slackness LP. Informally, the LP relaxation examines if there are two approximately optimal, but significantly different offline matching solutions. If so, we prove that there must exist a $0.5+\Omega(1)$ competitive algorithm against the offline optimum.

    A technical highlight is that approximating the diameter (e.g., L1-norm) of a polytope is generally computationally hard. Our relaxation relies on the ``free-deterministic'' decomposition from the first step.
    The formal statement is provided in Section~\ref{sec:large_slackness}.
    
    \item Finally, we focus on those restricted instances that satisfy the properties as derived above. We provide an $\nicefrac{5}{8}-o(1)$ order-competitive algorithm when the online optimum is $o(1)$-close to the offline optimum.
    
    Our new order-unaware algorithm puts every offline vertice into two stages. At the first stage of an offline vertex, it keeps collecting the corresponding free edges until it collects (almost) all these free rewards, at which moment the offline vertex switches to the second stage and commits to a (fractional) matching of the deterministic edges immediately. 
    
    According to the property of the free edges, our algorithm collects in expectation close to half of the offline matching from the first stage, while each offline vertex is only matched with probability close to $0$ at the end of its first stage.
    We then utilize the assumption that the online optimum approximately equals to the offline optimum and deduce that the optimal order-aware algorithm must make the same matching decisions as ours at the first stage. As a consequence, half of the expected matching produced by the optimal order-aware algorithm is from the second stage deterministic edges. The corresponding second-stage matching problem corresponds to an online edge-weighted matching setting, with the role of online and offline vertices being swapped. 

    A warm-up version of our order-unaware algorithm is given in Section~\ref{sec:warmup} with extra assumptions on the input.
    The formal algorithm and analysis are provided in Section~\ref{sec:small_slackness}.
    An important remark is that even under the assumption that the online optimum equals the offline optimum, no order-unaware algorithm can achieve an order-competitive ratio better than $1-\omega(1)$ (see Observation~\ref{obs:negative}). This is in contrast to the single-choice Bayesian selection problem according to the result of Chen, Sun, and Tang~\cite{CST-arxiv24}. We interpret this observation as an extra difficulty of online matching, in comparison to the online single-choice setting, complementing the computational hardness result of Papadimitriou et al.~\cite{PPSW-MOR24}.
\end{itemize}

\subsection{Further Related Works}

Besides the above mentioned papers, there are more recent works on studying online optimum. 
In the single-choice setting, Niazadeh, Saberi, and Shameli~\cite{NSS-wine18} studied fixed-threshold algorithms; Ezra, Feldman, and Tang~\cite{EFT-arxiv24} studied anonymous (blind-to-identity) algorithms.
As a generalization of online stochastic matching, Braun, Kesselheim, Pollner, and Saberi~\cite{BKPS-arxiv24} designed a $0.5+\Omega(1)$ approximation algorithm for online capacitated resource allocation, where each online vertex can be matched to multiple offline vertices. 

The $0.5$ impossibility result of online stochastic matching against the offline optimum is inherited from the single-choice setting, a.k.a., the prophet inequality~\cite{krengel1977semiamarts, krengel1978,samuel1984}. 
Besides comparing against the online optimum, researches have studied multiple variants of the single-choice setting for beyond worst-case results, including i.i.d. arrivals~\cite{10.1214/aop/1176993861,DBLP:conf/stoc/AbolhassaniEEHK17,DBLP:journals/mor/CorreaFHOV21, DBLP:journals/corr/abs-2408-07616}, random arrivals~\cite{DBLP:journals/siamdm/EsfandiariHLM17,DBLP:journals/mor/CorreaFHOV21,DBLP:conf/soda/EhsaniHKS18,DBLP:conf/sigecom/AzarCK18,DBLP:journals/mp/CorreaSZ21,H-arxiv23}, order selection settings~\cite{DBLP:conf/stoc/ChawlaHMS10, DBLP:journals/ior/BeyhaghiGLPS21,DBLP:conf/sigecom/0001SZ20, PT-focs22, BC-ec23}, and with cancellation costs~\cite{DBLP:conf/stoc/EkbataniNNV24}. Beyond single-choice setting, the prophet inequality model has been extensively studied for multiple classes of combinatorial value functions, including submodular/XOS functions \cite{FGL-SODA15, DFKL-FOCS17} and subaddtive functions \cite{DBLP:conf/focs/DuttingKL20, DBLP:conf/stoc/CorreaC23}. All these function classes include the matching setting (unit-demand function) as a special case. 

The study of online bipartite matching dates back to the seminal work of Karp, Vazirani, and Vazirani~\cite{KVV-STOC90}. A number of variants have been studied in the literature. Below, we discuss the most related works. Refer to a recent survey by Huang, Tang, and Wajc~\cite{HTW-arxiv24} for a more detailed discussion.
A restricted version of our online stochastic matching is to assume i.i.d. arrivals. This problem was first introduced by Feldman, Mehta, Mirrokni, and Muthukrishnan~\cite{FMMM-FOCS09} for unweighted graphs. This line of research~\cite{JL-mor14,BGMS-nips21,HS-stoc21,HSY-stoc22,TWW-STOC22,MGS-mor12,Yan-soda24,QFZW-wine23} has been focused on beating the $1-\nicefrac{1}{e}$ barrier from the adversarial setting. 
An important component of our algorithm is the edge-weighted online bipartite matching with free disposal. The problem was introduced by Feldman et al.~\cite{FKMMP-wine09}, who designed an optimal $1-\nicefrac{1}{e}$ competitive fractional algorithm. The breakthrough result by Fahrbach, Huang, Tao, and Zadimoghaddam~\cite{FHTZ-jacm22} designed a randomized integral algorithm,  beating the trivial $0.5$ ratio for the first time. The ratio is further improved by \cite{GHHNYZ-focs21,BC-focs21}.

\section{Preliminaries}



An instance $I$ of our problem consists of an edge-weighted bipartite graph $G=(U\cup V,E,w)$, a probability vector $p \in [0,1]^V$, and an arrival order of the online vertices $V$.
We use $i \in U=[n]$ to denote an offline vertex and $t \in V=[T]$ to denote an online vertex. 

At each step $t$, our algorithm observes the identity of the arriving vertex $t$. The vertex is realized independently with probability $p_t$.  The algorithm decides immediately whether and how to match $t$ if it is realized. If an online vertex $t$ is matched to an offline vertex $i$, reward $w_{it} \geq 0$ is collected. Each offline vertex $i$ can be matched to at most one online vertex $t$.

We focus on designing order-unaware algorithms that only know the graph (i.e., edge weights $\{w_{it}\}_{i \in U, t \in V}$) and the probabilities $\{p_t\}_{t \in V}$ in advance, but not the arrival order. Our goal is to compete against the optimal order-aware online algorithm that also knows the arrival order in advance. We abuse $\philo$ to denote both the optimal online algorithm that knows the order and its expected performance.
More formally, the order-competitive ratio of an order-unaware algorithm with $\alg$ being its expected performance is defined as
\[
\min_{I} \frac{\alg(I)}{\philo(I)}~,
\]
In contrast, the traditional competitive analysis studies competitive ratio, that sets the expected optimal offline matching as the benchmark / denominator. 

\smallskip
\subsection{Ex-ante optimum and Online optimum}
Consider the ex-ante optimum as an upper bound of the offline optimum matching.
\begin{align*}
\label{eqn:lpexante}
\max_{x}: \quad & \sum_{i,t} \wit \cdot \xit \tag{$\lpexante$} \\
\text{subject to :} \quad & \sum_i \xit \le p_t & \forall t \in R \\
& \sum_t \xit \le 1	& \forall i \in L
\end{align*}
We use $x^*$ to denote the optimal solution of the ex-ante relaxation. For any $x$, let $\lp_i(x) \eqdef \sum_t \wit \cdot \xit$ and $\lp(x) \eqdef \sum_i \lp_i(x)$.
We use $\feasib$ to denote the set of feasible solutions of the ex-ante relaxation.

For any $x \in \feasib$, we define $q_i(x) \eqdef \sum_{t \in V} x_{it}$ and $q_t(x) \eqdef \sum_{i \in U} x_{it}$, i.e., notation $q_i(x)$ and $q_t(x)$ represent the total probability spent by an offline vertex $i$ and an online vertex $t$ respectively. A solution $x$ is feasible ($x \in \feasib$) iff $q_i(x) \leq 1$ and $q_t(x) \leq p_t$ are simultaneously satisfied for every $i \in U$ and $t \in V$.

We use $\yit^*$ to denote the probability that edge $(i,t)$ is matched by the optimal online algorithm $\philo$. Then $y^*$ naturally satisfies the constraints of the ex-ante optimum. Hence,
\[
\philo = \lp(y^*) \le \lp(x^*) = \lpexante
\]
Existing works~\cite{SW-icalp21,BDL-EC22,NSW-arxiv23,BDPSW-arxiv24} all make use of the following constraints on $y^*$:
\begin{equation}
\yit^* \le \left(1-\sum_{s<t} \yit[s]^* \right) \cdot p_t~, \quad \forall (i,t) \in E \label{eq:onlinerelax}
\end{equation}
since the realization of vertex $t$ is independent of previous matching decisions. We note that the arrival order of online vertices is not provided to our algorithm, and thus we cannot use \eqref{eq:onlinerelax} in advance. 
Nevertheless, we shall also use \eqref{eq:onlinerelax} to bound the performance of $\philo$ in Section~\ref{sec:small_slackness}.

\subsection{Baseline algorithms}
Feldman, Gravin, and Lucier~\cite{FGL-SODA15}, and Ezra, Feldman, Gravin, and Tang~\cite{EFGT-MOR22} designed two different $0.5$-competitive algorithms, even against the ex-ante optimum. The first algorithm is threshold-based using static prices, and the second is probability-based using techniques from online contention resolution schemes.

For our purpose, we consider the following baseline algorithm, that can be viewed as  a mixture of the existing two algorithms. Specifically, our baseline algorithm is probability-based for online vertices; and is threshold-based for offline vertices. 

Given an arbitrary $x \in \feasib$, let $\algbase(x)$ be the following baseline algorithm:
\begin{itemize}
\item Upon the arrival of an online vertex $t$, it non-adaptively proposes to one of its neighbors with probability $\nicefrac{\xit}{p_t}$. The first family of constraints guarantees the validity of this step, i.e., $\sum_{i} \nicefrac{\xit}{p_t} \le 1$.
\item From the perspective of each offline vertex $i$, it receives a proposal from $t$ with probability $\xit$ independently. This can be interpreted as a single-choice prophet problem and we consider a single-threshold algorithm that accepts the first proposed edge whose value is at least $\tau_i(x)$. Here, $\tau_i(x)$ is defined as the maximizer of the following quantity:
\[
\stlb_i(x) \eqdef \max_{\tau} \sum_{t} \prod_{s: \tau \le \wit[s] < \wit} (1-\xit[s]) \cdot \xit \cdot \wit~.
\]
\end{itemize}
The quantity $\stlb_i(x)$ is a lower bound of the expected weight we collect from vertex $i$. Indeed, it is the value when the edges (or their corresponding vertices) arrive in ascending order of their weights.
Moreover, the classic prophet inequality tells us that the above value is at least half of the ex-ante optimum, i.e., $\lp_i(x)$.
Let $\algbi(x)$ be the expected weight we collect from vertex $i$ using $\algbase(x)$. We summarize the above discussion and omit the folklore proof.
\begin{Lemma}
\label{lem:algbase}
$\algbi(x) \ge \stlb_i(x) \ge 0.5 \cdot \lp_i(x)~.$
\end{Lemma}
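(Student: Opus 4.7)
The plan is to prove the two inequalities separately. For $\algbi(x) \ge \stlb_i(x)$, the idea is to show that ascending-by-weight arrival is the worst case for the threshold-$\tau_i(x)$ subroutine run from vertex $i$'s perspective, so any actual arrival order can only do better. I would restrict attention to online vertices with $w_{it} \ge \tau_i(x)$ (the others never trigger a sale) and use an adjacent-swap exchange argument. For two adjacent arrivals $a, b$ with $a$ before $b$, writing $q_t := x_{it}$, $w_t := w_{it}$, and $P$ for the probability that no earlier arrival triggered a sale, the pair's contribution to the expected value equals $P \cdot [q_a w_a + (1-q_a) q_b w_b]$ before the swap and $P \cdot [q_b w_b + (1-q_b) q_a w_a]$ after; their difference is $P \cdot q_a q_b (w_b - w_a)$. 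Consequently, swapping to put the smaller-weight arrival in front (i.e.\ moving toward ascending order) weakly decreases the expected value, so bubble-sorting any order into ascending order only lowers the value. Therefore $\algbi(x) \ge \stlb_i(x)$.

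For $\stlb_i(x) \ge 0.5 \cdot \lp_i(x)$, the plan is to carry out a Samuel--Cahn-style ex-ante prophet inequality. Let $V_t := w_{it}$ with probability $q_t := x_{it}$ and $V_t := 0$ otherwise; let $\beta(\tau)$ denote the probability that the ascending-order threshold-$\tau$ subroutine sells, and let $S(\tau) := \sum_t \E[(V_t - \tau)^+] = \sum_t q_t (w_{it} - \tau)^+$. Decomposing $V_t = \tau + (V_t - \tau)$ conditional on a sale to $t$, and using independence of $V_t$ from the event ``no sale before $t$'', I would establish that for every $\tau$,
\begin{equation*}
\stlb_i(x) \;\ge\; \tau \cdot \beta(\tau) + \bigl(1 - \beta(\tau)\bigr) \cdot S(\tau).
\end{equation*}
The feasibility constraint $\sum_t x_{it} \le 1$ from $\lpexante$ then yields
\begin{equation*}
\tau + S(\tau) \;\ge\; \sum_t q_t \min(w_{it}, \tau) + \sum_t q_t (w_{it} - \tau)^+ \;=\; \sum_t q_t w_{it} \;=\; \lp_i(x),
\end{equation*}
and picking $\tau^*$ with $\beta(\tau^*) = \tfrac{1}{2}$ produces $\stlb_i(x) \ge \tfrac{1}{2} \bigl(\tau^* + S(\tau^*)\bigr) \ge \tfrac{1}{2}\lp_i(x)$.

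The one technical subtlety will be that $\beta(\cdot)$ is piecewise constant, so an exact $\tau^*$ with $\beta(\tau^*) = \tfrac{1}{2}$ need not exist. The standard remedy is to randomize between the two thresholds on either side of the jump straddling $\tfrac{1}{2}$, with weights chosen so that the expected selling probability equals $\tfrac{1}{2}$; since the lower bound is linear in the sell/no-sell dichotomy, the averaged bound still gives $\tfrac{1}{2} \lp_i(x)$, so the best deterministic threshold $\tau_i(x)$ does at least as well. I expect the most delicate conceptual point to be recognizing that the ex-ante feasibility constraint $\sum_t x_{it} \le 1$ is precisely what upgrades the classical prophet bound (against $\E[\max_t V_t]$) to one against the strictly larger $\lp_i(x)$.
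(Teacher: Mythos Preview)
The paper does not actually prove this lemma; it states that the argument is folklore and omits the proof. Your proposal supplies a correct, standard write-up of that folklore.

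A few remarks on the details. For the first inequality, your adjacent-swap exchange argument is exactly right: since items with $w_{it} < \tau_i(x)$ are inert for the threshold rule, one may restrict to the subsequence with $w_{it}\ge\tau_i(x)$ and observe that the probability of reaching any later item depends only on the set of predecessors, not their order; hence only the pair's own contribution changes under a swap, and your computation $P\,q_a q_b(w_b-w_a)$ is correct.

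For the second inequality, your Samuel--Cahn decomposition is correct, and you correctly identify that the ex-ante constraint $\sum_t x_{it}\le 1$ is what lets you compare against $\lp_i(x)$ rather than $\E[\max_t V_t]$. The randomization patch for the $\beta(\tau^*)=\tfrac12$ issue does go through, but it requires the extra observation that the two thresholds straddling the jump can be taken at the same point $\tau_0$ (one including and one excluding weight-$\tau_0$ items), so that in the mixed bound
\[
\lambda\tau_0(2\beta_--1)+(1-\lambda)\tau_0(2\beta_+-1)=\tau_0\bigl(2\cdot\tfrac12-1\bigr)=0,
\]
leaving exactly $\tfrac12\lp_i(x)$. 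A slicker route that avoids this entirely is to simply take $\tau=\tfrac12\lp_i(x)$: then $S(\tau)\ge \lp_i(x)-\tau\sum_t x_{it}\ge \lp_i(x)-\tau=\tau$, so $\tau\beta(\tau)+(1-\beta(\tau))S(\tau)\ge \tau\beta(\tau)+(1-\beta(\tau))\tau=\tau=\tfrac12\lp_i(x)$ for every value of $\beta(\tau)$.
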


A natural choice of the parameter is to use $x^*$, the optimal ex-ante solution. As a consequence of Lemma~\ref{lem:algbase}, it is at least $0.5$-competitive against the ex-ante optimum. 
\begin{Corollary}
$\algbase(x^*) \ge 0.5 \cdot \lpexante$.
\end{Corollary}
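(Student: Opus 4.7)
The plan is to derive the corollary directly from Lemma~\ref{lem:algbase} by summing the per-vertex guarantee over all offline vertices $i \in U$ and then instantiating the inequality at the ex-ante optimal solution $x^*$.

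More concretely, first I would note that by definition of the baseline algorithm, the total expected reward decomposes as a sum over offline vertices, i.e.,
\[
\algbase(x) \;=\; \sum_{i \in U} \algbi(x),
\]
since each edge contributes to at most one offline vertex's collected weight. Second, I would apply Lemma~\ref{lem:algbase} to each term, giving $\algbi(x) \ge 0.5 \cdot \lp_i(x)$, and sum to obtain $\algbase(x) \ge 0.5 \cdot \lp(x)$ for every $x \in \feasib$. Finally, I would specialize to $x = x^*$ and use the definition $\lpexante = \lp(x^*)$ to conclude $\algbase(x^*) \ge 0.5 \cdot \lpexante$.

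There is essentially no obstacle here: the corollary is a one-line consequence of the lemma once we observe that the per-vertex analysis in $\algbase$ is independent across $i$ (because the proposals to distinct offline vertices are disjoint events given the realization of each online vertex, and each offline vertex's threshold rule only concerns edges incident to it). The only sanity check worth flagging is that the baseline algorithm is well-defined at $x^*$, which holds because $x^* \in \feasib$ guarantees $\sum_i \xit^*/p_t \le 1$ for every online vertex $t$, so the proposal step in $\algbase(x^*)$ is a valid randomization.
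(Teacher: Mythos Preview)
Your proposal is correct and matches the paper's approach exactly: the paper states the corollary as an immediate consequence of Lemma~\ref{lem:algbase} (without even writing out a proof), and your argument of summing the per-vertex bound $\algbi(x^*) \ge 0.5 \cdot \lp_i(x^*)$ over $i \in U$ and using $\lp(x^*) = \lpexante$ is precisely the intended one-line derivation.
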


\section{Warm up: the deterministic and free vertices setting}
\label{sec:warmup}

As a warm up and to convey our main algorithmic idea, we focus on a special yet non-trivial family of instances with the following two extra assumptions:
\begin{enumerate}
\item Every online vertex is either \emph{deterministic} with $p_t = 1$ or \emph{free} with $p_t = 0^+ = o(1/(nT))$. We use $\dt$ to denote the set of deterministic vertices and $\fr$ to denote the set of free vertices. 
\item The graph is online-vertex-weighted, i.e., $\wit = w_t$ for every $(i,t) \in E$. Without loss of generality, we assume each online vertex has at least one neighbor, as otherwise we can simply remove the vertex from the instance. Furthermore, let $v_t \eqdef w_t \cdot p_t$ be the expected value of vertex $t$. We note that the magnitude of $v_t$ is roughly the same for deterministic and free vertices, i.e., we have $w_{t} \gg w_{t'}$ for a free vertex $t$ and a deterministic vertex $t'$. 
\end{enumerate}
This family of instances play an important role in the literature of prophet inequalities, i.e., when there is only one offline vertex. Indeed, both the optimal $\nicefrac{1}{2} = 0.5$ competitive ratio upper bound~\cite{krengel1977semiamarts, krengel1978,samuel1984} and the optimal $\nicefrac{1}{\varphi} \approx 0.618$ order-competitive ratio upper bound~\cite{EFGT-SODA23} for deterministic algorithms are established by this family of instances.

\paragraph{Disclaimer.}
We are not going to present any formal proofs within this section, although we have already made strong assumptions on the input instance.
Our plan is to sketch an order-unaware algorithm with order-competitive ratio strictly greater than $0.5$, and highlight the main difficulty of our analysis, so that the formal (and mathematically heavy) analysis provided in later sections become easier to digest for our readers. 
We shall make hand-waving arguments that are \emph{with loss of generality}.
Indeed, naively formalizing the analysis below can only lead to an improved order-competitive ratio of $0.5 + o(1)$. 
To get a constant improvement, we have to include significantly many technical details that might distract readers from our algorithmic idea.
To this end, we hide these details and make statements in an informal but hopefully intuitive way.

\paragraph{Ex-ante optimum.} 
Let $M^*$ be the maximum matching between $U$ and $\dt$. 
For each $i \in U$, we use $i^*$ to denote the neighbor of $i$ in $M^*$. Let $w_i \eqdef w_{i,i^*}$ be the weight of the edge $(i,i^*)$. If $i$ is unmatched in $M^*$, let $i^* = \perp$ and $w_i = 0$.

Furthermore, since vertices in $\fr$ are realized with negligible probability that can be taken for free, the expected offline maximum matching equals 
\[
\lpexante =\sum_{i \in U} w_i + \sum_{t \in \fr} v_t~.
\] 


\paragraph{Baseline algorithm.} 
We consider the following baseline algorithm that is captured by a function $x: \fr \to U$, where $x(t)$ is a neighbor of $t$. 
For each offline vertex $i \in U$, let $\fr_i(x) \eqdef \{t \in \fr: x(t) = i \}$ and $v_i(x) \eqdef \sum_{s \in \fr_i(x)} v_s$. Each offline vertex $i$ weighs the following two options:
\begin{itemize}
\item waits until the arrival of $i^* \in \dt$, and collects $w_i$;
\item greedily accepts any free neighbors from $\fr_i(x)$ if realized, and collects $v_i(x)$. Recall that each free vertex $s$ is realized with only negligible probability but contributes $v_s$ in expectation.
\end{itemize}
The better of the options provide an expected reward of $\max \left( w_i, v_i(x) \right)$.

This baseline algorithm is order-unaware and produces a matching with expected weight
\begin{equation}
\label{eqn:warmup_base}
\algbase(x) = \sum_{i \in U} \max \left( w_i, v_i(x) \right) \ge \frac{1}{2} \cdot \sum_{i \in U} \left( w_i + v_i(x) \right) = \frac{1}{2} \cdot \lpexante~,
\end{equation}
where the inequality achieves equality only when $w_i = v_i(x)$ for every $i$.

\paragraph{Online optimum.} As a consequence of equation~\eqref{eqn:warmup_base}, if the online optimum is strictly less than the offline optimum, the baseline algorithm then has an order-competitive ratio strictly greater than $0.5$. Thus, we can restrict ourselves to instances with the following assumption.
\begin{enumerate}\addtocounter{enumi}{2}
\item $\philo$ equals to $\lpexante~$  asymptotically, i.e., $\philo \to \lpexante$ when $p_t \to 0$ for all free vertices.
\end{enumerate}

\paragraph{Balancedness and Uniqueness.}
 Furthermore, equation \eqref{eqn:warmup_base} asserts that an arbitrary choice of the mapping $x$ gives an at least $0.5$ competitive ratio against the optimal ex-ante benchmark. 
Thus, if the baseline algorithm fails to achieve a competitive ratio strictly greater than $0.5$ (i.e., \eqref{eqn:warmup_base} becomes an equality), the instance must satisfy the following two conditions:
\begin{enumerate}\addtocounter{enumi}{3}
\item (Balancedness) The contribution of each offline vertex to the ex-ante optimum can then be separated to the deterministic part $w_i$ and the free part $v_i(x)$, and they must be balanced, i.e., $w_i = v_i(x)${, since otherwise $\max(w_i, v_i(x))$ is strictly greater than the average value $\nicefrac{1}{2} \cdot (w_i+v_i(x))$.} Consequently, we have that $\sum_{i \in U} v_i(x) = \nicefrac{1}{2} \cdot \lpexante$;
\item (Uniqueness) Each free vertex has a unique neighbor. Otherwise consider two mappings $x, x'$ whose only difference is that $x(t) \ne x'(t)$ for some $t$. {Then, the two solutions $x,x'$ cannot meet the balancedness assumption at the same time for vertices $x(t)$ and $x'(t)$.} As a consequence, it cannot happen that $\algbase(x)=\algbase(x')=\nicefrac{1}{2} \cdot \lpexante$. In other words, there exists a unique valid mapping $x$.
\end{enumerate}

\paragraph{Order-unaware algorithm.}
So far, our main character of the problem, the arrival order, has not been involved. 
We now design an order-unaware algorithm that produces a matching with expected weight $0.75 \cdot \lpexante$, with all the above 5 assumptions. 

\begin{tcolorbox}[title=Order-unaware algorithm under the 5 assumptions]
\label{alg:warmup}
Initialization:
\begin{enumerate}
\item Let $x(t)$ be the unique neighbor of $t \in \fr$ and define $x(t) = \perp$ for $t \in \dt$.
\item Let all offline vertices be in stage $1$.
\end{enumerate}
Upon the arrival of $t$:
\begin{enumerate}[Step 1:]
\item Match $t$ to $x(t)$ if $t$ is realized and $x(t)$ is unmatched.
\item If $t$ is the last free neighbor of vertex $i$, move vertex $i$ to stage $2$, set $t_i = t$, and update $x(s) = i$ for the largest weight deterministic unassigned neighbor $s$ of $i$, i.e.,
\[
s = \argmax_{s > t} \{ w_s: x(s) = \perp, s \in \dt_i \}~.
\]
\end{enumerate}
\end{tcolorbox}
Observe that, each offline vertex $i$ collects an expected reward of $v_i(x)$ in stage $1$ and then switchs to stage $2$. 
According to the balancedness assumption, this guarantees an expected reward of $\nicefrac{1}{2} \cdot \lpexante$ in total and we are left to show that the algorithm collects at least $\nicefrac{1}{4} \cdot \lpexante$ from stage $2$. This can be derived through the following two steps:
\begin{itemize}
\item According to the assumption 5 (the uniqueness property), if an offline vertex $i$ is matched to a deterministic vertex before free vertex $t$ that satisfies $x(t) = i$ arrives, the online algorithm fails to collect $v_t$ compared to the offline optimum. Since assumption 3 suggests that the online optimum equals to offline optimum, this case cannot happen. Therefore, our matching decisions in stage $1$ are optimal. That is, the optimal order-aware online algorithm also ignores all deterministic neighbors of $i$ when $i$ is in stage $1$. In other words, half of the online optimum comes from second-stage deterministic edges:
\[
E_{(2)} \eqdef \{ (i,t) \in \dt: t > t_i\}~.
\]
Consequently, we have that 
\begin{equation}
\label{eqn:warmup_second_opt}
\mwm\left(E_{(2)}\right) = \nicefrac{1}{2} \cdot \lpexante~,
\end{equation}
where $\mwm(F)$ is the maximum weight matching on the edge set $F$.
\item Finally, consider an online bipartite matching algorithm as the following:
\begin{itemize}
\item Let the underlying graph be $(U,V,E_{(2)})$, and each $t \in V$ is associated with a weight $v_t$.
\item Vertices $V$ are given upfront, while vertices $i \in U$ arrive in a sequence in ascending order of $t_i$.
\item Upon the arrival of $i$, its incident edges are revealed and we make matching decisions between $i$ and $V$.
\end{itemize}
Notice that in this problem, $U$ and $V$ switch the role of online and offline vertices as in the original problem; and this is a standard online (offline-vertex-weighted) bipartite matching problem.
The step 2 of our algorithm can then be interpreted as a straightforward greedy algorithm for this problem, that is $\nicefrac{1}{2}$-competitive with respect to the benchmark $\mwm(E_{(2)})$. Together with \eqref{eqn:warmup_second_opt}, we conclude the analysis.
\end{itemize}

\paragraph{Remark.}
The step 2 of our order-unaware algorithm can be substituted by a better vertex-weighted online bipartite matching algorithm, e.g., the Ranking algorithm~\cite{AGKM-SODA11}. This would result in a $1-\nicefrac{1}{2e}$ order-competitive algorithm under the same $5$ assumptions above. Nevertheless, any constant competitive algorithm here serves the purpose for beating the $0.5$ order-competitive ratio. Moreover, we construct an instance below showing that any order-unaware algorithm cannot achieve the optimal performance guarantee, even under the 5 assumptions above. 
This observation contrasts with the single-choice setting, where \cite{CST-arxiv24} established a $1$ order-competitive ratio under only assumption 3. 


\begin{Observation}
\label{obs:negative}
No order-unaware algorithm can achieve an order-competitive strictly better than $\frac{11}{12}$, even for instances satisfying the 5 assumptions above.
\end{Observation}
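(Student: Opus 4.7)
The plan is to exhibit a concrete instance $I^\star$ satisfying the five assumptions yet admitting no order-unaware algorithm whose expected value exceeds $\tfrac{11}{12}\philo(I^\star)$, via Yao's principle applied over a distribution of arrival orders.

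I would begin by describing the instance. The underlying graph has two offline vertices $i_1, i_2$, two free vertices $f_1, f_2$ with $f_j$ uniquely neighboring $i_j$ (weight $1/\epsilon$, realization probability $\epsilon$), and a small collection of deterministic vertices whose neighborhoods are asymmetric enough to create ambiguity for an order-unaware algorithm. Assumptions 1 and 2 hold by construction; assumptions 4 and 5 follow from the unique mapping $x$ with $w_{i_j}=v_{i_j}(x)=1$, where the deterministic vertices are weighted so that every maximum matching $M^\star$ between $U$ and $\dt$ contributes exactly $1$ per offline vertex.

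Next, I would define a distribution $\mu$ over arrival orders compatible with $I^\star$ such that (i) every $\sigma \in \mathrm{supp}(\mu)$ has $\philo(\sigma) \to \lpexante = 4 - O(\epsilon)$ (so assumption 3 holds), and (ii) there is a ``pivotal history'' consistent with two or more orders in the support while the $\philo$-optimal continuation differs across those orders. Fix any order-unaware algorithm $A$: since $A$'s action at the pivotal history is a fixed function of its observations and independent of which continuation actually occurs, $A$ cannot be $\philo$-optimal on all orders simultaneously. A case analysis comparing the three possible actions of $A$ at the pivotal history (match to $i_1$, match to $i_2$, or skip) against the continuations yields $\E_{\sigma \sim \mu}\left[A(\sigma)\right] \le \tfrac{11}{12}\cdot\lpexante \cdot (1+o(1))$, and Yao's principle produces a single $\sigma^\star$ with $A(\sigma^\star) \le \tfrac{11}{12}\cdot\philo(\sigma^\star)$.

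The main obstacle is designing $I^\star$ so that its natural symmetries do not allow a ``safe matching'' rule to recover $\philo$ on all orders in the support. A symmetric instance with two fully flexible deterministic vertices admits the rule ``match $d$ to the offline vertex whose corresponding free neighbor has already passed without realizing'', which achieves $\philo$ on every valid order; the construction must therefore break this symmetry (e.g., by including one-sided deterministic vertices, or by calibrating weights so that two distinct near-optimal matchings of the deterministic part exist). Equally delicate is verifying (i): introducing asymmetry tends to drag $\philo$ below $\lpexante$ for some orders, so the support of $\mu$ must be carefully restricted to orders in which $f_1,f_2$ arrive early enough and the deterministic vertices can still be matched to cover both offline vertices. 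The $\tfrac{11}{12}$ constant itself emerges from a finite optimization, and randomized hedging by $A$ is ruled out by passing to the expectation over its coins before applying Yao.
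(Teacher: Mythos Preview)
Your high-level strategy---apply Yao's principle over a distribution of arrival orders, isolate a pivotal history where the order-unaware algorithm must commit without knowing which continuation will follow, then case-analyse the possible actions---is the same as the paper's. But the proposal is a plan, not a proof: you never exhibit $I^\star$, and you explicitly flag the construction as ``the main obstacle'' and the verification of assumption~3 as ``equally delicate'' without resolving either. Since the entire content of the observation \emph{is} the explicit instance, this is precisely the part of the proof that is missing.

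Moreover, the two-offline-vertex skeleton you start from runs into exactly the difficulty you yourself diagnose: a symmetric two-vertex instance admits the safe rule ``match $d$ to whichever offline vertex's free neighbor has already passed,'' while breaking symmetry with one-sided deterministic vertices or uneven weights tends to pull $\philo$ strictly below $\lpexante$ on some orders, destroying assumption~3. The paper sidesteps this tension entirely by moving to \emph{three} offline vertices $1,2,3$, three free vertices $F_1,F_2,F_3$ (each uniquely adjacent to its index), and three unit-weight deterministic vertices $D_{12},D_{13},D_{23}$ forming a triangle on the offline side. Two equiprobable orders share the prefix $F_1,F_2,D_{12}$ and then diverge: $D_{13},F_3,D_{23}$ versus $D_{23},F_3,D_{13}$. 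Both orders satisfy assumption~3 with $\philo=\lpexante=6$. The pivotal decision is at $D_{12}$: matching it to vertex~$1$ forces a loss of one unit on the first order (either $D_{13}$ then blocks $F_3$, or $D_{13}$ is skipped), matching it to vertex~$2$ symmetrically loses one unit on the second order, and skipping it loses one unit on both. The best achievable average is $(6+5)/12=11/12$. The triangle geometry is the missing idea: it creates a commitment that is unrecoverable in exactly one of two indistinguishable futures while keeping the instance fully symmetric, so assumptions~3--5 hold without any of the asymmetry-versus-optimality tradeoff you were worried about.
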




\begin{proof}
Consider the example as shown in the figure below. There are three offline vertices, labeled as $1, 2, 3$. For the online vertices, there are $3$ free vertices $F_1, F_2, F_3$ and three deterministic vertices $D_{12}, D_{13}, D_{23}$. All vertices have the same vertex weight.

The arrival order of online vertices is stochastic. We consider the following two arrival orders, each with probability $0.5$:
\begin{itemize}
    \item Order 1: $F_1, F_2, D_{12}, D_{13}, F_3, D_{23}$.
    \item Order 2: $F_1, F_2, D_{12}, D_{23}, F_3, D_{13}$
\end{itemize}

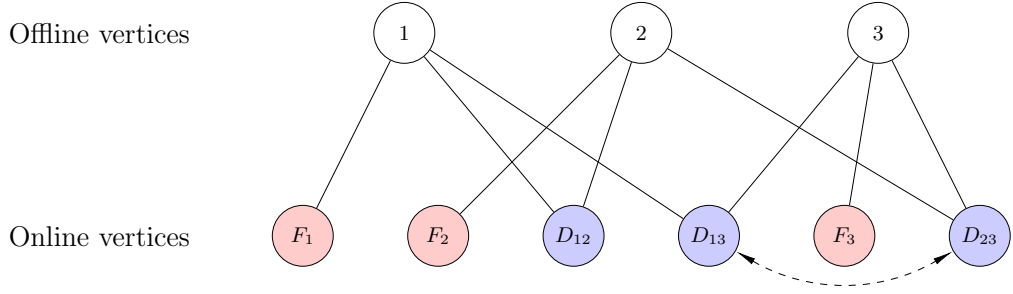
\begin{figure}[h]
    \centering
    
    \scalebox{0.9}{
\begin{tikzpicture}
    \footnotesize
    \tikzset{
        vertex/.style={circle, draw, minimum size=0.9cm, align = center},
        edge/.style={-}
    }

    \node[vertex] (A1) at (1.5,4) {$1$};
    \node[vertex] (A2) at (5,4) {$2$};
    \node[vertex] (A3) at (8.5,4) {$3$};

    \node[vertex, fill=red!20] (B1) at (0,1) {$F_1$};
    \node[vertex, fill=red!20] (B2) at (2,1) {$F_2$};
    \node[vertex, fill=red!20] (B3) at (8,1) {$F_3$};
    \node[vertex, fill=blue!20] (B4) at (4,1) {$D_{12}$};
    \node[vertex, fill=blue!20] (B5) at (6,1) {$D_{13}$};
    \node[vertex, fill=blue!20] (B6) at (10,1) {$D_{23}$};
    \node at (-3, 1) {\large{Online vertices}};
    \node at (-3, 4) {\large{Offline vertices}};
    \draw[edge] (A1) -- (B1);
    \draw[edge] (A2) -- (B2);
    \draw[edge] (A3) -- (B3);
    \draw[edge] (A1) -- (B4);
    \draw[edge] (A1) -- (B5);
    \draw[edge] (A2) -- (B4);
    \draw[edge] (A2) -- (B6);
    \draw[edge] (A3) -- (B5);
    \draw[edge] (A3) -- (B6);
    \draw[<->, dashed, bend right, >={Latex[length=3mm, width=1mm]}] (B5) to (B6);
\end{tikzpicture}
    }   
    \caption{Hard Instance}
    \label{fig:notenough}
\end{figure}

It is easy to verify that the input graph satisfies all 5 assumptions listed in this section. Now, we show that no order-unaware algorithm can achieve an order-competitive ratio strictly better than $\nicefrac{11}{12}$ against online optimum (note that the online optimum equals to offline optimum for this instance).

Consider all possible actions for the first three arriving online vertices $F_1, F_2, D_{12}$. The algorithm must take the value of $F_1$ and $F_2$, otherwise the optimal competitive ratio is at most $\nicefrac{5}{6}$. For $D_{12}$, there can be three possible actions:
\begin{itemize}
    \item Algorithm matches $D_{12}$ to offline vertex $1$. In this case, the algorithm loses at least $\frac{1}{6}$ of the online optimum when the arrival order is Order 1, because either it matches $D_{13}$ to offline vertex $3$ and misses $F_3$, or it misses $D_{13}$. Therefore, the competitive ratio is at most $\frac{11}{12}$.
    \item Algorithm matches $D_{12}$ to offline vertex $2$. In this case, the algorithm loses at least $\frac{1}{6}$ of the online optimum when the arrival order is Order 2. The proof is symmetric to the first case.
    \item Algorithm skips $D_{12}$. In this case, the competitive ratio is at most $\frac{5}{6}$. 
\end{itemize}
\end{proof}

\section{Free-deterministic Decomposition and Pruning}
\label{sec:decomp}

Starting from this section, we generalize the algorithm in Section~\ref{sec:warmup} by removing those extra assumptions we made in Section~\ref{sec:warmup}.
Those assumptions might look quite strong at the first glance. But recall that the baseline algorithm is at least $0.5$-competitive against the ex-ante optimum, we only need to worry about those instances when the baseline algorithm is no better than $(0.5+\eps)$-competitive. Intuitively, this would lead to quite restrictive instances. 

Our first step is to formalize the free-deterministic structure, that corresponds to assumptions 1 and 4 in  Section~\ref{sec:warmup}.
The following lemma studies single-choice prophet inequalities when no fixed-threshold algorithm can guarantee a competitive ratio strictly better than $0.5$.
We believe the lemma can find further applications in online stochastic optimization against the online optimum.


\begin{Lemma}
\label{lem:free_prob}
For every $\mu<0.5, \beta>0.5+\mu$, if $\stlb_i(x) < (0.5+\mu) \cdot \lp_i(x)$, where we recall
\[
\stlb_i(x) \eqdef \max_{\tau} \sum_{t} \prod_{s: \tau \le \wit[s] < \wit} (1-\xit[s]) \cdot \xit \cdot \wit~
\]
represents the lower bound of the expected weight we can collect from offline vertex $i$ via a fixed-threshold algorithm, we have 
\begin{flalign*}
1) & \sum_{t} \xit \cdot \one [w_{it} > \beta \cdot \lp_i(x)] \le \delta = \sqrt{\frac{4\mu}{\beta - 0.5 - \mu}}~; & \\
2) & \sum_{t} \xit \cdot \wit \cdot \one [w_{it} > \beta \cdot \lp_i(x)] \le \frac{0.5+\mu}{1-\delta}\cdot \lp_i(x)~. &
\end{flalign*}
\end{Lemma}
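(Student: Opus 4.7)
The plan is to apply the single-threshold algorithm at $\tau_0 \eqdef \beta \cdot \lp_i(x)$ and exploit $\stlb_i(x) < (0.5+\mu) \lp_i(x)$. Define $S \eqdef \{t : w_{it} > \tau_0\}$, $A \eqdef \sum_{t \in S} x_{it}$, $B \eqdef \sum_{t \in S} x_{it} w_{it}$, and $p_0 \eqdef \prod_{t \in S}(1-x_{it})$. I would first establish, for the value $V(\tau_0)$ of this fixed-threshold algorithm in the worst (ascending-weight) order, the identity
\[
V(\tau_0) \;\geq\; \tau_0 (1 - p_0) \;+\; p_0 \cdot (B - A \tau_0),
\]
by decomposing the reward into a base part $\tau_0(1-p_0)$ (each accepted edge contributes at least $\tau_0$, and something is accepted with probability $1-p_0$) and an excess part whose per-term factor $\prod_{s \in S: w_{is} < w_{it}}(1-x_{is})$ is uniformly lower bounded by $p_0$. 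Since $\stlb_i(x) \geq V(\tau_0)$, the hypothesis forces $V(\tau_0) < (0.5+\mu)\lp_i(x)$.

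For part (2), I would pass through the simpler corollary $V(\tau_0) \geq (1-A) B$, which comes from the Weierstrass estimate $\prod (1-x_{is}) \geq 1 - \sum x_{is}$. This directly gives $(1-A) B < (0.5+\mu) \lp_i(x)$; once part (1) provides $A \leq \delta$, rearranging yields $B < \tfrac{0.5+\mu}{1-\delta}\lp_i(x)$.

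For part (1), I would sharpen the identity using both $p_0 \geq 1 - A$ (Weierstrass) and $p_0 \leq 1 - A + A^2/2$ (two-term inclusion-exclusion), which together yield the stronger universal lower bound
\[
V(\tau_0) \;\geq\; \tfrac{1}{2} A^2 \tau_0 \;+\; (1-A) B.
\]
Combining this with $V(\tau_0) < (0.5+\mu) \lp_i(x)$ and the LP feasibility constraint $\sum_t x_{it} \leq 1$ (which, via $L = \lp_i(x) - B \leq (1-A) \tau_0$, bounds $B$ from below in terms of $\lp_i(x)$), I expect to derive the quadratic inequality $A^2 (\beta - 0.5 - \mu) \leq 4\mu$; taking the square root concludes $A \leq \delta$.

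The main obstacle will be extracting this \emph{quadratic} dependence rather than a merely linear bound. A naive substitution of $B \geq A \tau_0$ into the normalized inequality $\tfrac{1}{2} A^2 \beta + (B/\lp_i(x))(1-A) < 0.5+\mu$ collapses to the linear $A\beta(1-A/2) < 0.5+\mu$, losing the square-root factor. To recover it, I would first handle the regime $B \geq \tau_0(1+A)$ separately: plugging into the identity with $p_0 \geq 1-A$ already yields $V(\tau_0) \geq \tau_0 = \beta\lp_i(x) > (0.5+\mu)\lp_i(x)$, contradicting the hypothesis (this is where the condition $\beta > 0.5+\mu$ is used). In the complementary regime $B < \tau_0(1+A)$, where the coefficient of $p_0$ in the identity is negative, I would use the inclusion-exclusion \emph{upper} bound on $p_0$ jointly with the feasibility estimate on $L$ to propagate the $A^2$ factor through the calculation, producing the target quadratic.
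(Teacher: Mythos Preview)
Your plan for part~(2) is correct and in fact identical to the paper's argument: setting the threshold at $\tau_0=\beta\cdot\lp_i(x)$ and using $\prod_{s\in S}(1-x_{is})\ge 1-A$ immediately gives $(1-A)B<(0.5+\mu)\lp_i(x)$, so $B\le\frac{0.5+\mu}{1-\delta}\lp_i(x)$ once part~(1) is in hand.

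Part~(1), however, does not go through with threshold $\tau_0=\beta$ alone. Your derived lower bound
\[
V(\tau_0)\;\ge\;\tfrac{1}{2}A^2\tau_0+(1-A)B
\]
is correct (in Case~2 you have $V\ge(1-A)B+\tfrac{A^2}{2}[\tau_0(1-A)+B]$, and $B>A\tau_0$ lets you drop to $\tfrac{A^2}{2}\tau_0$). But when you substitute the feasibility estimate $B\ge 1-(1-A)\beta$ (normalizing $\lp_i=1$) into $\tfrac{1}{2}A^2\beta+(1-A)B<0.5+\mu$, the inequality collapses to
\[
A(2\beta-1)-\tfrac{A^2\beta}{2}\;<\;(\beta-0.5)+\mu,
\]
which for $\beta=1$ reads $A-\tfrac{A^2}{2}<0.5+\mu$, i.e.\ $(1-A)^2>-2\mu$. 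This is vacuous: it imposes no constraint on $A$ whatsoever, let alone the required $A=O(\sqrt{\mu})$. Concretely, the tuple $A=0.3$, $B=0.31$, $\beta=1$ satisfies all your inequalities for every $\mu\ge 0$, yet violates the conclusion for small $\mu$. The feasibility bound $L\le(1-A)\tau_0$ is simply too coarse when $\tau_0=\beta$, because it treats all sub-threshold weights as if they were as large as $\beta$.

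The paper fixes this by applying the fixed-threshold bound at $\tau=0.5+\mu$ rather than at $\beta$. At that lower threshold, the feasibility estimate sharpens to $\sum_{t\ge k}x_{it}w_{it}\ge 1-(0.5+\mu)(1-\sum_{t\ge k}x_{it})$, which is tight up to $O(\mu)$. Inside the analysis one then isolates the items with $w_{it}\ge\beta$ as contributing an \emph{excess} of at least $\beta-(0.5+\mu)$ per unit probability; this is exactly where the factor $(\beta-0.5-\mu)$ multiplying the second-order term $e^\delta-1-\delta\ge\delta^2/2$ emerges, yielding $2\mu\ge(\beta-0.5-\mu)\cdot\tfrac{\delta^2}{2}$. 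Your outline needs this second threshold; a single threshold at $\beta$ cannot see the gap between $\beta$ and $0.5+\mu$ that drives the quadratic dependence.
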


\begin{proof}
For notation simplicity, within the proof of this lemma, we normalize the value of $\lp_i(x)$ to be $1$ and enumerate the edges in ascending order of their edges, i.e., we reorder the online vertices so that $w_{i1} \le w_{i2} \le \cdots \le w_{in}$. 
Let $k,l$ be the indices that
\[
w_{i,k-1} < \left(0.5+\mu \right) \le w_{i,k} \quad \text{and} \quad w_{i,l-1} < \beta \le w_{il}
\]
Then $\sum_{t} \xit \cdot \one [w_{it} > \beta]  = \sum_{t=l}^{n} \xit$, and we denote this value by $\delta$.
Since $\lp_i(x)=1$, we have that
\begin{equation}
\label{eqn:sumequalone}
\sum_{t=k}^{n} \xit \cdot \wit =1-\sum_{t=1}^{k-1} \xit \cdot \wit \ge 1 - \left(0.5+\mu \right) \cdot \left( 1-\sum_{t=k}^{n} \xit \right),	
\end{equation}
where the inequality holds since $\wit \le 0.5+\mu$ for every $t \le k-1$ and $\sum_{i=1}^{n} \xit \le 1$.

For the induced single-choice prophet problem for vertex $i$, consider using the single-threshold algorithm with $(0.5+\mu)$.
According to the definition of $\stlb_i$, we have that
\begin{multline*}
(0.5+\mu) > \stlb_i(x)\ge  \sum_{t=k}^{n} \prod_{s=k}^{t-1} (1-x_{is}) \cdot \xit \cdot \wit \\
= \sum_{t=k}^{n} \left[\prod_{s=k}^{t-1} (1-x_{is}) - \prod_{s=k}^{n-1} (1-x_{is}) \right]\cdot \xit \cdot \wit + \prod_{s=k}^{n-1} (1-x_{is}) \cdot \sum_{t=k}^n \xit \cdot \wit \\
\ge \sum_{t=k}^{n} \left[\prod_{s=k}^{t-1} (1-x_{is}) - \prod_{s=k}^{n-1} (1-x_{is}) \right] \cdot \xit \cdot (0.5+\mu) + \sum_{t=l}^{n} \left[\prod_{s=k}^{t-1} (1-x_{is}) - \prod_{s=k}^{n-1} (1-x_{is}) \right] \cdot \xit \cdot (\beta - 0.5 - \mu)\\
+ \prod_{s=k}^{n-1} (1-x_{is}) \cdot \left[ 1 - \left(0.5+\mu\right) \cdot \left( 1-\sum_{t=k}^{n} \xit \right) \right] \\
= \sum_{t=k}^{n} \prod_{s=k}^{t-1} (1-x_{is}) \cdot \xit \cdot (0.5+\mu) + \sum_{t=l}^{n} \left[\prod_{s=k}^{t-1} (1-x_{is}) - \prod_{s=k}^{n-1} (1-x_{is}) \right] \cdot \xit \cdot (\beta -0.5 - \mu)\\
+ \prod_{s=k}^{n-1} (1-x_{is}) \cdot \left( 1 - \left(0.5+\mu\right) \right)~.
\end{multline*}
Here, the last inequality follows from the definitions of $k$ and $l$, and inequality \eqref{eqn:sumequalone}. Notice that for the first term of the right hand side, we have $\sum_{t=k}^{n} \prod_{s=k}^{t-1} (1-x_{is}) \cdot \xit = 1-\prod_{t=k}^n (1-\xit)$. Thus, the inequality simplifies as the following:
\begin{align*}
(0.5 + \mu) & > (0.5+\mu) \left(1 - \prod_{t=k}^{n} (1-\xit) \right) + \prod_{t=k}^{n-1} (1-\xit) \cdot \left( 0.5 - \mu \right) \\
& + \prod_{t=k}^{l-1}(1-\xit) \cdot \left( 1 - \prod_{t=l}^{n-1}(1-\xit) - \prod_{t=l}^{n-1} (1-\xit) \cdot \sum_{t=l}^{n} \xit \right) \cdot (\beta -0.5 - \mu).
\end{align*}
Dividing the two sides by $\prod_{t=k}^n (1-\xit)$, we have
\begin{align*}
0.5 + \mu & \ge \frac{0.5-\mu}{1-\xit[n]} +(\beta - 0.5 - \mu) \cdot \left[ \prod_{t=l}^{n}(1-\xit)^{-1} - 1 - \frac{\sum_{t=l}^n \xit}{1-\xit[n]} \right] \\
& \ge 0.5-\mu +(\beta - 0.5 - \mu) \cdot \left[ \prod_{t=l}^{n}(1-\xit)^{-1} - 1 - \sum_{t=l}^n \xit \right] \\
& \ge 0.5-\mu +(\beta - 0.5 - \mu) \cdot \left( e^\delta - 1 - \delta \right) \\
& \ge 0.5-\mu +(\beta - 0.5 - \mu) \cdot \frac{1}{2} \cdot \delta^2
\end{align*}
This concludes that $\delta \le \sqrt{\frac{4\mu}{\beta-0.5-\mu}}$.

Finally, consider using threshold $\tau = \beta \cdot \lp_i(x)$ for the induced single-choice prophet problem for vertex $i$. According to the definition of $\stlb_i$, we have that
\[
0.5 + \mu > \stlb_i(x) \ge \sum_{t=l}^{n} \prod_{s=l}^{n-1} (1-x_{is}) \cdot x_{it} \cdot \wit \ge \left(1-\sum_{t=l}^{n} \xit \right) \cdot \sum_{t=l}^{n} \xit \cdot \wit \ge (1-\delta) \cdot \sum_{t=l}^{n} \xit \cdot \wit.
\]
\end{proof}





Intuitively, \Cref{lem:free_prob} suggests that for any single-item Prophet Inequality instance where the optimal fixed-threshold algorithm achieves a $(0.5 + o(1))$-competitive ratio, we can decompose the instance into high-value and low-value parts, with each part contributing approximately $0.5$ of the LP value, while the high-value part consumes only an $o(1)$ fraction of the probability. We then generalize \Cref{lem:free_prob} to the Prophet matching setting, showing that  any feasible solution $a \in \feasib$ with $\stlb(a) < (0.5+\gamma) \lp(a)$ can be pruned so that the resulting instance consists of a ``free'' part that contributes almost half of the total weight while only consumes negligible matching probability of each offline vertex. 

\begin{Lemma}[Decomposition Lemma]
\label{lma:frdt}
For every $\gamma<10^{-4},\alpha \ge 1$, and $a \in \feasib$ such that $\stlb(a) \leq (0.5 + \gamma) \cdot \lp(a)$, there exists $\tilde{a}$ such that
\begin{itemize}
    \item $\forall (i,t) \in E, \tilde a_{it} \le a_{it}$~;
    \item $\lp(\tilde{a}) \ge (1-\gamma^{\nicefrac{1}{4}}) \cdot \lp(a)$.
\end{itemize}
Furthermore, for every $i \in U$, let $\tilde a^L_{it} = \tilde a_{it} \cdot \one [w_{it} > \alpha \cdot \lp_i(\tilde a)]$. We have 
\begin{itemize}
    \item $\sum_{t} \tilde{a}^{L}_{it} \le \gamma^{\nicefrac{1}{4}}$~;
    \item $(0.5 - (\alpha+2) \gamma^{\nicefrac{1}{4}}) \cdot \lp_i(\tilde{a}) \leq \lp_i(\tilde a^L) \leq (0.5 + \gamma^{\nicefrac{1}{4}} ) \cdot \lp_i(\tilde a)$~.
\end{itemize}
\end{Lemma}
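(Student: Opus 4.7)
The plan is to reduce the global hypothesis $\stlb(a)\le(0.5+\gamma)\lp(a)$ to a per-vertex version by a Markov-style pruning, then invoke Lemma~\ref{lem:free_prob} vertex by vertex. The upper bound $\lp_i(\tilde a^L)\le(0.5+\gamma^{1/4})\lp_i(\tilde a)$ and the probability bound $\sum_t \tilde a^L_{it}\le \gamma^{1/4}$ will fall out of Lemma~\ref{lem:free_prob} almost immediately, but the lower bound $\lp_i(\tilde a^L)\ge(0.5-(\alpha+2)\gamma^{1/4})\lp_i(\tilde a)$ is the main obstacle and requires a case analysis together with additional edge pruning.

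By Lemma~\ref{lem:algbase}, $\stlb_i(a)\ge 0.5\,\lp_i(a)$ for every $i$, so $\sum_i(\stlb_i(a)-0.5\,\lp_i(a))\le \gamma\,\lp(a)$. Markov's inequality on this nonnegative excess shows that the bad set $B:=\{i:\stlb_i(a)>(0.5+\gamma^{3/4})\lp_i(a)\}$ has LP mass $\sum_{i\in B}\lp_i(a)\le \gamma^{1/4}\lp(a)$. I would set $\tilde a_{it}=0$ for $i\in B$ (all conclusions are vacuous when $\lp_i(\tilde a)=0$) and keep $\tilde a_{it}=a_{it}$ for $i\notin B$; this already gives $\lp(\tilde a)\ge(1-\gamma^{1/4})\lp(a)$ up to the additional pruning described below. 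For each remaining $i$ the inequality $\stlb_i(\tilde a)\le(0.5+\gamma^{3/4})\lp_i(\tilde a)$ holds, and Lemma~\ref{lem:free_prob} with $\mu=\gamma^{3/4}$ and $\beta=\alpha$ yields $\sum_t \tilde a^L_{it}\le \delta:=\sqrt{4\gamma^{3/4}/(\alpha-0.5-\gamma^{3/4})}$ and $\lp_i(\tilde a^L)\le(0.5+\gamma^{3/4})\lp_i(\tilde a)/(1-\delta)$. A direct calculation using $\alpha\ge 1$ and $\gamma<10^{-4}$ shows $\delta\le \gamma^{1/4}$, which delivers both the probability bound and the upper bound on $\lp_i(\tilde a^L)$.

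For the lower bound I would case-split on a threshold $\tau^\star$ optimizing $\stlb_i(\tilde a)$. In the \emph{large-threshold} case $\tau^\star>\alpha\,\lp_i(\tilde a)$, every term of $\stlb_i(\tilde a)$ involves a large edge, so $\stlb_i(\tilde a)\le \lp_i(\tilde a^L)$; combined with $\stlb_i(\tilde a)\ge 0.5\,\lp_i(\tilde a)$ this gives $\lp_i(\tilde a^L)\ge 0.5\,\lp_i(\tilde a)$ with no loss. In the \emph{small-threshold} case $\tau^\star\le \alpha\,\lp_i(\tilde a)$, the bound can genuinely fail (small edges can collectively produce $\stlb_i\approx 0.5\,\lp_i$ while $\lp_i(\tilde a^L)$ stays small), so I would prune some small-weight edges of $i$ to drive $\lp_i(\tilde a)$ down toward $2\,\lp_i(\tilde a^L)$. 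As $\lp_i(\tilde a)$ decreases the threshold $\alpha\,\lp_i(\tilde a)$ shrinks, and formerly medium-weight edges can become large; an intermediate-value argument on the non-increasing step function $g(L):=\sum_{t:w_{it}>\alpha L}a_{it}w_{it}$ produces a level $L^\star$ at which $g(L^\star)\in[(0.5-O(\alpha\delta))L^\star,\,0.5\,L^\star]$, which accommodates the $(\alpha+2)\gamma^{1/4}$ slack in the lower bound. The single-edge jumps in $g$ have magnitude at most $\alpha\,\lp_i(\tilde a)\cdot\delta$ by Lemma~\ref{lem:free_prob} applied with $\beta$ slightly below $\alpha$, and are absorbed into the error.

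The last obligation is to verify $\lp(\tilde a)\ge(1-\gamma^{1/4})\lp(a)$ after this additional pruning. The Step~1 loss is at most $\gamma^{1/4}\lp(a)$ by construction; for the extra loss in Step~3, I would charge the pruned mass at each vertex $i$ against the slack in Lemma~\ref{lem:algbase} at that vertex, arguing that a vertex requiring substantial small-edge pruning must be far from the free-deterministic extremum and hence would have been removed in Step~1 unless $\lp_i(a)$ was already negligible. The place I expect the proof to be most delicate is making this charging quantitative in the presence of the discrete jumps in $g(L)$, since a single dominant medium-weight edge can force a sizable drop in $\lp_i(\tilde a)$; a careful ordering of the pruning together with the right choice of $L^\star$ should keep the total additional loss within $O(\gamma^{1/4})\lp(a)$.
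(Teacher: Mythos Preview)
Your Steps~1--2 (the Markov pruning to the set $U_0=\{i:\stlb_i(a)\le(0.5+\gamma^{3/4})\lp_i(a)\}$, followed by a first application of Lemma~\ref{lem:free_prob} with $\mu=\gamma^{3/4}$ and $\beta=\alpha$) are exactly the paper's argument and are correct.

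The lower bound, however, has a genuine gap. Your assertion that in the small-threshold case ``the bound can genuinely fail'' is wrong: for every $i\in U_0$ the lower bound $\lp_i(\tilde a^L)\ge(0.5-(\alpha+2)\gamma^{1/4})\lp_i(\tilde a)$ already holds for the $\tilde a$ produced in Step~1, with no further pruning. You missed the key observation that Lemma~\ref{lem:free_prob} can be invoked a \emph{second} time with a different parameter, namely $\beta_2=0.5+\gamma^{1/4}+\gamma^{3/4}$. Part~1) of that lemma then gives
\[
\sum_{t:\,w_{it}>\beta_2\lp_i(\tilde a)}\tilde a_{it}\;\le\;\sqrt{\tfrac{4\gamma^{3/4}}{\beta_2-0.5-\gamma^{3/4}}}\;=\;O(\gamma^{1/4}).
\]
Now split $\lp_i(\tilde a)$ into three bands: edges with $w_{it}\le\beta_2\lp_i(\tilde a)$ contribute at most $\beta_2\lp_i(\tilde a)$ because $\sum_t\tilde a_{it}\le 1$; edges with $w_{it}\in(\beta_2\lp_i(\tilde a),\,\alpha\lp_i(\tilde a)]$ contribute at most $\alpha\cdot O(\gamma^{1/4})\cdot\lp_i(\tilde a)$ by the bound above; and the remaining large edges contribute exactly $\lp_i(\tilde a^L)$. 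Rearranging yields $\lp_i(\tilde a^L)\ge(1-\alpha\cdot O(\gamma^{1/4})-\beta_2)\lp_i(\tilde a)\ge(0.5-(\alpha+2)\gamma^{1/4})\lp_i(\tilde a)$.

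Because no second pruning is needed, the ``delicate'' charging argument you flag in the last paragraph is moot; the $\tilde a$ from Step~1 already satisfies all four bullets, and $\lp(\tilde a)\ge(1-\gamma^{1/4})\lp(a)$ is exactly the Step~1 loss. The case split on $\tau^\star$ and the intermediate-value construction of $L^\star$ should be dropped.
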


\begin{proof}
Let $U_0 = \{ i: \stlb_i(a) < (0.5+\gamma^{\nicefrac{3}{4}}) \cdot \lp_i(a)\}$. 
We construct $\tilde{a}$ as the following by pruning $a$:
\[
\tilde{a}_{it} = 
\begin{cases}
a_{it} & i \in U_0 \\
0 & i \notin U_0	
\end{cases}
\]
We start with the first part of the statement. Straightforwardly, we have $\tilde a_{it} \le a_{it}$. Moreover,
\begin{align*}
(0.5 + \gamma) \cdot \lp(a) & > \stlb(a) = \sum_i \stlb_i(a) \ge 0.5 \cdot \sum_{i \in U_0} \lp_i(a) + (0.5 + \gamma^{\nicefrac{3}{4}}) \cdot \sum_{i \notin U_0} \lp_i(a) \\
& = (0.5 + \gamma^{\nicefrac{3}{4}}) \cdot \lp(a) - \gamma^{\nicefrac{3}{4}} \cdot \sum_{i \in U_0}\lp_i(a) = (0.5 + \gamma^{\nicefrac{3}{4}}) \cdot \lp(a) - \gamma^{\nicefrac{3}{4}} \cdot \lp(\tilde{a}),
\end{align*}
where the first inequality follows from the assumption of the lemma, and the second inequality follows from Lemma~\ref{lem:algbase} and the definition of $U_0$. Rearranging the inequality finishes the proof.

For the second part of the statement, it trivially holds for all $i \notin U_0$, since all $a_{it}=0$.

For $i \in U_0$, by applying Lemma~\ref{lem:free_prob} to $\tilde a$ with parameters $\mu =\gamma^{\nicefrac{3}{4}}$ and $\beta_1=\alpha$, we have that
\begin{flalign*}
& 1) \sum_t \tilde a_{it}^L = \sum_t \tilde a_{it} \cdot \one [\wit > \alpha \cdot \lp_i(\tilde a)] \le \sqrt{\frac{4\gamma^{\nicefrac{3}{4}}}{\alpha-0.5-\gamma^{\nicefrac{3}{4}}}} \le \sqrt{\frac{4\gamma^{\nicefrac{3}{4}}}{1-0.5-0.001}} < \gamma^{\nicefrac{1}{4}}~, & \\
& 2) \lp_i(\tilde a^L) = \sum_t \tilde a_{it}^L \cdot \wit \le \frac{0.5+\gamma^{\nicefrac{3}{4}}}{1-3\gamma^{\nicefrac{3}{8}}} \cdot \lp_i(\tilde a) \le 0.5 + \gamma^{1/4}~.
\end{flalign*}
Finally, by applying Lemma~\ref{lem:free_prob} to $\tilde a$ with parameters $\mu =\gamma^{\nicefrac{3}{4}}$ and $\beta_2=0.5+\gamma^{\nicefrac{1}{4}}+\gamma^{\nicefrac{3}{4}}$, we have that
\[
\sum_t \tilde a_{it} \cdot \one [w_{it} > \beta \cdot \lp_i(\tilde a)] \le \sqrt{\frac{4\gamma^{\nicefrac{3}{4}}}{\beta_2-0.5-\gamma^{\nicefrac{3}{4}}}} = \gamma^{\nicefrac{1}{4}}~.
\]
Consequently, we have
\begin{align*}
\lp_i(\tilde a) & = \sum_{t: w_{it}/\lp_i(\tilde a) > \alpha} \tilde a_{it} \cdot \wit + \sum_{t: w_{it}/\lp_i(\tilde a) \in (\beta_2,\alpha]} \tilde a_{it} \cdot \wit + \sum_{t: w_{it}/\lp_i(\tilde a) \le \beta_2} \tilde a_{it} \cdot \wit \\
& \le \lp_i(\tilde a^L) + \alpha \gamma^{\nicefrac{1}{4}} \cdot \lp_i(\tilde a) + \beta_2 \cdot \lp_i(\tilde a) 
\end{align*}
Rearranging the inequality gives that
\[
\lp_i(\tilde a^L) \ge (1-\alpha \cdot \gamma^{\nicefrac{1}{4}} - \beta_2) \cdot \lp_i(\tilde a) \ge (0.5-(\alpha+2) \gamma^{\nicefrac{1}{4}}) \cdot \lp(\tilde a)~. \qedhere 
\]
\end{proof}

\section{Main Algorithm and Proof Overview}

We now present our main algorithm for proving Theorem~\ref{thm:main}. Starting from this section, for the ease of notation, we normalize
\[
\lpexante ~=~ 1~.
\]

Our algorithm consists of three parameters $\eps, \epsphilo, \epsslack$, that are constants to be optimized.
Our target is to have a $(0.5+\eps)$ order-competitive ratio. The meaning of the other two parameters shall be clear soon.

\paragraph{Step 1:}
As the first step, we solve the ex-ante relaxation and check if the following inequality holds:
\begin{equation}
\label{eqn:checkbase}
\stlb(x^*) \ge 0.5+\eps~.
\end{equation}
Recall that $\lpexante$ is renormalized as $1$. Therefore, if \eqref{eqn:checkbase} is satisfied, the baseline algorithm $\algbase(x^*)$ is at least $(0.5+\eps)$-competitive and we simply run this algorithm. Observe that the quantity $\stlb(x^*)$ can be efficiently calculated, and hence, this procedure is efficient.

\paragraph{Step 2:}
From now on, we assume that
\begin{equation}
\label{eqn:basenot}
\stlb(x^*) < 0.5+\eps~.
\end{equation}

Applying Lemma~\ref{lma:frdt} to $x^*$ and $\alpha=2$, we calculate $\tilde x$ and $\tilde x^L$ with the following properties. 
\begin{itemize}
    \item $\lp(\tilde x) \ge (1-\eps^{1/4}) $~;
    \item $q_i(\tildxl) = \sum_t \tilde x_{it}^L \le \probfr$ for every $i \in U$, where we define $\probfr ~:=~ \eps^{1/4}$.
    \item $(0.5 - 4 \eps^{\nicefrac{1}{4}}) \cdot \lp_i(\tilde{x}) \leq \lp_i(\tilde x^L) \leq (0.5 + \eps^{\nicefrac{1}{4}} ) \cdot \lp_i(\tilde x)$, which implies $(0.5 - 4.5\eps^{1/4}) \leq \lp(\tildxl) \leq 0.5 + \eps^{1/4}$.
\end{itemize}
For the rest of the paper, we use $L$ to denote the edge set 
\begin{equation}
L \eqdef \{ (i,t) \in E: \wit \ge 2 \cdot \lp_i(\tilde x)\}~.
\end{equation}
This notation is consistent with $\tilde x^L$ as it can be interpreted as $\tilde x_{it}^L = \tilde x_{it} \cdot \one[(i,t) \in L]$.

Each edge in $L$ is referred to as the ``free'' edges of our instance, while each $(i, t) \in E \setminus L$ is considered a ``deterministic'' edge. These two concepts correspond to the ``free'' and ``deterministic'' vertices discussed in Section~\ref{sec:warmup}. As we extend the vertex-weighted setting from Section~\ref{sec:warmup} to the edge-weighted setting, the definitions of ``free'' and ``deterministic'' vertices are similarly generalized to ``free'' and ``deterministic'' edges.


\paragraph{Step 3:}
In step 1, we checked whether $\stlb(x^*)$ is large enough so that the baseline algorithm $\algbase(x^*)$ can achieve a competitive ratio strictly greater than $0.5$. 
A natural extension is to examine the following statement:
\begin{equation}
\label{eqn:check}	
\exists x, \quad \stlb(x) > 0.5+\eps~.
\end{equation}
If so, we could simply apply the baseline algorithm with the corresponding $x$; else, it should provide more information of the instance. 
Informally, the instance should satisfy an appropriate notion of uniqueness (assumptions 5 of Section~\ref{sec:warmup}) when \eqref{eqn:check} fails.
Unfortunately, it is computationally inefficient to verify \eqref{eqn:check}, since $\stlb(x)$ is generally non-concave in $x$. To resolve this issue, we introduce the following \emph{slackness LP}. 
\begin{align*}
\max_{y}: \quad & \sum_{(i,t) \in E} \wit \cdot \tilde x^L_{it} \cdot \left(1 - \frac{\yit}{p_t} \right) + \sum_{(i,t) \in L} \wit \cdot \yit \cdot \left(1 - \frac{\tilde x^L_{it}}{p_t} \right) \tag{$\lpslack$} \\
\text{subject to:} \quad & \sum_{(i,t) \in E} \wit \cdot \yit \ge 1-\epsphilo \notag \\
& \sum_{i} \yit \le p_t & \forall t \in R \notag \\
& \sum_{t} \yit \le 1 & \forall i \in L \notag
\end{align*}

Intuitively speaking, the slackness LP verifies if there is an $\epsphilo$-approximately optimal solution $y$ with a large distance from $\tilde x$ on larges edges $L$, where $\epsphilo$ is the second parameter of our algorithm. Note that
\[
\lpslack ~\geq~ \sum_{(i,t) \in L} \wit \cdot (\tildxl_{it} - \yit)^+ + \sum_{(i,t) \in L} \wit \cdot (\yit - \tildxl_{it})^+ ~=~ \sum_{(i,t) \in L} \wit \cdot |\tildxl_{it} - \yit|,
\]
where notation $r^+$ represents $\max\{r, 0\}$. Therefore, when $\lpslack$ is small, it implies either $\lp(y^*)$ is bounded by $1 - \epsphilo$, or the decisions made by $y^*$ and $\tildx$ on free edges are almost the same, where we recall $y^*$ is the corresponding solution of $\lpexante$ for the optimal online algorithm.

Motivated by the above observation, depending on the value of $\lpslack$ and using $\epsslack$ as the cutting point, we branches with two different algorithms. The exact use of the slackness LP shall be highlighted in the next two sections.

\paragraph{Modeling the assumptions in Section~\ref{sec:warmup}.} We have addressed all the five assumptions from Section~\ref{sec:warmup} through the slackness LP. Before continuing to present the branched algorithms, we provide a brief summary of how these assumptions are modeled in our main algorithm.
\begin{itemize}
    \item Assumption 1 (free/deterministic vertices), Assumption 2 (vertex-weighted), and Assumption 4 (balance between free and deterministic parts): These assumptions are addressed by our free-deterministic decomposition. When applying \Cref{lma:frdt} to the solution $x^*$, the free/deterministic vertices assumption is generalized to free/deterministic edges (edge sets $L$ and $E \setminus L$), and the vertex-weighted assumption is extended to edge weights. The balance between the free and deterministic parts is ensured by \Cref{lma:frdt}.
    \item Assumption 3 ($\philo = \lpexante$): We make use of this assumption by only considering solutions $y$ with $\lp(y) \geq 1 - \epsphilo$. If $\lp(y^*) < 1 - \epsphilo$, we will show later in \Cref{cor:small_slack} that running $\algbase(x^*)$ in this case is sufficient.
    \item Assumption 5 (uniqueness): We extend the uniqueness assumption by requiring that the solution $y^*$ is unique for free edges. This holds when $\lpslack$ is bounded by $\epsslack$, as the weighted difference between $\tildx$ and $y^*$ on free edges is limited by $\epsslack$ (see the discussion above). When $\lpslack > \epsslack$, we will demonstrate below in \Cref{thm:large_slack} that a separated algorithm can identify a solution $z$ such that $\stlb(z)$ is at least $0.5 + \eps$.
\end{itemize}


\medskip

Now, we continue to discuss our branched algorithm, which is based on whether $\lpslack \leq \epsslack$.

\paragraph{Step 3a, Large Slackness.}
 When $\lpslack$ is large (at least $\epsslack$), we show the existence of a good baseline algorithm, i.e., we prove \eqref{eqn:check}. To be specific, we give the following \Cref{thm:large_slack}.
\begin{restatable}{Theorem}{thmlargeslack}
\label{thm:large_slack}
If $\lpslack \ge \epsslack$,   there exists $z \in P$ with $\stlb(z) \ge 0.5+ \eps$ when $10^{-4} \geq \epsphilo \geq 3\eps$ and $\epsslack \ge 150(\eps^{1/4}+\epsphilo^{1/4})$ are both satisfied.

Furthermore, the solution $z$ can be computed efficiently.
\end{restatable}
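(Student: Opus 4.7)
The plan is to use the optimizer $y$ of $\lpslack$ (with $\lp(y) \geq 1-\epsphilo$ and objective value $\geq \epsslack$) as a witness of an alternative, near-optimal solution whose assignment on the free edges $L$ differs substantially from $\tildxl$, then combine $\tildx$ and $y$ into a single $z \in \feasib$ for which Lemma~\ref{lem:free_prob} forces $\stlb(z) \geq 0.5 + \eps$. My candidate is the convex combination $z := \nicefrac{1}{2}\tildx + \nicefrac{1}{2}y$, which is manifestly feasible and, by linearity of $\lp$, satisfies $\lp(z) \geq 1 - \nicefrac{1}{2}(\eps^{1/4} + \epsphilo)$.

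The bridge between the slackness LP value and the single-threshold reward is the algebraic identity
\[
\lpslack ~=~ \lp(\tildxl) + \lp(y|_L) - 2\sum_{(i,t) \in L} w_{it}\,\tildxl_{it}\,\yit / p_t,
\]
obtained by expanding the LP objective. Since the overlap term is nonnegative, $\lp(z|_L) = \nicefrac{1}{2}(\lp(\tildxl) + \lp(y|_L)) \geq \nicefrac{1}{2}\lpslack \geq \nicefrac{1}{2}\epsslack$, so $z$'s weighted mass on large edges exceeds $\nicefrac{1}{2}\epsslack$ even if neither $\tildxl$ nor $y|_L$ is separately large.

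The main step is the vertex-wise application of Lemma~\ref{lem:free_prob} in its contrapositive form. Choosing $\mu = \Theta(\eps)$ and $\beta \approx 2$ so that $\{t : w_{it} > \beta \lp_i(z)\}$ essentially coincides with $L_i$ (using that $\lp_i(z)$ is close to $\lp_i(\tildx)$ modulo the decomposition slack), the lemma says: whenever the per-vertex weighted large-edge mass exceeds $\frac{0.5+\mu}{1-\delta}\lp_i(z)$, one has $\stlb_i(z) \geq (0.5+\mu)\lp_i(z)$. Partitioning offline vertices into $I_{\mathrm{good}}$ (satisfying this contrapositive) and $I_{\mathrm{bad}}$, and combining with the trivial prophet inequality $\stlb_i(z) \geq \nicefrac{1}{2}\lp_i(z)$ on $I_{\mathrm{bad}}$, gives $\stlb(z) \geq \nicefrac{1}{2}\lp(z) + \mu\,\lp(I_{\mathrm{good}})$. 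The global bound $\lp(z|_L) \geq \nicefrac{1}{2}\epsslack$, combined with the decomposition bound $\sum_{t \in L_i} \tildxl_{it} \leq \probfr = \eps^{1/4}$ (which forces the ``extra'' free mass in $z$ to come from $y$ and thus push many vertices across the threshold), then yields the required lower bound on $\lp(I_{\mathrm{good}})$.

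The main obstacle is quantitative: one must verify that this chain of inequalities closes under the specific constants $\epsphilo \geq 3\eps$ and $\epsslack \geq 150(\eps^{1/4} + \epsphilo^{1/4})$ so that $\mu\,\lp(I_{\mathrm{good}})$ dominates both the $\nicefrac{1}{2}(\eps^{1/4}+\epsphilo)$ slack in $\lp(z)$ and the target excess $\eps$, ultimately giving $\stlb(z) \geq 0.5 + \eps$. The construction is linear in the LP witness $y$, so the efficient-computability claim is immediate once $\lpslack$ has been solved.
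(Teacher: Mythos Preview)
Your approach has a genuine gap: the simple convex combination $z = \tfrac{1}{2}\tildx + \tfrac{1}{2}y$ does \emph{not} push $\stlb(z)$ above $0.5$, even when the slackness is maximal. Consider two offline vertices $i_1,i_2$, two free online vertices $f_1,f_2$ (probability $p\to 0$, weight $1/p$, each adjacent to both $i_1,i_2$), and two deterministic vertices $d_1,d_2$ (weight $1$, $d_k$ adjacent only to $i_k$). Take $\tildx$ assigning $f_k$ to $i_k$, and $y$ swapping to $f_k \to i_{3-k}$; both have $\lp = 4$ and the slackness objective equals $4$. Yet $z$ gives each $i_k$ free mass $p$ with value $1$ and deterministic value $\approx 1$, so $\stlb_{i_k}(z)\approx 1 = 0.5\cdot\lp_{i_k}(z)$. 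No vertex enters your $I_{\mathrm{good}}$, and $\stlb(z)/\lpexante \to 0.5$ exactly.

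The underlying reason is that averaging \emph{preserves} the free/deterministic balance that makes $\stlb_i \approx 0.5\,\lp_i$: if both $\tildx$ and $y$ sit at the threshold of Lemma~\ref{lem:free_prob}, so does any convex combination. The slackness measures how differently $\tildx$ and $y$ distribute free mass \emph{across offline vertices sharing an online vertex}, and exploiting it requires \emph{unbalancing}---routing extra free mass to some $i$'s (so $\stlb_i \approx \lp_i$) while leaving others with their deterministic half. The paper does this via a randomized partition $U=U_1\cup U_2$ and a bilinear transfer rule (Lemma~\ref{lma:largerelax}), after first applying the decomposition lemma to $y$ as well. Your second hand-wave, that $\lp_i(z)\approx\lp_i(\tildx)$ so that $\{w_{it}>\beta\lp_i(z)\}$ coincides with $L_i$, is also unjustified: $\lp_i(y)$ can dwarf $\lp_i(\tildx)$ on a set of vertices carrying $\Omega(\epsslack)$ mass, which is precisely the second case the paper isolates and handles separately (Lemma~\ref{lma:largenorm}).
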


Let $y^o$ be the optimal solution of $\lpslack$. Intuitively, when $\lpslack$ is large, $\tildx$ and $y^o$ are both high-value solutions, but make different decisions on edge set $L$. We make use of this difference to properly merge solutions $\tildx$ and $y^o$ into $z$, and show that $\stlb(z)$ is sufficiently large. The detailed proof of \Cref{thm:large_slack} is provided in Section~\ref{sec:large_slackness}.

\paragraph{Step 3b, Small Slackness.}
The challenging case is when $\lpslack$ is small (at most $\epsslack$). Indeed, this is the only place where our algorithm adapts to the arrival order. And we are now in a similar situation as in the warm-up section. 
We generalize the algorithm of Section~\ref{sec:warmup} so that it works well when $\philo$ is sufficiently close to $\lpexante = 1$. The proof of the following theorem is provided in Section~\ref{sec:small_slackness}.

\begin{restatable}{Theorem}{thmsmallslack}
\label{thm:small_slack}
If $\lpslack < \epsslack$, there exists an order-unaware algorithm $\alg$ that achieves an expected reward of $0.625 - 1.5 \epsslack^{1/3} - \epsphilo - 6\eps^{1/4} - \epsslack$, as long as the optimal order-aware algorithm is $\epsphilo$-approximate to the ex-ante optimum, i.e., when $\philo \ge 1-\epsphilo$.
\end{restatable}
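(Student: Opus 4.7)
The plan is to generalize the warm-up algorithm from Section~\ref{sec:warmup} to edge-weighted instances, replacing the five extra assumptions by the structural consequences already derived: (i) the free/deterministic decomposition $\tildx = \tildxl + (\tildx - \tildxl)$ from Lemma~\ref{lma:frdt}, and (ii) the uniqueness-type guarantee on free edges supplied by $\lpslack < \epsslack$. Throughout, the two parameter regimes $q_i(\tildxl) \leq \eps^{1/4} = \probfr$ and $\lp_i(\tildxl) \in [\tfrac{1}{2} - 4\eps^{1/4}, \tfrac{1}{2} + \eps^{1/4}] \cdot \lp_i(\tildx)$ will be used repeatedly.

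First, I would design a two-stage algorithm: each offline vertex $i$ starts in \emph{stage 1} and transitions to \emph{stage 2} once its incident free edges have been exhausted. Concretely, upon the arrival of $t$, propose to offline vertex $i$ with probability $\tildxl_{it}/p_t$; vertex $i$ accepts while it is in stage 1 and still unmatched. As soon as the last free neighbor of $i$ has arrived, $i$ transitions to stage 2 and begins running the baseline procedure $\algbase(\tildx - \tildxl)$ on its deterministic edges. The stage-1 contribution is easy to analyze: because $q_i(\tildxl) \leq \eps^{1/4}$, the multiplicative blocking factors $\prod_s(1 - \tildxl_{is})$ stay within $1 - O(\eps^{1/4})$, and because every free edge has weight at least $2\lp_i(\tildx)$ one can use a single common threshold that accepts every free edge, so the stage 1 reward collected from $i$ is at least $(1 - O(\eps^{1/4}))\cdot \lp_i(\tildxl)$. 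Summing over $i$ yields close to $\lp(\tildxl) \ge \tfrac{1}{2} - O(\eps^{1/4})$.

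Second, I would exploit $\lpslack < \epsslack$ together with $\philo \ge 1 - \epsphilo$ to show that the order-aware optimum $y^*$ essentially agrees with $\tildxl$ on $L$. The $\lpslack$ objective dominates $\sum_{(i,t) \in L} w_{it} \cdot |\tildxl_{it} - y_{it}|$, so feasibility of $y^*$ in $\lpslack$ (which holds since $\philo \ge 1 - \epsphilo$ meets the first constraint with $\epsphilo \ge 3\eps$) forces this weighted $\ell_1$-distance to be at most $\epsslack$. It follows that the contribution of $\philo$ from free edges is within $O(\epsslack)$ of $\lp(\tildxl)$, and therefore the contribution of $\philo$ from deterministic edges is at least $1 - \lp(\tildxl) - \epsphilo - O(\epsslack) \approx \tfrac{1}{2} - O(\epsphilo + \eps^{1/4} + \epsslack)$.

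Third, I would bound the stage 2 reward by comparing against this deterministic-edge portion of $\philo$. Viewing stage 2 as an online matching subproblem in which the roles of offline and online vertices are swapped (as in Section~\ref{sec:warmup}), I would use Lemma~\ref{lem:algbase} applied to $\tildx - \tildxl$; since the residual capacity of each $i$ after stage 1 is $1 - q_i(\tildxl) = 1 - o(1)$, vertex $i$ collects at least roughly $\tfrac{1}{2}\lp_i(\tildx - \tildxl)$ in stage 2 conditional on being available. Combined with the bound \eqref{eq:onlinerelax} on $y^*$ used to relate $\philo$'s deterministic-edge reward to $\lp(\tildx - \tildxl)$, the stage 2 contribution recovers a constant fraction of $\philo$'s deterministic-edge value, and adding it to the stage 1 reward $\approx \tfrac{1}{2}$ gives the advertised $0.625 - O(\epsslack^{1/3} + \epsphilo + \eps^{1/4} + \epsslack)$ bound.

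The main obstacle is the coupling in the third step: the transition time $t_i$ at which $i$ leaves stage 1 is a stopping time driven by the random arrival order, so the stage 2 availability pattern of offline vertices is itself order-dependent and correlated with the stage 2 edges that arrive. To control this, I would use the near-uniqueness from small slackness to argue that $\philo$ must essentially mirror our stage 1 matching (up to an $O(\epsslack)$ loss), which ensures that the conditional distribution of stage 2 availability under our algorithm closely matches the one inherited from $\philo$'s deterministic-edge subproblem. Combining this coupling with \eqref{eq:onlinerelax} to turn $y^*$ into an admissible benchmark for our stage 2 baseline is the technically most delicate piece of the argument.
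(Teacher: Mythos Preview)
Your outline has the right two-stage architecture and you correctly identify that $\lpslack < \epsslack$ forces $y^*$ to agree with $\tildxl$ on $L$. However, the stage-2 analysis has a genuine gap. Running $\algbase(\tildx-\tildxl)$ after the transition does \emph{not} yield $\tfrac12\lp_i(\tildx-\tildxl)$: by the time vertex $i$ enters stage~2 at time $t_i$, every deterministic neighbour $t < t_i$ has already passed, so the baseline can only collect from edges $(i,t)\in E_2\setminus L$, and $\sum_{(i,t)\in E_2\setminus L}(\tildx_{it}-\tildxl_{it})\,w_{it}$ can be essentially zero---the order-oblivious $\tildx$ may place all its deterministic mass on edges that arrive before the transitions. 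The paper resolves this by \emph{dynamically rebuilding} the fractional solution (Step~4): when $i$ enters stage~2 it greedily steals probability from other vertices' (still-future) free edges and reassigns it to $i$'s future deterministic edges. This is then analyzed as an online free-disposal / submodular welfare maximization (Lemma~\ref{lma:submod}), where the greedy increment is at least half the increment achievable by the feasible solution $z_{it}=(1-\probfr)y^*_{it}$ on $E_2\setminus L$. That is what links your stage-2 reward to $\sum_{(i,t)\in E_2\setminus L} w_{it}\,y^*_{it}$; invoking Lemma~\ref{lem:algbase} on the fixed allocation $\tildx-\tildxl$ cannot make that connection.

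A second gap is the transition rule. Switching when the \emph{last} free neighbour arrives corresponds to $\epsalg=0$, and then the argument bounding $\sum_{(i,t)\in E_2\setminus L} w_{it}y^*_{it}$ from below collapses. The paper (Lemma~\ref{lem:opt_second}) needs that at the transition moment at least an $\epsalg$-fraction of $\lp_i(\tildxl)$ is still ahead, so that for any offline vertex $i$ which $\philo$ matches with probability $>\deltnew$ before $t_i$, the slackness contribution $\sum_{t\ge t_i} w_{it}\tildxl_{it}(1-y^*_{it}/p_t)\ge \deltnew\cdot\epsalg\cdot\lp_i(\tildxl)$ is nonzero; this is what lets $\lpslack<\epsslack$ bound the total weight of such vertices. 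Balancing the resulting $\sqrt{\epsslack/\epsalg}$ loss against the $\epsalg$ loss from early switching is exactly where the $\epsslack^{1/3}$ in the statement comes from (optimal at $\epsalg=\epsslack^{1/3}$). Without this tunable slack your coupling argument in the last paragraph cannot be made quantitative.
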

We defer the detailed proof of \Cref{thm:small_slack} to Section~\ref{sec:small_slackness}, and first show how to use \Cref{thm:small_slack} to give a $0.5 + \eps$ order-competitive algorithm. Recall that \Cref{thm:large_slack} assumes $\epsslack \geq \Omega(\eps^{1/4} + \epsphilo^{1/4})$. With this extra assumption, we give the following corollary:

\begin{Corollary}
\label{cor:small_slack}
If $\lpslack < \epsslack$, while we have $10^{-4} \geq \epsslack \geq 150(\eps^{1/4} + \epsphilo^{1/4})$, there exists an order-unaware algorithm with order-competitive ratio $0.5 + (\nicefrac{1}{16}-\epsslack^{\nicefrac{1}{3}}) \cdot \epsphilo$. 
\end{Corollary}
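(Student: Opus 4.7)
The plan is to combine the two algorithms already in hand according to the mixing strategy sketched in the introduction: with probability $\lambda := \epsphilo/2$ run the order-unaware algorithm $\alg_A$ supplied by Theorem~\ref{thm:small_slack}, and with probability $1-\lambda$ run the baseline $\algbase(x^*)$. Both ingredients are order-unaware, so the mixture is too, and neither requires knowledge of $\philo$. The analysis then case-splits on whether $\philo \ge 1-\epsphilo$.

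When $\philo \ge 1-\epsphilo$, the Corollary following Lemma~\ref{lem:algbase} gives $\E[\algbase(x^*)] \ge 0.5$ and Theorem~\ref{thm:small_slack} gives $\E[\alg_A] \ge 0.625 - \mathrm{err}$ with $\mathrm{err} := 1.5\epsslack^{1/3} + \epsphilo + 6\eps^{1/4} + \epsslack$, so the mixture earns at least $0.5 + \lambda(0.125 - \mathrm{err})$; since $\philo \le \lpexante = 1$, the same lower bound also lower-bounds the ratio. When $\philo < 1-\epsphilo$, I drop $\alg_A$'s (now possibly vacuous) guarantee and use only $\E[\alg_A] \ge 0$, obtaining a ratio of at least $0.5(1-\lambda)/(1-\epsphilo)$. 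Plugging $\lambda = \epsphilo/2$ into the first bound reduces the target $0.5 + (\nicefrac{1}{16}-\epsslack^{1/3})\epsphilo$ to $\mathrm{err} \le 2\epsslack^{1/3}$, i.e.\ $\epsphilo + 6\eps^{1/4} + \epsslack \le 0.5\epsslack^{1/3}$; under $\epsslack \ge 150(\eps^{1/4}+\epsphilo^{1/4})$ one has $6\eps^{1/4} \le \epsslack/25$ and $\epsphilo \le (\epsslack/150)^4$, so the left side is at most $\tfrac{26}{25}\epsslack$, and $\epsslack \le 10^{-4}$ closes the gap. Plugging $\lambda = \epsphilo/2$ into the second bound rearranges to $(1-\epsphilo)(\tfrac{1}{8}-2\epsslack^{1/3}) \le \tfrac{1}{2}$, which is trivial.

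The main obstacle is nothing more than calibrating $\lambda$: the choice $\lambda = \epsphilo/2$ is forced by balancing Subcase~A's lower requirement $\lambda \gtrsim (\nicefrac{1}{16}-\epsslack^{1/3})\epsphilo / \tfrac{1}{8} \approx \epsphilo/2$ against Subcase~B's upper requirement $\lambda \lesssim \tfrac{7}{8}\epsphilo$. The $-\epsslack^{1/3}$ offset inside the advertised coefficient is precisely the slack one needs in order to absorb the $O(\epsslack^{1/3})$ error from Theorem~\ref{thm:small_slack}, and the hypothesis $\epsslack \ge 150(\eps^{1/4}+\epsphilo^{1/4})$ is exactly what forces the other components of $\mathrm{err}$ ($\epsphilo$, $\eps^{1/4}$, and $\epsslack$ itself) to be lower order in $\epsslack^{1/3}$.
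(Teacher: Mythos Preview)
Your proposal is correct and follows essentially the same approach as the paper: randomize between the algorithm of Theorem~\ref{thm:small_slack} and the baseline $\algbase(x^*)$, then case-split on whether $\philo \ge 1-\epsphilo$. The only difference is the mixing probability---the paper chooses the balanced value $\delta_{\alg} = \epsphilo/(1+2c(1-\epsphilo))$ (where $c \ge 0.125 - 2\epsslack^{1/3}$), which yields the clean expression $\Gamma = 0.5 + c\epsphilo/(1+2c(1-\epsphilo))$, whereas you fix $\lambda = \epsphilo/2$ and verify the resulting numerical inequalities directly; both choices deliver the stated bound under the given parameter constraints.
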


\begin{proof}
Let 
\[
c =  0.625 - 1.5 \epsslack^{1/3} - \epsphilo - 6\eps^{1/4} - \epsslack - 0.5 ~\geq~ 0.125 - 2\epsslack^{1/3},
\]
where the second inequality holds when  $10^{-4} \geq \epsslack \geq 150(\eps^{1/4} + \epsphilo^{1/4})$ is satisfied. We run the algorithm $\alg$ from Theorem~\ref{thm:small_slack} with probability $\delta_{\alg}$, and $\algbase(x^*)$ with probability $1-\delta_{\alg}$. This is a randomized order-unaware algorithm. Note that
\begin{itemize}
\item If $\philo < 1-\epsphilo$, the expected reward of our algorithm is at least
\[
(1-\delta_{\alg}) \cdot 0.5  ~\ge~ 0.5 \cdot \frac{1-\delta_{\alg}}{1-\epsphilo} \cdot \philo~.
\]
\item If $\philo \ge 1-\epsphilo$, the expected reward of our algorithm is at least
\[
\delta_{\alg} \cdot (0.5 + c)  + (1-\delta_{\alg}) \cdot 0.5  = 0.5 + c \cdot \delta_{\alg} ~\geq~ (0.5 + c \cdot \delta_{\alg}) \cdot \philo
\]
\end{itemize}

By setting $\delta_{\alg} = \frac{\epsphilo}{1+2c(1-\epsphilo)} $, the randomized algorithm guarantees an order competitive ratio of 
\[
\Gamma = 0.5 + \frac{c \cdot \epsphilo}{1+2c \cdot (1-\epsphilo)} ~\geq~ 0.5 + \left (\frac{1}{16} - \epsslack^{1/3} \right) \cdot \epsphilo,
\]
where we use $c \geq 0.125 - 2\epsslack^{1/3}$ and $1 + 2c\cdot (1 - \epsphilo) \leq 1 + 2\cdot 0.125 < 2$ in the last inequality.
\end{proof}


\subsection{Proof of Theorem~\ref{thm:main}}
\label{sec:ratio}
Finally, we prove \Cref{thm:main} by combining \Cref{thm:large_slack} and \Cref{cor:small_slack}.

We set $\eps = 10^{-27}$, $\epsphilo = 256 \cdot 10^{-27}$, and $\epsslack = 10^{-6} \geq 150(\eps^{1/4} + \epsphilo^{1/4}) = 7.5 \cdot 10^{-7}$. Since the condition for \Cref{thm:large_slack} is satisfied, we have a feasible solution $z$ of $\lpexante$, such that $\algbase(z) \geq \stlb(z) \geq 0.5 + \eps$, which implies a $(0.5 + \eps)$ order-competitive algorithm when $\lpslack \geq \epsslack$. On the other hand, when $\lpslack < \epsslack$, \Cref{cor:small_slack} guarantees the order competitive ratio is at least $0.5 + (\frac{1}{16} - \epsslack^{1/3}) \cdot \epsphilo \geq 0.5 + \eps$. Therefore, we have a $(0.5 + \eps)$ order-competitive algorithm in both cases, which proves \Cref{thm:main}.

\paragraph{Why is $\eps$ so small.} We remark that the constants in our analysis are not fully optimized. We believe the value of $\eps$ can be improved to approximately $10^{-20}$ by providing a more careful analysis. However, from \Cref{thm:small_slack}, we require $O(\epsslack^{1/3}) \leq 0.125$ to achieve a competitive ratio greater than $0.5$, and \Cref{thm:large_slack} implies that $\epsslack \geq \Omega(\eps^{1/4})$. We remark that the exponent $1/4$ from our free-deterministic decomposition and the exponent $1/3$ from our algorithm for the small slackness case are both tight. Consequently, the maximum possible $\eps$ following the current analysis is bounded by $(0.125)^{12}$. Our ratio suffers an additional significant constant loss from \Cref{thm:large_slack}. 

\section{Small slackness}
\label{sec:small_slackness}

In this section, we study the case when the slackness of instance is small and prove \Cref{thm:small_slack}, which is restated below for convenience:

\thmsmallslack*

\subsection{Our Algorithm}

Our algorithm consists of two components: 1) an online algorithm that maintains a \emph{dynamic} fractional solution $\{\xit\}$ of the instance; 2) an online rounding algorithm applying to our fractional solution $\{\xit\}$.  The formal description of our small slackness algorithm is provided below. It is a generalization of the algorithm described in Section~\ref{sec:warmup}.
\vspace{0.5cm}

\begin{tcolorbox}[title=Algorithm when $\lpcore$ is small] 
\label{alg:smallslack}
Initialization:
\begin{enumerate}[(1)]
    \item Initiate $\xit = \tildxl_{it}$ for every $(i,t) \in E$. 
    \item Initiate $t_i = \infty$ for every $i \in U$. If $t \le t_i$, we say vertex $\ui$ is in the first stage, else we say $\ui$ is in the second stage. Initially, all offline vertices are in the first stage.
    \item For notation simplicity, we introduce a dummy offline vertex $0$ with $w_{0t} = 0$ for every $t \in V$, and let 
    $x_{0t} = p_t - \sum_{i} \xit~$.
    During the algorithm, condition $q_t(x) = p_t$ is always satisfied.
\end{enumerate}
\vspace{0.5cm}

Upon the arrival of $\vt$:
\begin{enumerate}[Step 1:]
    \item Let $\vt$ propose to $\ui$ with probability $\frac{\xit}{p_t}$ if $\vt$ is realized.
    \item If $\ui$ receives a proposal from $\vt$:
        \begin{itemize}
            \item if $t_i = \infty$, greedily select the edge $(i,t)$ if $\ui$ is unmatched at time $t$;
            \item Otherwise, select the edge $(i,t)$ if $\ui$ is unmatched at time $t$ with probability:
            \[
            \frac{1}{2 (1 - \nicefrac{1}{2} \cdot \sum_{t_i < s < t} \xit[s])}~.
            \]
        \end{itemize}
    \item \textbf{Update the stages of offline vertices:} \\
    For every $i$ in stage $1$ (i.e., $t_i = \infty$), set $t_i = t$ and claim $i$ is moved to stage $2$ iff 
    \begin{equation}
    \label{eqn:switch_stage}
    \sum_{s \le t} \wit \cdot \tildxl_{is} \ge (1-\epsalg) \cdot \sum_{s} \wit \cdot \tildxl_{is}~,	
    \end{equation}
    where $\epsalg$ is a parameter to be optimized.
    \end{enumerate}
\end{tcolorbox}
\clearpage

\begin{tcolorbox}[title=Algorithm when $\lpcore$ is small (cont'd)] 
\begin{enumerate}[Step 4:]
    \item \textbf{Update the weights for stage $2$ vertices:} \\
    For each $t_i = t$:
    \begin{enumerate}
        \item Sort edges $\{(j, s)\}$ that satisfy $(i, s) \notin L$, $x_{js} > 0$, and $s > t$ in descending order based on the value of $\widehat{w}_{is} - \widehat{w}_{js}$, where $\widehat{w}_{js} = 2 \cdot w_{js}$ for $(j,s) \in L$ and $\widehat{w}_{js} = w_{js}$ for $(j,s) \notin L$.
        \item For each edge $(j, s)$ in the above sorted list, let 
        \[
        \textstyle \Delta_{i,j,s} = \min\{1 - \probfr -  \sum_{s': (i, s') \notin L} x_{is'}, x_{js}\}~.
        \]. Update $x_{is} \gets \Delta_{i,j,s}$ and $x_{js} \gets x_{js} - \Delta_{i,j,s}$.
        \item Stop the above loop when  $\sum_{s': (i,s') \notin L} x_{is'}$ reaches $1 - \probfr$.
    \end{enumerate}
\end{enumerate}
\end{tcolorbox}

\subsection{Analysis of the Algorithm}

To begin, we introduce some extra notations that shall be useful for our analysis.
\begin{itemize}
    \item For every $(i,s) \in E$ and $t \in V$, we use $x^{(t)}_{is}$ to denote the value of $x_{is}$ when $t$ arrives. 
    Note the following properties for $x^{(t)}_{is}$:
    \begin{itemize}
        \item For $(i, s) \in L$, $x^{(t)}_{is}$ is non-increasing with respect to $t$, i.e., $x^{(t)}_{i,s} \leq \tildxl_{is}$ for every $t$. 
        \item For $(i, s) \notin L$, $x^{(t)}_{is} \leq x^{(t_i)}_{is}$. 
        \item The above two properties imply $\sum_{s \in V} x^{(t)}_{is} \leq \sum_{s: (i, s) \notin L} x^{(t_i)}_{is} + \probfr \leq 1$. Therefore, in Step 2 the matching probability is feasible since $ \frac{1}{2 (1 - \nicefrac{1}{2} \cdot \sum_{t_i < s < t} \xit[s])}$ is bounded by $1$.
    \end{itemize}
    \item We define $E_1 = \{(i,t) : t \le t_i \}$ and $E_2 = \{(i,t) : t > t_i \}$ as a partition of the set of edges, i.e., $(i, t) \in E_1$ implies that vertex $i$ is still in stage $1$; $(i, t) \in E_2$ implies that $i$ moves to stage $2$.
\end{itemize}

Next, we provide the step-by-step analysis of the algorithm.

\paragraph{Online rounding. (Step 1 and 2)}
Recall that in the warm up section (the deterministic and free vertices setting), there is no component of rounding as we abstract out the random realization of online vertices.
Our rounding algorithm is rather straightforward. Upon the arrival of an online vertex $t$, it proposes to at most one of its neighbors, with probability $\nicefrac{\xit}{p_t}$ each.
Then, each offline vertex accept proposals greedily in stage $1$, and runs an $\nicefrac{1}{2}$-selectable online contention resolution scheme (OCRS) in stage $2$. The following lemma characterizes the loss of our rounding algorithm.

\begin{restatable}{Lemma}{lmarounding}
\label{lma:rounding}
\[
\alg \ge (1-\probfr) \cdot \left( \sum_{(i,t) \in E_1} \wit \cdot x^{(t)}_{it} + \frac{1}{2} \cdot \sum_{(i,t) \in E_2} \wit \cdot x^{(t)}_{it} \right)
\]
\end{restatable}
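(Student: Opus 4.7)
My first move is to argue that the stage-transition times $\{t_i\}$ and the entire table $\{x^{(t)}_{is}\}$ are deterministic functions of the arrival order: Step 3's trigger depends only on prefix sums of the fixed quantities $\wit\,\tildxl_{is}$, and Step 4's reallocation depends only on the current state of $x$. Crucially, every Step 4 update fired at time $t$ touches only entries with time index strictly greater than $t$, so each entry $x_{is}$ is frozen once $s$ has been processed; equivalently, $x^{(s)}_{is}=x^{(t)}_{is}$ for every $t\ge s$. Given this, conditional on the arrival order, Step 1 proposes the edge $(i,t)$ with probability $p_t\cdot x^{(t)}_{it}/p_t = x^{(t)}_{it}$ independently across online vertices $t$, and the denominator of the thinning factor in Step 2 evaluated at time $t$ coincides with $\sum_{t_i<s<t}x^{(s)}_{is}$.

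Next, I would show that each offline vertex $i$ survives stage 1 with probability at least $1-\probfr$. During stage 1 of $i$ (time indices $s\le t_i$), only $L$-entries of row $i$ carry positive mass: the non-$L$ entries begin at $\tildxl_{is}=0$, and the only update that would raise them is the Step 4 triggered at time $t_i$ itself, which fires after stage 1 has ended; the $L$-entries are only ever decreased below $\tildxl_{is}$ by other vertices' Step 4 operations. A union bound over stage-1 arrivals then bounds the stage-1 matching probability of $i$ by $\sum_s \tildxl_{is}=q_i(\tildxl)\le \probfr$. Combined with the independence of the time-$t$ proposal from the past, greedy acceptance yields matching probability at least $(1-\probfr)\cdot x^{(t)}_{it}$ for every $(i,t)\in E_1$.

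Finally, I would run the textbook induction on stage-2 arrivals. Letting $U_i$ denote the event that $i$ survives stage 1, inducting on $t>t_i$ I would show that the matching probability at $(i,t)\in E_2$ equals $\tfrac12\, x^{(t)}_{it}\cdot \Pr{U_i}$: the inductive hypothesis implies that the probability that $i$ is still unmatched just before time $t$ equals $\Pr{U_i}\cdot\bigl(1-\tfrac12\sum_{t_i<s<t}x^{(s)}_{is}\bigr)$, which multiplied by the proposal probability $x^{(t)}_{it}$ and by the acceptance probability $\bigl[2\bigl(1-\tfrac12\sum_{t_i<s<t}x^{(s)}_{is}\bigr)\bigr]^{-1}$ telescopes exactly. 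I also need to verify that the thinning factor lies in $[0,1]$, which reduces to $\sum_{t_i<s<t}x^{(s)}_{is}\le 1$; this follows from Step 4(c)'s cap of $1-\probfr$ on the non-$L$ portion plus the $\probfr$-budget inherited from $\tildxl$ on the $L$ portion. Combined with the stage-1 survival bound, every $(i,t)\in E_2$ has matching probability at least $\tfrac12(1-\probfr)\cdot x^{(t)}_{it}$, and summing all per-edge bounds by linearity of expectation reproduces the claim. I anticipate that the main obstacle will be the bookkeeping for Step 4's multi-vertex update semantics across different offline vertices, which the freezing observation in the first paragraph tames by reducing each per-vertex analysis to a clean single-constraint OCRS instance.
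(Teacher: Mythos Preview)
Your proposal is correct and follows essentially the same approach as the paper: bound the stage-1 matching probability of each $i$ by $\probfr$ via a union bound over $\sum_{s\le t_i}\tildxl_{is}$, then run the standard OCRS induction on stage-2 arrivals to get matching probability exactly $\tfrac{1-r_i}{2}\,x^{(t)}_{it}$ per edge. You are slightly more explicit than the paper about the determinism of $\{t_i\}$ and $\{x^{(t)}_{is}\}$ and the freezing of $x_{is}$ once $s$ has passed, but these are exactly the observations the paper relies on tacitly (they appear in the bullet points just before the lemma rather than in the proof proper).
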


Intuitively, our algorithm collects almost entire  value of $w_{it} \cdot x^{(t)}_{it}$ when $i$ is in stage $1$, as $i$ is only matched to free edges. Therefore, the probability that $i$ got matched in stage $1$ should be bounded by $\probfr$. When $i$ moves to stage $2$, the OCRS framework guarantees that half of the value $w_{it} \cdot x^{(t)}_{it}$ is collected. We deferred the detailed proof of \Cref{lma:rounding} to \Cref{sec:lmarounding-proof}.

\paragraph{Fractional solution. (Step 3 and 4)}

We are left to analyze our fractional solution $x$. We shall prove that our fractional algorithm achieves a total reward of at least $0.625 - o(1)$.

Our analysis proceeds similarly as in Section~\ref{sec:warmup}. 
For each offline vertex, we keep collecting free edges until $1-\epsalg$ fraction of the free edges have arrived (refer to equation~\eqref{eqn:switch_stage}). According to the slackness check of our algorithm, since the slackness of the optimal online algorithm $y^*$ is bounded by $\epsslack$, intuitively, $y^*$ should have the same behavior as solution $\tildx$, i.e., the optimal algorithm also only collect free edges in stage $1$, implying that each offline vertex is still free with high probability when switching to stage $2$, and in stage $2$ the optimal algorithm should mainly collect reward from deterministic edges.  As a result, the optimal algorithm collects a sufficiently high reward stage $2$ even if we exclude those large-value edges in $L$. Formally, we give the following \Cref{lem:opt_second}:
\begin{Lemma}
\label{lem:opt_second}
    $\sum_{(i,t) \in E_2 \setminus L} \wit \cdot \yit^* \ge  0.5 - \epsphilo - \eps^{1/4} - \epsslack -4 \sqrt{\frac{\epsslack}{\epsalg}}$.
\end{Lemma}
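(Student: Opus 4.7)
My plan is to decompose the LP value of $y^*$ by both stage and edge type: write
\[
\lp(y^*) = A + B + C + D,
\]
where $A = \sum_{(i,t) \in E_1 \setminus L} w_{it} y^*_{it}$, $B = \sum_{(i,t) \in E_1 \cap L} w_{it} y^*_{it}$, $C = \sum_{(i,t) \in E_2 \setminus L} w_{it} y^*_{it}$, and $D = \sum_{(i,t) \in E_2 \cap L} w_{it} y^*_{it}$, so that $C$ is exactly the quantity to be lower bounded. The assumption $\philo \ge 1 - \epsphilo$ gives $A + B + C + D \ge 1 - \epsphilo$, and the slackness inequality $\lpslack \ge \sum_L w_{it}\,|\tildxl_{it} - y^*_{it}|$ noted in the setup, together with $\lpslack < \epsslack$ and the feasibility of $y^*$ for $\lpslack$, yields $B + D \le \lp(\tildxl) + \epsslack \le 0.5 + \eps^{1/4} + \epsslack$ by \Cref{lma:frdt}. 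Combining these, $A + C \ge 0.5 - \epsphilo - \eps^{1/4} - \epsslack$, so the lemma will follow once I establish the key quantitative bound $A \le 4\sqrt{\epsslack/\epsalg}$.

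Since every edge $(i,s) \notin L$ has $w_{is} < 2\,\lp_i(\tildx)$ by the definition of $L$, I will upper bound $A$ by $2 \sum_i \lp_i(\tildx) \cdot P_i^{1,\bar L}$, where $P_i^{1,\bar L} := \sum_{(i,s) \notin L,\, s \le t_i} y^*_{is}$ is the stage-1 matching probability of $y^*$ on $i$'s non-free edges. The per-vertex control on $P_i^{1,\bar L}$ will come from pitting the online feasibility constraint \eqref{eq:onlinerelax} of $y^*$ against the slackness on $L$. Specifically, for any $(i,s) \in L$ with $s > t_i$, $P_i^{<s} \ge P_i^{1,\bar L}$, so $y^*_{is} \le (1 - P_i^{1,\bar L}) p_s$. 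Summing weighted over $s > t_i$, and combining with the trivial $y^*_{is} \le p_s$ for $s \le t_i$, upper bounds $\sum_L w_{it} y^*_{it}$ by a quantity that decreases linearly in $P_i^{1,\bar L}$; contrasting with the slackness lower bound $\sum_L w_{it} y^*_{it} \ge \lp_i(\tildxl) - \Delta_i^L$ (where $\Delta_i^L := \sum_{(i,s) \in L} w_{is}\,|\tildxl_{is} - y^*_{is}|$, with $\sum_i \Delta_i^L \le \epsslack$) together with the stage-switching rule $\sum_{(i,s) \in L,\, s \le t_i} w_{is} \tildxl_{is} \ge (1-\epsalg)\,\lp_i(\tildxl)$, I will extract a per-vertex inequality of the form $P_i^{1,\bar L} \cdot \epsalg\,\lp_i(\tildxl) \lesssim \Delta_i^L$.

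Using $\lp_i(\tildxl) \ge (0.5 - O(\eps^{1/4}))\,\lp_i(\tildx)$ from \Cref{lma:frdt} and taking the minimum of this per-vertex bound with the trivial $P_i^{1,\bar L} \le 1$, I obtain $P_i^{1,\bar L} \lesssim \sqrt{\Delta_i^L / (\epsalg\,\lp_i(\tildx))}$ (since $\min(1,x) \le \sqrt{x}$). A Cauchy--Schwarz aggregation then gives
\[
\sum_i \lp_i(\tildx) \cdot P_i^{1,\bar L} \lesssim \sqrt{\left(\sum_i \lp_i(\tildx)\right) \cdot \left(\sum_i \Delta_i^L\right) / \epsalg} \le \sqrt{\epsslack / \epsalg},
\]
using $\sum_i \lp_i(\tildx) = \lp(\tildx) \le 1$. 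After absorbing absolute constants into the factor of $4$, this yields $A \le 4\sqrt{\epsslack/\epsalg}$ as required.

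The main obstacle is the \emph{overshoot} at the switching time: if a single high-weight edge at $t_i$ pushes the cumulative free value well past $(1-\epsalg)\lp_i(\tildxl)$, leaving negligible stage-2 free mass, then the stage-2 online-feasibility argument above gives a vacuous bound. For such vertices I will instead invoke online feasibility \emph{at the overshoot edge itself}: if $\tildxl_{i t_i}$ is close to $p_{t_i}$, then $y^*_{i t_i} \le (1 - P_i^{1,\bar L}) p_{t_i}$ forces a weighted $L$-discrepancy of at least $P_i^{1,\bar L} \cdot w_{i t_i} p_{t_i} \gtrsim P_i^{1,\bar L}\,\lp_i(\tildxl)$, giving an even stronger per-vertex bound $P_i^{1,\bar L} \lesssim \Delta_i^L / \lp_i(\tildxl)$. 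Unifying the two regimes---substantial stage-2 $L$ capacity vs.\ concentration of $\tildxl$'s mass at $t_i$---so that the same Cauchy--Schwarz aggregation applies globally is the most delicate piece of the argument, but in either regime the rate $\sqrt{\epsslack/\epsalg}$ is preserved.
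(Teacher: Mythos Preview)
Your decomposition $\lp(y^*)=A+B+C+D$ and the bound $B+D \le \lp(\tildxl)+\epsslack$ match the paper exactly. For bounding $A$ the paper takes a different route: it fixes a threshold $\deltnew = \sqrt{\epsslack/\epsalg}$, splits $U$ into $U_0=\{i:\sum_{s<t_i} y^*_{is}\le\deltnew\}$ and $\overline{U_0}$, bounds the $U_0$-part of $A$ by $2\deltnew$ directly, and uses the first slackness summand to show $\sum_{i\in\overline{U_0}}\lp_i(\tildxl)\le\epsslack/(\deltnew\epsalg)$. Your per-vertex-inequality-plus-Cauchy--Schwarz plan is a legitimate alternative and would even yield a slightly better constant, but the way you propose to derive the per-vertex inequality has a genuine gap.

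The problem is the step where you sandwich $\sum_{(i,s)\in L} w_{is}y^*_{is}$. Using $y^*_{is}\le p_s$ for $s\le t_i$ and $y^*_{is}\le(1-P_i^{1,\bar L})p_s$ for $s>t_i$ and comparing with the lower bound $\lp_i(\tildxl)-\Delta_i^L$ leaves you with a term $\sum_{(i,s)\in L} w_{is}(p_s-\tildxl_{is})$ on the right, and this is \emph{not} controlled by anything: $p_s$ may vastly exceed $\tildxl_{is}$, so the resulting inequality is vacuous. Nor can $\Delta_i^L$ itself serve as the per-vertex control; one can have $\tildxl_{is}=y^*_{is}\ll p_s$ on all $L$-edges, making $\Delta_i^L=0$ while $P_i^{1,\bar L}>0$. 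The fix is to drop the sandwich entirely and lower-bound the first $\lpslack$ summand directly: let $S_i$ be vertex $i$'s contribution to the $\lpslack$ objective at $y^*$ (so $\sum_i S_i<\epsslack$). Since $(i,t_i)\in L$ (the switching criterion \eqref{eqn:switch_stage} can only change at a step with $\tildxl_{it}>0$), one has $P_i^{1,\bar L}\le\sum_{s<t_i} y^*_{is}$, and \eqref{eq:onlinerelax} gives $1-y^*_{it}/p_t\ge P_i^{1,\bar L}$ for every $t\ge t_i$; hence
\[
S_i \;\ge\; \sum_{t\ge t_i} w_{it}\,\tildxl_{it}\,\bigl(1-y^*_{it}/p_t\bigr)\;\ge\; P_i^{1,\bar L}\sum_{t\ge t_i} w_{it}\,\tildxl_{it}\;>\; P_i^{1,\bar L}\cdot\epsalg\cdot\lp_i(\tildxl),
\]
the last step because the switching condition \emph{fails} at the vertex just before $t_i$. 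This also dissolves your overshoot worry: including $t=t_i$ in the sum guarantees the $\epsalg\,\lp_i(\tildxl)$ lower bound unconditionally, so no case split is needed. From here your $\min(1,x)\le\sqrt{x}$ and Cauchy--Schwarz steps go through cleanly with $S_i$ in place of $\Delta_i^L$.
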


\Cref{lem:opt_second} is where we uses the fact that the slackness of $y^*$ is bounded by $\epsslack$. We defer the proof of \Cref{lem:opt_second} to \Cref{sec:lemoptsecond-proof}.

Finally, we interpret the Step $4$ of our algorithm as an ``online weighted matching with free disposal'' algorithm for solving the below online problem. Define
\[
\widehat w_{it} = \begin{cases} 2\cdot w_{it} & (i,t) \in L  \\ w_{it} & (i,t) \in E_2 \setminus L \\ 0 & (i,t) \in E_1 \setminus L
\end{cases}~.
\]
Consider the following way of understand our algorithm:
\begin{itemize}
\item Initially, only the set of large edges $L$ is given at the beginning, each with weight $\widehat w_{it}$. Each vertex $\ui \in U$ has capacity $1$ and each vertex $\vt \in V$ has capacity $p_t$. The matching is initiated as $x = \tildxl$.
\item At time $t_i$ (in ascending order of $t_i$), the edge set $\{(i,t)\}_{t\in V} \cap (E_2 \setminus L$ is observed , each with weight $\widehat w_{it}$. The algorithm (fractionally) matches the newly observed immediately. Previously selected edges (including those larges edges provided at the beginning) can be freely disposed of, while they cannot be retrieved once we proceeds to the next time step.
\item The goal of Step $4$ is to maximize the total increment over the original matching, under an extra constraint that the total probability on the newly observed edges (i.e., $\sum_{t: (i,t) \in E_2 \setminus L} x^{(t)}_{it}$) is bounded by $1 - \probfr$. 
\end{itemize}

Note that if we proceed Step $4$ in an offline manner, the optimal increment of the above problem can be characterized as the following linear program.
\begin{align*}
\max: \quad & \sum_{(i,t) \in E} \widehat w_{it} \cdot (z_{it} - \tildxl_{it}) & ~ \\
\text{subject to}: \quad & \sum_{t:(i,t) \in E_2 \setminus L} z_{it}  \le 1 - \probfr & \forall \ui \in U \\
& \sum_{i:(i,t) \in E_2 \setminus L} z_{it} + \sum_{i:(i,t) \in L} z_{it} \le p_t & \forall \vt \in V \\
& z_{it} \le \tildxl_{it} & \forall (i,t) \in L
\end{align*}

As the Step $4$ in our algorithm is essentially a greedy algorithm that maximizes the marginal increment, the total increment gained by our algorithm should be at least $\frac{1}{2} \cdot \sum_{(i,t) \in E} \widehat w_{it} \cdot (\tildz_{it} - \tildxl_{it})$ for any feasible solution $\{z_{it}\}$. Note that
\[
z_{it} = \begin{cases} \min(\tildxl_{it}, \yit^*) & (i,t) \in L  \\ (1-\probfr) \cdot \yit^* & (i,t) \in E_2 \setminus L \\ 0 & (i,t) \in E_1 \setminus L
\end{cases}~
\]
is a feasible solution of the above LP. Then, the following lemma suggests that the total increment of our algorithm is at least one half of the increment given by the above solution:

\begin{restatable}{Lemma}{lmasubmod}
    \label{lma:submod}
   We have
    \[
    \sum_{(i,t) \in E} \widehat w_{it} \cdot x^{(t)}_{it} - \sum_{(i,t) \in L} \widehat w_{it} \cdot \tildxl_{it}  ~\geq~ \frac{1}{2} \cdot \left((1 - \probfr) \cdot \sum_{(i,t) \in E_2 \setminus L}  \widehat w_{it} \cdot y^*_{it} - \sum_{(i,t) \in L} \widehat w_{it} \cdot (\tildxl_{it} - y^*_{it})^+\right),
    \]
    where we use notation $r^+$ to denote $\max\{r, 0\}$.
\end{restatable}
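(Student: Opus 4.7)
The plan is to interpret Step 4 as the classic swap-greedy for fractional bipartite $b$-matching with free disposal, and to show it achieves at least half the objective of the feasible LP solution $z$ specified in the preceding paragraph.

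I would first verify feasibility of $z$ for the offline LP stated just above the lemma. The per-offline constraint $\sum_{t:(i,t)\in E_2\setminus L} z_{it} \le 1-\probfr$ reduces to $(1-\probfr)\sum_t y^*_{it} \le 1-\probfr$; the per-online constraint $\sum_i z_{it} \le p_t$ follows from $z \le y^*$ pointwise combined with $\sum_i y^*_{it} \le p_t$; and $z_{it} \le \tildxl_{it}$ for $(i,t)\in L$ is immediate from the $\min$. A direct computation then shows that the LP objective at $z$, namely $\sum_{(i,t)} \widehat{w}_{it}(z_{it} - \tildxl_{it})$, equals the quantity inside the parentheses on the RHS of the lemma.

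Next, by telescoping, the LHS equals the total swap gain $\Delta := \sum (\widehat{w}_{is} - \widehat{w}_{js})\delta_{i,j,s}$ aggregated over all Step-4 swaps: each swap shifts mass $\delta_{i,j,s}$ from $(j,s)$ to $(i,s)$, changing $\sum_{i'} \widehat{w}_{i's} x_{i's}$ by exactly $(\widehat{w}_{is} - \widehat{w}_{js})\delta_{i,j,s} \ge 0$, with non-negativity coming from the descending-order selection rule. I would then prove $2\Delta \ge \mathrm{obj}(z)$ via a charging argument. At batch $i$'s processing (at time $t_i$), greedy optimally fills a fractional knapsack of capacity $1-\probfr$ using the highest per-unit-gain swaps, so greedy's batch-$i$ gain dominates that of any alternative allocation that places $z_{is}$ on each $(i,s)\in E_2\setminus L$ with mass sourced from the currently-available owners at column $s$. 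Any later ``stealing'' of $i$'s placed mass by a subsequent batch $i'$ is itself a Step-4 swap that also contributes positively to $\Delta$, so each $z$-unit is accounted for twice (once in its immediate placement, once in its reassignment chain), which yields the factor of $\frac{1}{2}$.

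The main obstacle is formalizing the cascade so that no swap is double-counted beyond twice. The cleanest approach is to pair each positive $z$-unit at $(i,s)\in E_2\setminus L$ with a chain of Step-4 swaps ending at its actual final owner in $x^{(s)}$, and to verify via telescoping that the sum of per-link gains along each chain --- each bounded below by greedy's descending-order choice at the corresponding batch --- aggregates to at least $\frac{1}{2}$ of the chain's $z$-value. Disposals on $L$-edges contribute symmetrically to both $\Delta$ and $\mathrm{obj}(z)$ at the same per-unit cost $\widehat{w}_{it}$, and are absorbed into the same accounting without additional loss.
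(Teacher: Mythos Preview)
Your setup is sound: the feasibility of $z$ is verified correctly, the LP objective at $z$ is precisely the parenthesized RHS, and the LHS does telescope into the total swap gain $\Delta$ (you should note explicitly that all swaps touching column $s$ occur at times $t_i<s$, so $x^{(s)}_{is}$ is already the final value; the paper records this as ``the value of $f_t$ is settled when $t$ arrives'').

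The gap is the $\tfrac12$-factor argument. Your chain/charging scheme does not work as written. A $z$-unit placed at $(i,s)$ need not correspond to any actual swap in greedy's execution: greedy may spend $i$'s entire budget on columns disjoint from $\{s:(i,s)\in E_2\setminus L,\ z_{is}>0\}$, so there is no ``chain of Step-4 swaps ending at its actual final owner in $x^{(s)}$'' to pair it with. The sentence ``each $z$-unit is accounted for twice (once in its immediate placement, once in its reassignment chain)'' conflates greedy's swaps with a hypothetical execution that follows $z$; these live in different worlds and cannot be chained together. What you actually need after your correct step-1 inequality (greedy's batch-$i$ gain $\ge$ gain of placing $z_i$ at greedy's \emph{current} state) is a second inequality relating the latter to the gain of placing $z_i$ at greedy's \emph{final} state: this is where diminishing returns enter, via monotonicity of each column's disposal threshold $\tau_s$. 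Summing those final-state gains over $i$ then lower-bounds $\mathrm{obj}(z)-\Delta$, giving $2\Delta\ge\mathrm{obj}(z)$. Your chain idea does not supply this middle step.

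The paper makes exactly this point: it remarks that the direct free-disposal argument is ``involved'' and instead defines per-column submodular functions $f_t(\{r_{it}\})$ (fractional knapsack values), proves submodularity of $f_t$ explicitly, and then runs the standard greedy-$\tfrac12$ proof for online monotone submodular welfare (telescoping $f_t(\bar r)-f_t(\hat r)$ and applying diminishing marginal returns twice). If you want to stay combinatorial, replace your chain argument by the monotone-threshold inequality above; otherwise, the submodular packaging is the cleanest route.
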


We defer the proof of \Cref{lma:submod} to \Cref{sec:lmasubmod-proof}.

With the help of \Cref{lma:rounding}, \Cref{lem:opt_second} and \Cref{lma:submod}, we are ready to complete the proof of Theorem~\ref{thm:small_slack}.

\begin{proof}[Proof of \Cref{thm:small_slack}]
Recall that \Cref{lma:rounding} gives
    \begin{align}
\label{eqn:alg_uniq_small}
\alg & \ge  (1-\probfr) \cdot \left( \sum_{(i,t) \in E_1} \wit \cdot x^{(t)}_{it} + \frac{1}{2} \cdot \sum_{(i,t) \in E_2} \wit \cdot x^{(t)}_{it} \right) \notag \\
& \ge (1-\probfr) \cdot \left( \sum_{(i,t) \in E_1} \wit \cdot \tildxl_{it} + \frac{1}{2} \cdot \sum_{(i,t) \in E_2 \setminus L} \wit \cdot x^{(t)}_{it} + \sum_{(i,t) \in L} \wit \cdot (x^{(t)}_{it} -\tildxl_{it}) \right) \notag \\
&= (1-\probfr) \cdot \left( \sum_{(i,t) \in E_1} \wit \cdot \tildxl_{it} + \frac{1}{2} \cdot \left(\sum_{(i,t) \in E} \widehat w_{it} \cdot x^{(t)}_{it} - \sum_{(i,t) \in L} \widehat w_{it} \cdot \tildxl_{it} \right)\right) \notag \\
& \ge (1-\probfr) \cdot \left( \sum_{(i,t) \in E_1} \wit \cdot \tildxl_{it} + \frac{1}{4} \cdot \left((1 - \probfr) \cdot \sum_{(i,t) \in E_2 \setminus L}  \widehat w_{it} \cdot y^*_{it} - \sum_{(i,t) \in L} \widehat w_{it} \cdot (\tildxl_{it} - y^*_{it})^+\right) \right)~, 
\end{align}
where the first inequality holds by Lemma~\ref{lma:rounding}, the second inequality uses the fact that in stage $1$ we only collect rewards from $(i, t) \in E_1 \cap L$ and the fact that $x^{(t)}_{it} \leq \tildxl_{it}$ for $(i,t) \in E_2 \cap L$, the third equality uses the fact that $x^{(t)}_{it} = 0$ for $(i,t) \in E_1 \setminus L$ and the definition of $\widehat w_{it}$, and the last inequality holds by \Cref{lma:submod}.

Now, we bound the terms in \eqref{eqn:alg_uniq_small} separately. We first have
\begin{align*}
    \sum_{(i,t) \in E_1} \wit \cdot \tildxl_{it} ~\geq~ (1 - \epsalg) \cdot \lp(\tildxl),
\end{align*}
because the algorithm collects at least $1 - \epsalg$ fraction of the reward from free edges in stage $1$. Next, \Cref{lem:opt_second} guarantees
\begin{align*}
    \sum_{(i,t) \in E_2 \setminus L}  \widehat w_{it} \cdot y^*_{it} ~=~ \sum_{(i,t) \in E_2 \setminus L}  w_{it} \cdot y^*_{it} ~\geq~ 0.5 - \epsphilo - \eps^{1/4} - \epsslack -4 \sqrt{\frac{\epsslack}{\epsalg}}.
\end{align*}
For the last term in \eqref{eqn:alg_uniq_small}, we have
\begin{align*}
    \sum_{(i,t) \in L} \widehat w_{it} \cdot (\tildxl_{it} - y^*_{it})^+ ~=~ 2 \sum_{(i,t) \in L} w_{it} \cdot (\tildxl_{it} - y^*_{it})^+ ~\leq~ 2\sum_{(i,t) \in L} w_{it} \cdot \tildxl_{it} \cdot \left(1 - \frac{y^*_{it}}{p_t}\right) ~\leq~ 2\epsslack.
\end{align*}

Applying the above three inequalities to \eqref{eqn:alg_uniq_small}, we have
\begin{align*}
    \alg ~&\geq~ (1-\probfr) \cdot \left( (1-\epsalg) \cdot \lp(\tildxl) + \frac{1}{4} \cdot \left( 0.5 - \epsphilo - \eps^{1/4} - \epsslack -4 \sqrt{\frac{\epsslack}{\epsalg}} - \probfr - 2\epsslack \right) \right) \\
    ~&\geq~ 0.625 - \probfr - \frac{\epsalg}{2} - 4.5\epsilon^{1/4} - \frac{\epsphilo + \eps^{1/4} + 3\epsslack}{4} - \sqrt{\frac{\epsslack}{\epsalg}}  \\
    ~&\geq~ 0.625 - 1.5 \epsslack^{1/3} - \epsphilo - 6\eps^{1/4} - \epsslack,
\end{align*}
where the last inequality is satisfied when taking $\epsalg = \epsslack^{1/3}$.
\end{proof}

\subsection{Proof of \Cref{lem:opt_second}}
\label{sec:lemoptsecond-proof}

Let $U_0 \subseteq U$ be the set of vertices that the optimal algorithm matches with probability less than $\deltnew$, i.e.,
\[
U_0 \eqdef \left\{ \ui \in U: \sum_{t < t_i} \yit^* \le \deltnew \right\} \quad \text{and} \quad \overline{U_0} \eqdef U \setminus U_0~,
\]
where $\deltnew$ is defined to be $\sqrt{\epsslack/\epsalg}$.
Since the realization of online vertex $t$ is independent to the event that an vertex $i$ is matched before $t$, we have that
\[
\yit^* \le p_t \cdot \left( 1 - \sum_{s<t} \yit[s]^* \right), \quad \forall \ui \in U, \vt \in V~.
\]
We remark that this is the crucial constraint of previous works~\cite{SW-icalp21,BDL-EC22,NSW-arxiv23,BDPSW-arxiv24} for approximating the online optimum with known arrival orders.
Applying this inequality to $t \ge t_i$, we have that
\[
\frac{\yit^*}{p_t} \le 1 - \sum_{s<t} \yit[s]^* \le 1 - \sum_{s< t_i} \yit[s]^* \le 1-\deltnew , \quad \forall i \in \overline{U_0}, t \ge t_i~.
\]
Furthermore, since $\{\yit^*\}$ is a feasible solution of $\lpcore$, we have that
\begin{align*}
\epsslack \ge \lpcore & \ge \sum_{i,t} \wit \cdot \tildxl_{it} \cdot \left(1 - \frac{\yit^*}{p_t} \right) + \sum_{L} \wit \cdot \yit^* \cdot \left(1 - \frac{\tildxl_{it}}{p_t} \right) \\
& \ge \sum_{i \in \overline{U_0}} \sum_{t \ge t_i} \wit \cdot \tildxl_{it} \cdot \left(1 - \frac{\yit^*}{p_t} \right) + \sum_{L} \wit \cdot \yit^* \cdot \left(1 - \frac{\tildxl_{it}}{p_t} \right) \\
& \ge \deltnew \cdot \epsalg \cdot \sum_{i \in \overline{U_0}} \lp_i(\tildxl) + \sum_{L} \wit \cdot (\yit^* - \tildxl_{it})^+,
\end{align*}
where the last inequality uses the fact that online vertices $t \geq t_i$ contributes at least $\epsalg$ fraction of $\lp_i(\tildxl)$. Therefore, the above inequality gives
\begin{align*}
    \sum_{i \in \overline{U_0}} \lp_i(\tildxl) ~\leq~ \frac{\epsslack}{\deltnew \cdot \epsalg} \qquad \text{and} \qquad \sum_{L} \wit \cdot (\yit^* - \tildxl_{it})^+ ~\leq~ \epsslack.
\end{align*}

Now, we are ready to lower-bound $\sum_{(i,t)\in E_2 \setminus L} \wit \cdot \yit^*$. We have
\begin{align}
\label{eqn:philo_second}
\sum_{(i,t)\in E_2 \setminus L} \wit \cdot \yit^* &= \philo - \sum_{(i,t)\in E_1 \cup L} \wit \cdot \yit^* \notag \\
&\ge \philo - \sum_{(i,t) \in L} \wit \cdot \yit^* - \sum_{ (i,t) \in E_1 \setminus L \land i \in U_0} \wit \cdot \yit^* - \sum_{(i,t) \in E_1 \setminus L \land i \in \overline{U_0}} \wit \cdot \yit^* 
\end{align}

It remains to bound the three summations in \eqref{eqn:philo_second} separately. For the first summation, we have
\begin{align}
    \sum_{L} \wit \cdot \yit^* ~\leq~ \sum_{L} \wit \cdot \tildxl_{it} + \sum_{L} \wit \cdot (\yit^* - \tildxl_{it})^+ ~\leq~ \lp(\tildxl) + \epsslack. \label{eq:firstsum}
\end{align}
For the second summation, we have
\begin{align}
    \sum_{ (i,t) \in E_1 \setminus L \land i \in U_0} \wit \cdot \yit^* ~\leq~ \sum_{i \in U_0} \sum_{t: (i, t) \in E_1 \setminus L} 2\lp_i(\tildx) \cdot y^*_{it} ~\leq~ 2\deltnew \cdot \lp(\tildx), \label{eq:secondsum}
\end{align}
where the first inequality uses the fact that $w_{it} \leq 2\lp_i(x)$ for $(i,t) \notin L$, and the second inequality uses the fact that for every $i \in U_0$, we have
\[
\sum_{t:(i,t) \in E_1 \setminus L} y^*_{it} ~\leq~ \sum_{t < t_i} y^*_{it} ~\leq~ \deltnew.
\]
Here, we use a critical observation that $(i, t_i) \in L$.

For the third summation in \eqref{eqn:philo_second}, we have
\begin{align}
    \sum_{(i,t) \in E_1 \setminus L \land i \in \overline{U_0}} \wit \cdot \yit^*  ~\leq~ \sum_{i \in \overline{U_0}} \sum_{t: (i, t) \in E_1 \setminus L} 2\lp_i(\tildx) \cdot y^*_{it} ~\leq~ 2\sum_{i \in \overline{U_0}} \lp_i(\tildx) ~\leq~ \frac{2\epsslack}{\deltnew \cdot \epsalg}, \label{eq:thirdsum}
\end{align}
where the inequality uses the fact that $w_{it} \leq 2\lp_i(x)$ for $(i,t) \notin L$. 

Finally, applying the above \eqref{eq:firstsum}, \eqref{eq:secondsum} and \eqref{eq:thirdsum} to \eqref{eqn:philo_second} together with the assumption that $\philo \geq 1 - \epsphilo$ gives
\[
\sum_{(i,t)\in E_2 \setminus L} \wit \cdot \yit^* ~\geq~ 1 - \epsphilo - \lp(\tildxl) - \epsslack - 2\deltnew - \frac{2\epsslack}{\deltnew \cdot \epsalg} ~\geq~ 0.5 - \epsphilo - \eps^{1/4} - \epsslack - 4\sqrt{\frac{\epsslack}{\epsalg}},
\]
where the last inequality uses the fact that $\lp(\tildxl)  \leq 0.5 + \eps^{1/4}$ and that $\deltnew := \sqrt{\epsslack/\epsalg}$.

\section{Large Slackness}
\label{sec:large_slackness}

In this section, we give the proof of \Cref{thm:large_slack}, which is restated below for convenience:

\thmlargeslack*

\subsection{Proof of \Cref{thm:large_slack}}

Let $y^o$ be the optimal solution of $\lpslack$. Since we assume $\lpslack \geq \epsslack$, we have
\[
\sum_{(i,t) \in E} \wit \cdot \tilde x^L_{it} \cdot \left(1 - \frac{y^o_{it}}{p_t} \right) + \sum_{(i,t) \in L} \wit \cdot y^o_{it} \cdot \left(1 - \frac{\tilde x^L_{it}}{p_t} \right) ~\geq~ \epsslack.
\]

The first step of our algorithm is to check whether $\stlb(y^o) \geq 0.5 + \eps$. If so, then returning $y^o$ as the desired solution is sufficient. Otherwise, we apply \Cref{lma:frdt} to solution $y^o$. Recall that $\lpslack$ guarantees that $\lp(y^o) \geq 1 - \epsphilo$. Therefore, we have $\stlb(y^o) \leq \lp(y^o) \cdot \frac{0.5 + \eps}{1 - \epsphilo} \leq 0.5 + \epsphilo$, where the last inequality holds when $\epsphilo \geq 3\eps$ and $\epsphilo \leq 10^{-4}$. Then, applying \Cref{lma:frdt} with $\gamma = \epsphilo$ and $\alpha = 1$ gives
\begin{itemize}
    \item $\lp(\tilde y) \ge (1-\epsphilo^{1/4}) \cdot \lp(y^o) \geq  1-\epsphilo^{1/4}-\epsphilo $
    \item $q_i(\tilde y^L) = \sum_t \tilde y_{it}^L \le \epsphilo^{1/4}$ for every $i \in U$
    \item $\tildyl_{it} = \one[w_{it} > \lp_i(\tilde y)] \cdot \tildy_{it}$
    \item $\lp(\tilde y^L)  \geq (0.5 - 3 \epsphilo^{\nicefrac{1}{4}}) \cdot \lp(\tilde{y}) \geq  0.5 - 3.5 \epsphilo^{\nicefrac{1}{4}}$, which further implies $\lp(y^L) \geq 0.5 - 3.5 \epsphilo^{1/4} - \epsphilo$.
\end{itemize}

After decomposing $y^o$ into $\tildy$ and $\tildyl$, we first show that $\tildy$ is also with a large slackness. Note that
\begin{align}
    &\sum_{(i,t) \in L(\tildx)} \wit \cdot \tildxl_{it} \cdot \left(1 - \frac{\tildy_{it}}{p_t} \right) + \sum_{(i,t) \in L(\tildx)} \wit \cdot \tildy_{it} \cdot \left(1 - \frac{\tildxl_{it}}{p_t} \right) \notag \\
    ~\geq~&\sum_{(i,t) \in L(\tildx)} \wit \cdot \tildxl_{it} \cdot \left(1 - \frac{y^o_{it}}{p_t} \right) + \sum_{(i,t) \in L(\tildx)} \wit \cdot y^o_{it} \cdot \left(1 - \frac{\tildxl_{it}}{p_t} \right) - \sum_{(i,t) \in L(\tildx)} \wit \cdot (y^o_{it} - \tildy_{it}) \notag \\
    ~\geq~& \epsslack - (\lp(y^o) - \lp(\tildy)) ~\geq~ \epsslack - \epsphilo^{1/4}, \label{eq:yslack}
\end{align}

where the second line uses the fact that $\tildy_{it} \leq y^o_{it}$, and the third line uses the assumption that the slackness of $y^o$ is at least $\epsslack$. Therefore, the slackness of solution $\tildy$ is also sufficiently large. 

We further decompose \eqref{eq:yslack} into two parts:  
\begin{align}
    &\sum_{(i,t) \in L(\tildx)} \wit \cdot \tildxl_{it} \cdot \left(1 - \frac{\tildy_{it}}{p_t} \right) + \sum_{(i,t) \in L(\tildx)} \wit \cdot \tildy_{it} \cdot \left(1 - \frac{\tildxl_{it}}{p_t} \right)  \notag \\
    ~\leq~& \sum_{(i,t)} \wit \cdot \tildxl_{it} \cdot \left(1 - \frac{\tildyl_{it}}{p_t} \right) + \sum_{(i,t)} \wit \cdot \tildyl_{it} \cdot \left(1 - \frac{\tildxl_{it}}{p_t} \right) + \sum_{\substack{(i,t): \\ 2\lp_i(\tildx) < w_{it} \leq \lp_i(\tildy)}} \wit \cdot \tildy_{it} \cdot \left(1 - \frac{\tildxl_{it}}{p_t} \right) \notag \\
    ~\leq~& \sum_{(i,t)} \wit \cdot \tildxl_{it} \cdot \left(1 - \frac{\tildyl_{it}}{p_t} \right) + \sum_{(i,t)} \wit \cdot \tildyl_{it} \cdot \left(1 - \frac{\tildxl_{it}}{p_t} \right) + \sum_{i: 2\lp_i(\tildx) <  \lp_i(\tildy)} \lp_i(\tildy) \notag \\
    ~\leq~& \sum_{(i,t)} \wit \cdot \tildxl_{it} \cdot \left(1 - \frac{\tildyl_{it}}{p_t} \right) + \sum_{(i,t)} \wit \cdot \tildyl_{it} \cdot \left(1 - \frac{\tildxl_{it}}{p_t} \right) \notag\\
    &+ 2\cdot \sum_{i\in U}  \left(\lp_i(\tildy) - \lp_i(\tildx)\right) \cdot \one[2\lp_i(\tildx) < \lp_i(\tildy)] \notag,
\end{align}
 where the first inequality uses the fact that $\tildyl_{it} \leq \tildy_{it}$ and the fact that $\tildyl_{it} = \tildy_{it} \cdot \one[w_{it} > \lp_i(\tildy)]$, the second inequality follows from the observation that the last term sums over edges such that $w_{it} \leq \lp_i(\tildy)$, while $\sum_t \tildy_{it} \leq 1$ for every $i$. Combining the above inequality with \eqref{eq:yslack} implies that either we have
\begin{align}
    \sum_{i,t} w_{it} \cdot \tildxl_{it} \cdot \left(1 - \frac{\tildyl_{it}}{p_t}\right) + \sum_{i,t} w_{it} \cdot \tildyl_{it} \cdot \left(1 - \frac{\tildxl_{it}}{p_t}\right) ~\geq~ 0.6 \cdot (\epsslack - \epsphilo^{1/4}), \label{eq:slack1}
\end{align}
 
or we have
\begin{align}
     \sum_{i\in U}  \left(\lp_i(\tildy) - \lp_i(\tildx)\right) \cdot \one[2\lp_i(\tildx) < \lp_i(\tildy)] ~\geq~ 0.2 \cdot (\epsslack - \epsphilo^{1/4}). \label{eq:slack2}
\end{align}

Then, the following two lemmas suggest that we can find a good solution in either case:

\begin{Lemma}
\label{lma:largerelax}
    Suppose we are given $\tildx, \tildy, \tildxl, \tildyl$ that are generated by \Cref{lma:frdt}. Assume the following inequality holds:
    \[
    \sum_{i,t} w_{it} \cdot \tildxl_{it} \cdot \left(1 - \frac{\tildyl_{it}}{p_t}\right) + \sum_{i,t} w_{it} \cdot \tildyl_{it} \cdot \left(1 - \frac{\tildxl_{it}}{p_t}\right) \geq \yifana.
    \]
    When $\epsphilo, \eps \leq 10^{-4}$ and $\epsphilo \geq 3\eps$ are satisfied, there exists a solution $a$, such that $\stlb(a) \geq \min\{0.5 + \eps, 0.5 + \eps_a\}$, where $\eps_a$ is a constant that satisfies 
    \[
        \eps_a \geq \frac{\yifana}{32}  - 2.5\eps^{1/4} - 2.5 \epsphilo^{1/4}.
    \]
    Furthermore, solution
     $a$ can be computed efficiently.
\end{Lemma}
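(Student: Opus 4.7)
The plan is to construct a feasible solution $a \in \mathcal P$ by merging $\tildxl$ and $\tildyl$, then show that $\lp(a) > 1/2$ using the hypothesis, and finally convert the LP bound into an $\stlb$ bound via the small-$q_i$ structure guaranteed by the decomposition lemma. Concretely, since $q_i(\tildxl) \leq \eps^{1/4}$ and $q_i(\tildyl) \leq \epsphilo^{1/4}$, any convex combination inherits $q_i(a) \leq \eps^{1/4} + \epsphilo^{1/4}$, and the threshold-$\tau{=}0$ estimate $\stlb_i(a) \geq (1 - q_i(a))\lp_i(a)$ summed over $i$ yields $\stlb(a) \geq \lp(a) - O(\eps^{1/4}+\epsphilo^{1/4})$. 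Thus it suffices to produce a feasible $a$ with $\lp(a) \geq 1/2 + \yifana/C$ for an absolute constant $C$.

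The hypothesis is equivalent to the diagonal-overlap bound $\sum_{i,t} w_{it}\tildxl_{it}\tildyl_{it}/p_t \leq (\lp(\tildxl)+\lp(\tildyl)-\yifana)/2$. My primary candidate is the normalized sum $a_{it} = c_t(\tildxl_{it}+\tildyl_{it})$ with $c_t = \min(1, p_t/\sum_{i'}(\tildxl_{i't}+\tildyl_{i't}))$, which is automatically in $\mathcal P$; the unscaled sum has $\lp \approx 1$, so the task is to bound the scaling loss $\sum_t (1-c_t)\sum_i w_{it}(\tildxl_{it}+\tildyl_{it})$ in terms of the diagonal overlap that the hypothesis controls. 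Should this direct approach give weaker constants, a natural backup is the symmetric construction $a = \tfrac12(b+b')$ where $b_{it} = \tildxl_{it} + \tildyl_{it}(1-\tildxl_{it}/p_t)$ and $b'_{it} = \tildyl_{it} + \tildxl_{it}(1-\tildyl_{it}/p_t)$; both $b,b'$ lie in $\mathcal P$ after a single global halving, and the LP value of $(b+b')/2$ works out to $(\lp(\tildxl)+\lp(\tildyl)+\yifana)/4$, which already exceeds $1/2$ by $\Omega(\yifana)$ once combined with a baseline piece of $\tildxl$ or $\tildyl$.

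The main obstacle is the mismatch between the scaling loss, which naturally involves the off-diagonal product $(\sum_i\tildxl_{it})(\sum_i\tildyl_{it})$ per online vertex, and the diagonal overlap $\sum_i \tildxl_{it}\tildyl_{it}$ that the hypothesis bounds. My plan to bridge this uses the elementary inequality $(\sum_i\tildxl_{it} + \sum_i\tildyl_{it} - p_t)^+ \leq \min(\sum_i\tildxl_{it}, \sum_i\tildyl_{it})$ together with the fact that on the large-edge set $L$ the weights $w_{it}$ dominate $\lp_i(\tildx)$ and $\lp_i(\tildy)$, which lets me translate the weighted off-diagonal loss into a weighted diagonal overlap up to an absolute constant. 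The factor $32$ in $\eps_a$ absorbs this constant along with the $(1-q_i)$ slack and the additive errors $4.5\eps^{1/4}$ and $3.5\epsphilo^{1/4}$ from the decomposition-lemma bounds on $\lp(\tildxl)$ and $\lp(\tildyl)$. The outer $\min\{0.5+\eps,\,0.5+\eps_a\}$ in the statement simply means that whenever the bound above already exceeds $0.5+\eps$ we can truncate to the weaker $0.5+\eps$ guarantee, which suffices for the downstream application in \Cref{thm:large_slack}.
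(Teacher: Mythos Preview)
Your plan has a real gap at exactly the step you flag as the obstacle. The hypothesis controls only the \emph{diagonal} overlap $\sum_{i,t} w_{it}\tildxl_{it}\tildyl_{it}/p_t$, whereas the scaling loss of your normalized sum is governed by the \emph{off-diagonal} product $q_t(\tildxl)\,q_t(\tildyl)/p_t$. Your proposed bridge (via $(\sum_i\tildxl_{it}+\sum_i\tildyl_{it}-p_t)^+\le\min(q_t(\tildxl),q_t(\tildyl))$ and $w_{it}\gtrsim\lp_i$) does not close this: the weight lower bound is a \emph{lower} bound, so it cannot convert an off-diagonal upper bound into a diagonal one, and the inequality you quote bounds the overshoot by the wrong quantity. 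Concretely, take an online vertex $t$ with two offline neighbors $i\neq j$, $\tildxl_{it}=p_t$, $\tildyl_{jt}=p_t$, $\tildxl_{jt}=\tildyl_{it}=0$. Aggregated over many such $t$ with $\lp(\tildxl)\approx\lp(\tildyl)\approx 1/2$, the diagonal overlap is identically zero, so $\yifana\approx 1$, yet your normalized sum has $c_t=1/2$ everywhere and $\lp(a)\approx 1/2$; you would need $\lp(a)\ge 1/2+\yifana/C$, which fails for every $C$. Your backup $b_{it}=\tildxl_{it}+\tildyl_{it}(1-\tildxl_{it}/p_t)$ collapses to $\tildxl+\tildyl$ on this instance, is infeasible, and after the halving you mention gives $\lp(b/2)\approx 1/2$ as well; the ``combine with a baseline piece'' remark does not help because the baseline piece already saturates the same online-vertex capacities.

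What the paper does instead is a two-case argument that sidesteps the diagonal/off-diagonal mismatch rather than bridging it. First it forms $\bar a_{it}=\min\{2,\,p_t/q_t(\tilde a^L)\}\cdot\tilde a^L_{it}$ with $\tilde a^L=(\tildxl+\tildyl)/2$; if $\lp(\bar a)$ is already above $(0.5+\eps)/(1-\eps^{1/4}-\epsphilo^{1/4})$ we are done via your small-$q_i$ argument. Otherwise the paper \emph{randomly partitions} $U$ into $U_1,U_2$ and, for $i\in U_1$, sets $a_{it}=\tilde a^L_{it}+\sum_{j\in U_2}\tilde a^L_{jt}\tilde a^L_{it}/p_t$ (boosting large edges), while $i\in U_2$ keeps $\tilde a_{it}$ minus the transferred mass. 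The point is that $\E[\stlb(a)]$ picks up exactly the cross term $\sum_{i}\sum_{j\neq i}\sum_t w_{it}\tilde a^L_{it}\tilde a^L_{jt}/p_t$, which decomposes as $\sum_{i,t}w_{it}\tilde a^L_{it}(1-\tilde a^L_{it}/p_t)+\sum_{i,t}w_{it}\tilde a^L_{it}(q_t(\tilde a^L)/p_t-1)$; the first piece is $\ge\yifana/4$ by the hypothesis, and the second piece is $\ge\lp(\tilde a^L)-\lp(\bar a)$, which is controlled precisely by the Case-2 assumption. In your disjoint-support example this partition literally moves probability from $j$ to $i$ at the same $t$, which is the mechanism your constructions lack.
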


\begin{Lemma}
\label{lma:largenorm}
Suppose we are given $\tildx, \tildy, \tildxl, \tildyl$ that are generated by \Cref{lma:frdt}. Assume the following inequality holds:
   \[
   \sum_{i \in U} \left(\lp_i(\tildy) - \lp_i(\tildx) \right) \cdot \one[\lp_i(\tildy) \geq 2\lp_i(\tildx)] \geq \yifanb
   \]
    When $\epsphilo, \eps \leq 10^{-4}$ and $\epsphilo \geq 3\eps$ are satisfied, there exists a solution $b$ such that $\stlb(b) \geq \min\{0.5 + \eps, 0.5 + \eps_b\}$, where $\eps_b$ is a constant that satisfies 
    \[
    \eps_b \geq \frac{1}{4}\yifanb - 6\epsphilo^{1/4} -6\eps^{1/4}.
    \]
    Furthermore, solution $b$ can be computed efficiently.
\end{Lemma}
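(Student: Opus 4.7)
The plan is to construct a feasible $b$ that boosts the baseline $\stlb(\tildx)\geq 0.5\,\lp(\tildx)$ by leveraging the extra LP mass that $\tildy$ places on the vertices in $U^* := \{i : \lp_i(\tildy)\geq 2\lp_i(\tildx)\}$. The key structural property, supplied by \Cref{lma:frdt}, is that for every $i\in U^*$ the large edges $\tildyl_i$ concentrate $\tildy_i$'s LP value into a tiny probability budget: $q_i(\tildyl)\leq \epsphilo^{1/4}$, while $\lp_i(\tildyl)\geq(0.5-3\epsphilo^{1/4})\lp_i(\tildy)\geq(1-6\epsphilo^{1/4})\lp_i(\tildx)$. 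Consequently, the single-threshold algorithm restricted to $\tildyl_i$ (with threshold $\tau=\lp_i(\tildy)$) captures approximately all of $\lp_i(\tildyl)$, giving a $\stlb$-to-$\lp$ ratio close to $1$ on $U^*$ rather than the $\tfrac{1}{2}$ ratio guaranteed by the classical prophet inequality. This near-doubling is the ultimate source of the $\tfrac{1}{4}\yifanb$ improvement.

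Starting from the idealized but possibly infeasible target $b^\circ_i=\tildyl_i$ for $i\in U^*$ and $b^\circ_i=\tildx_i$ otherwise, the violation of $\sum_i b^\circ_{it}\leq p_t$ is at most $\sum_{i\in U^*}\tildyl_{it}\leq p_t$. I would scale only the $U\setminus U^*$ part of $b^\circ$ by the per-online-vertex factor $\lambda_t := 1 - \sum_{i'\in U^*}\tildyl_{i't}/p_t \in [0,1]$, yielding the feasible solution $b_{it}=\tildyl_{it}$ for $i\in U^*$ and $b_{it}=\lambda_t\,\tildx_{it}$ for $i\notin U^*$. Combining the concentration bound $\stlb_i(\tildyl)\geq (1-\epsphilo^{1/4})\lp_i(\tildyl)$ on $U^*$ with the classical $\stlb_i(\lambda_t\tildx_i)\geq \tfrac{1}{2}\sum_t\lambda_t\tildx_{it}w_{it}$ on $U\setminus U^*$ produces
\[
\stlb(b)\ \geq\ \tfrac{1}{2}\lp(\tildx)+\tfrac{1}{2}\yifanb - \tfrac{1}{2}\,\mathsf{Loss} - O\bigl(\eps^{1/4}+\epsphilo^{1/4}\bigr),
\]
where $\mathsf{Loss}:=\sum_t(1-\lambda_t)\sum_{i\notin U^*}\tildx_{it}w_{it}$ is the LP value sacrificed to enforce feasibility.

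The main step, and the principal obstacle, is to argue that $\mathsf{Loss}\leq \tfrac{1}{2}\yifanb + O(\eps^{1/4}+\epsphilo^{1/4})$, so that the net gain over $\tfrac{1}{2}\lp(\tildx)\geq \tfrac{1}{2}-O(\eps^{1/4})$ remains at least $\tfrac{1}{4}\yifanb - O(\cdot)$. My strategy is to split each $\tildx_{it}$ on $U\setminus U^*$ into its heavy part $\tildxl_{it}$ (for which $\sum_t\tildxl_{it}\leq \eps^{1/4}$ by \Cref{lma:frdt}) and its light part (for which $w_{it}\leq 2\lp_i(\tildx)$). The heavy contribution to $\mathsf{Loss}$ is absorbed into the $O(\eps^{1/4})$ error via the small-probability bound on $\tildxl$, while the light contribution is charged against the simultaneous $\tildyl$ gain $\sum_t\sum_{i\in U^*}\tildyl_{it}w_{it}\geq 2\sum_t\sum_{i\in U^*}\tildyl_{it}\lp_i(\tildx)$, using the identity $1-\lambda_t=\sum_{i'\in U^*}\tildyl_{i't}/p_t$ together with $\sum_{i\notin U^*}\tildx_{it}\leq p_t$ to produce the required cancellation.

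As a fallback, should the direct charging argument prove too lossy, one can output the better of two candidate solutions, namely $\tildx$ itself and the $\tildyl$-augmented $b$ above, and case-split on the relative sizes of $\sum_{i\in U^*}\lp_i(\tildx)$ and $\yifanb$ to show that at least one candidate achieves the claimed $\stlb\geq 0.5+\tfrac{1}{4}\yifanb-O(\eps^{1/4}+\epsphilo^{1/4})$. Both constructions only require LP solving and weight comparisons, so the resulting $b$ is computable in polynomial time.
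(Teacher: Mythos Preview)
Your high-level idea---use $\tildyl_i$ on $U^*=\{i:\lp_i(\tildy)\ge 2\lp_i(\tildx)\}$ to get a $\stlb$-to-$\lp$ ratio close to $1$ there, and keep $\tildx$ elsewhere---is the same source of gain the paper exploits. But the specific feasibility repair and loss analysis you sketch do not go through.

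The fatal point is the claim that the heavy contribution to $\mathsf{Loss}$ is $O(\eps^{1/4})$ ``via the small-probability bound on $\tildxl$''. The bound $q_i(\tildxl)\le\eps^{1/4}$ controls probability mass, not weighted value; for heavy edges $w_{it}$ can be arbitrarily large, and indeed $\sum_t\tildxl_{it}w_{it}=\lp_i(\tildxl)\approx\tfrac12\lp_i(\tildx)$, which is $\Theta(1)$. Concretely, if one $j\notin U^*$ carries most of $\lp(\tildx)$ with its single heavy edge on a vertex $t^\star$ where $\sum_{i'\in U^*}\tildyl_{i't^\star}=p_{t^\star}$ (so $\lambda_{t^\star}=0$), your $b$ zeroes out $j$'s heavy part and $\stlb_j(b)$ collapses, while $\yifanb$ can remain a small constant. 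The light-loss charging has the same structural problem: the inequality $\sum_{i'\in U^*}\tildyl_{i't}w_{i't}\ge 2\sum_{i'\in U^*}\tildyl_{i't}\lp_{i'}(\tildx)$ involves $\lp_{i'}(\tildx)$ for $i'\in U^*$, whereas the light loss involves $\lp_i(\tildx)$ for $i\notin U^*$; these are unrelated quantities, and the identity $1-\lambda_t=\sum_{i'\in U^*}\tildyl_{i't}/p_t$ together with $\sum_{i\notin U^*}\tildx_{it}\le p_t$ produces no cancellation between them. Your fallback (best of $\tildx$ and $b$) does not save this either, since $\stlb(\tildx)\le 0.5+\eps$ by hypothesis and $\stlb(b)$ can be below $0.5$ in the above scenario.

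The paper avoids the whole issue by making room for $\tildyl$ differently: rather than scaling down all of $\tildx$ on $U\setminus U^*$, it \emph{globally} removes the heavy slice $\tildxl$ (freeing capacity $q_t(\tildxl)$ at each $t$) and replaces it by a trimmed copy $\tilde b\le\tildyl$ satisfying $q_t(\tilde b)\le q_t(\tildxl)$, so that $b=(1-\epsphilo^{1/4})(\tildx-\tildxl)+\tilde b$ is feasible by construction with no per-$t$ scaling. The only ``loss'' is then $\lp(\tildyl)-\lp(\tilde b)=\lp(\bar b)-\lp(\tildxl)$ where $\bar b=\tildxl+\tildyl-\tilde b$; this is handled by a case split, since if $\lp(\bar b)$ is large then $\bar b$ itself (having $q_i(\bar b)\le\probfr+\epsphilo^{1/4}$) already gives $\stlb(\bar b)\ge 0.5+\eps$. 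The threshold argument on $U^*$ is the same as yours, but applied to $\tilde b_i$ rather than to $\tildyl_i$.
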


We use \Cref{lma:largerelax} and \Cref{lma:largenorm} to prove \Cref{thm:large_slack}. It can be verified that when $\epsslack \geq 150(\eps^{1/4} + \epsphilo^{1/4})$ and $\eps, \epsphilo \leq 10^{-4}$, both 
\[
\frac{0.6(\epsslack - \epsphilo^{1/4})}{32} - 2.5(\epsphilo^{1/4} + \epsslack^{1/4}) \geq \eps \qquad \text{and} \qquad \frac{0.2(\epsslack - \epsphilo^{1/4})}{4} - 6(\epsphilo^{1/4} + \epsslack^{1/4}) \geq \eps
\]
hold. Therefore, we receive at least one solution $z$ that satisfies $\stlb(z) \geq 0.5 + \eps$ from either \Cref{lma:largerelax} or \Cref{lma:largenorm},  which proves \Cref{thm:large_slack}.

\subsection{Proof of \Cref{lma:largerelax}}
\label{sec:largerelax}

Let $\tilde{a} = \frac{\tildx+\tildy}{2}$ and $\tilde{a}^L = \frac{\tildxl+\tildyl}{2}$. Define solution $\bar{a}$ to be the following: for every $i \in U, t \in V$, we set 
\[
\bar{a}_{it} = \min\left\{2, \frac{p_t}{q_t(\tilde{a}^L)}\right\} \cdot \tilde{a}^L_{it}. 
\]

Consider the following two cases:

\paragraph{Case 1: $\lp(\bar{a}) \ge \frac{0.5+\eps}{1-\probfr - \epsphilo^{1/4}}$.} Note that $q_i(\bar{a}) \leq 2 \cdot \frac{q_i(\tildxl) + q_i(\tildyl)}{2} \leq \probfr + \epsphilo^{1/4}$. Then, if $\lp(\bar{a}) \ge \frac{0.5+\eps}{1-\probfr - \epsphilo^{1/4}}$, there must be
\[
\stlb(\bar{a}) ~\geq~ \sum_{i \in U} (1 - \probfr - \epsphilo^{1/4}) \cdot \lp_i(\bar{a}) ~\geq~ (1 - \probfr - \epsphilo^{1/4}) \cdot \lp(\bar{a}) ~\geq~ 0.5 + \eps.
\]
Therefore, $\bar{a}$ would be the desired solution for \Cref{lma:largerelax} when  $\lp(\bar{a}) \ge \frac{0.5+\eps}{1-\probfr - \epsphilo^{1/4}}$.

\paragraph{Case 2: $\lp(\bar{a}) < \frac{0.5+\eps}{1-\probfr - \epsphilo^{1/4}}$.} In this case, define $U_1, U_2$ to be two subsets of offline vertices. For each $i \in U$, let $i$ be included in $U_1,U_2$ with probability $0.5$ respectively. Now consider the following solution $a$ (note the difference between $\tilde{a}^L_{it}$ and $\tilde{a}_{it}$):
\begin{align*}
& a_{it}~:=~ \tilde{a}^L_{it} + \sum_{j \in U_2} \tilde{a}^L_{jt} \cdot \frac{\tilde{a}^L_{it}}{p_t}, \quad \forall i \in U_1 \\
& a_{it} ~:=~ \tilde{a}_{it} - \sum_{j \in U_1} \tilde{a}^L_{jt} \cdot \frac{\tilde{a}^L_{it}}{p_t}, \quad \forall i \in U_2 \\
\end{align*}
To check the feasibility of solution $a$ (whether $a \in P$), note that for $i \in U_1$, we have
\begin{align*}
    a_{it} ~\leq~ \tilde{a}^L_{it}  + \tilde{a}^L_{it} \cdot \frac{\sum_{j \in U} \tilde{a}^L_{jt}}{p_t} ~\leq~ 2\tilde{a}^L_{it},
\end{align*}
so $q_i(a) \leq \probfr + \epsphilo^{1/4} \leq 1$, and for $i \in U_2$, we have $q_i(a) \leq q_i(\tilde{a}) \leq 1$. For $q_t(a)$, we have
\begin{align*}
    q_t(a) ~&=~ \sum_{i \in U_1} a_{it} + \sum_{i \in U_2} a_{it} ~=~ \sum_{i \in U} \tilde{a}^L_{it} + \sum_{i \in U_1} \sum_{j \in U_2} \tilde{a}^L_{jt} \cdot \frac{\tilde{a}^L_{it}}{p_t} - \sum_{i \in U_2} \sum_{j \in U_1} \tilde{a}^L_{jt} \cdot \frac{\tilde{a}^L_{it}}{p_t}  ~=~ q_t(\tilde{a}^L) ~\leq~ p_t.
\end{align*}
Therefore, we have $a \in P$. 

Now consider the value of $\stlb(a)$, we have
\begin{align}
\E[\stlb(a)] & \ge \E\left[\sum_{i \in U_1} \stlb_i(a)+ \sum_{i \in U_2}\stlb_i(a)\right] \notag \\
& \ge \E\left[ \sum_{i \in U_1} (1-\probfr - \epsphilo^{1/4}) \cdot \lp_i(a) + \frac{1}{2} \sum_{i \in U_2} \lp_i(a)\right] \notag\\
& = \E\left[ \sum_{i \in U_1} (1-\probfr - \epsphilo^{1/4}) \cdot \lp_i(\tilde{a}^L) + \frac{1}{2} \sum_{i \in U_2} \lp_i(\tilde{a}) + (0.5 - \probfr - \epsphilo^{1/4}) \cdot \sum_{i \in U_1} \sum_{j \in U_2} \sum_{t \in V} \frac{\tilde{a}^L_{it} \cdot \tilde{a}^L_{jt}}{p_t} \cdot \wit \right]\notag \\
& = \frac{1}{2} \cdot (1-\probfr - \epsphilo^{1/4}) \cdot \lp(\tilde{a}^L) + \frac{1}{4} \cdot \lp(\tilde{a}) + (0.5 - \probfr - \epsphilo^{1/4}) \cdot \frac{1}{4} \cdot \sum_{i \in U} \sum_{j \neq i} \sum_{t \in V} \frac{\tilde{a}^L_{it} \cdot \tilde{a}^L_{jt}}{p_t} \cdot \wit, \label{eq:largerelax1}
\end{align}
where the second line is from the observation that $q_i(a) \leq \probfr + \epsphilo^{1/4}$ for $i \in U_1$.

Note that for every $t$, we have
\begin{align}
    \sum_{i \in U} \sum_{j\ne i} \frac{\tilde{a}^L_{it} \cdot \tilde{a}^L_{jt}}{p_t} \cdot \wit ~&=~ \sum_{i \in U} \tilde{a}^L_{it} \cdot w_{it} \cdot \left(\frac{q_t(\tilde{a}^L)}{p_t}  - \frac{\tilde{a}^L_{it}}{p_t} + 1 - 1 \right) \notag \\
    ~&=~ \sum_{i \in U} \tilde{a}^L_{it} \cdot w_{it} \cdot  \left(1 - \frac{\tilde{a}^L_{it}}{p_t} \right) + \sum_{i \in U} \tilde{a}^L_{it} \cdot w_{it} \cdot  \left(\frac{q_t(\tilde{a}^L)}{p_t} - 1 \right) \label{eq:largerelax2}
\end{align}
For the first term of \eqref{eq:largerelax2}, note that
\begin{align*}
    \tilde{a}^L_{it} \cdot \left(1 - \frac{\tilde{a}^L_{it}}{p_t} \right) ~&=~ \frac{\tildxl_{it} + \tildyl_{it}}{2} \cdot \left(1 - \frac{\tildxl_{it} + \tildyl_{it}}{2p_t} \right) \\
    ~&=~ \frac{\tildxl_{it} + \tildyl_{it}}{4} + \frac{p_t \cdot (\tildxl_{it} + \tildyl_{it}) - (\tildxl_{it} + \tildyl_{it})^2}{4p_t} \\
    ~&\geq~\frac{\tildxl_{it} + \tildyl_{it}}{4} + \frac{(\tildxl_{it})^2 + (\tildyl_{it})^2 - (\tildxl_{it} + \tildyl_{it})^2}{4p_t} \\
    ~&=~ \frac{1}{4} \cdot \tildxl_{it} \cdot \left (1 - \frac{\tildyl_{it}}{p_t}\right) + \frac{1}{4} \cdot \tildyl_{it} \cdot \left (1 - \frac{\tildxl_{it}}{p_t}\right).
\end{align*}
Therefore,
\begin{align}
    \sum_{i \in U} \tilde{a}^L_{it} \cdot w_{it} \cdot  \left(1 - \frac{\tilde{a}^L_{it}}{p_t} \right) ~\geq~ \frac{1}{4} \cdot \sum_{i \in U} \tildxl_{it} \cdot w_{it} \cdot \left(1 - \frac{\tildyl_{it}}{p_t}\right) + \frac{1}{4} \cdot \sum_{i \in U} \tildyl_{it} \cdot w_{it} \cdot  \left(1 - \frac{\tildxl_{it}}{p_t}\right). \label{eq:largerelax3}
\end{align}

For the second term of \eqref{eq:largerelax2}, note that 
\begin{align*}
    \tilde{a}^L_{it} \cdot w_{it} \cdot  \left(\frac{q_t(\tilde{a}^L)}{p_t} - 1 \right) ~\geq~  \tilde{a}^L_{it} \cdot w_{it} \cdot  \left(1 - \min\left\{2, \frac{p_t}{q_t(\tilde{a}^L)}\right\}\right),
\end{align*}
where the inequality follows from the fact that $q_t(\tilde{a}^L)/p_t + p_t/q_t(\tilde{a}^L) \geq 2$. Therefore,
\begin{align}
    \sum_{i \in U} \tilde{a}^L_{it} \cdot w_{it} \cdot  \left(\frac{q_t(\tilde{a}^L)}{p_t} - 1 \right) ~&\geq~ \sum_{i \in U} \tilde{a}^L_{it} \cdot w_{it}  - \sum_{i \in U} \tilde{a}^L_{it} \cdot w_{it} \cdot  \min\left\{2, \frac{p_t}{q_t(\tilde{a}^L)}\right\}  \notag \\
    ~&=~ \sum_{i \in U} \tilde{a}^L_{it} \cdot w_{it}  - \sum_{i \in U} \bar{a}_{it} \cdot w_{it} \label{eq:largerelax4}
\end{align}

Applying \eqref{eq:largerelax3} and \eqref{eq:largerelax4}  to \eqref{eq:largerelax2} and summing the inequality over all $t \in V$, we have
\begin{align*}
    \sum_{i \in U} \sum_{j \neq i} \sum_{t \in V} \frac{\tilde{a}^L_{it} \cdot \tilde{a}^L_{jt}}{p_t} \cdot \wit ~\geq~ \frac{1}{4} \cdot \yifana + \lp(\tilde{a}^L) - \lp(\bar{a}).
\end{align*}
Finally, applying the above inequality  to \eqref{eq:largerelax1}, we have
\begin{align*}
    \E[\stlb(a)] ~\geq~ \frac{5}{8} \lp(\tilde{a}^L) + \frac{1}{4} \lp(\tilde{a}) + \frac{1}{32} \yifana - \frac{1}{8} \lp(\bar{a}) - \frac{3}{4}\probfr - \frac{3}{4}\epsphilo^{1/4} .
\end{align*}
Applying the fact that $a = \frac{x+y}{2}$ and $\tilde{a}^L = \frac{\tildxl + \tildyl}{2}$, we have 
\[
\E[\stlb(a)] ~\geq~ \frac{1}{8} \big(\lp(\tildx) + \lp(\tildy)\big) + \frac{5}{16} \big(\lp(\tildxl) + \lp(\tildyl)\big) + \frac{1}{32} \yifana - \frac{1}{8} \lp(\bar{a}) - \frac{3}{4} \probfr - \frac{3}{4} \epsphilo^{1/4}.
\]
Finally, applying the bounds for $\lp(\tildx), \lp(\tildy), \lp(\tildxl), \lp(\tildyl)$ and the assumption $\lp(\bar{a}) < \frac{0.5 + \eps}{1 - \probfr - \epsphilo^{1/4}}$ to the above inequality shows that $\E[\stlb(a)] $ is at least $0.5 + \eps_a$, such that
\begin{align*}
    \eps_a 
    &\geq \frac{\yifana}{32} -\frac18 \cdot (\eps^{1/4} + \epsphilo^{1/4}+ \epsphilo) - \frac5{16}  \cdot (3.5\epsphilo^{1/4} + \epsphilo+  4.5\eps^{1/4}
    )  -\frac18 \cdot (\eps + \probfr + \epsphilo^{1/4} ) - \frac{3}{4} \probfr - \frac{3}{4} \epsphilo^{1/4}\\
    &\geq \frac{\yifana}{32}  - 2.5\eps^{1/4} - 2.5 \epsphilo^{1/4},
\end{align*}
where we use in the first inequality that $\lp(\bar a)   \leq \frac{0.5 + \eps}{1 - \probfr - \epsphilo^{1/4}} \leq 0.5 + \eps + \probfr + \epsphilo^{1/4}$, which is true when $\epsphilo, \eps \leq 10^{-4}$, and the second inequality uses the assumption that $\probfr = \eps^{1/4}$ and $\epsphilo, \eps \leq 10^{-4}$, which allows us to further upper bound the extra $O(\epsphilo)$ and $O(\eps)$ terms by $O(\epsphilo^{1/4})$ and $O(\eps^{1/4})$ respectively.

\subsection{Proof of \Cref{lma:largenorm}}
\label{sec:largenorm}

We first construct solution $\tilde{b}$ via the following process:
\begin{itemize}
    \item Initiate $\tilde{b}= \tildyl$.
    \item For each $t$ such that $q_t(\tilde{b}) > q_t(\tildxl)$, arbitrarily reduce the value of $\tilde{b}_{it}$ until $q_t(\tilde{b}) = q_t(\tildxl)$ is satisfied.
\end{itemize}

The generated $z$ satisfies $q_t(\tilde{b}) \leq q_t(\tildxl)$ for every $t \in V$. Let solution $\bar{b} = \tildxl + \tildyl - \tilde{b}$. Then, solution $\bar{b}$ satsifies $\bar{b} \in $, because we have  $q_i(\bar{b}) \leq q_i(\tildxl) + q_i(\tildyl) \leq \probfr + \epsphilo^{1/4}$ for every $i \in U$, and $q_t(\bar{b}) \leq q_t(\tildyl) \leq p_t$ for every $t \in V$. Consider the following two cases:

\paragraph{Case 1: $\lp(\bar{b}) \ge \frac{0.5+\eps}{1-\probfr - \epsphilo^{1/4}}$.} Note that $q_i(\bar{b}) \leq 2 \cdot \frac{q_i(\tildxl) + q_i(\tildyl)}{2} \leq \probfr + \epsphilo^{1/4}$. Then, if $\lp(\bar{b}) \ge \frac{0.5+\eps}{1-\probfr - \epsphilo^{1/4}}$, there must be
\[
\stlb(\bar{b}) ~\geq~ \sum_{i \in U} (1 - \probfr - \epsphilo^{1/4}) \cdot \lp_i(\bar{b}) ~\geq~ (1 - \probfr - \epsphilo^{1/4}) \cdot \lp(\bar{b}) ~\geq~ 0.5 + \eps.
\]
Therefore, $\bar{b}$ would be the desired solution for \Cref{lma:largenorm} when  $\lp(\bar{b}) \ge \frac{0.5+\eps}{1-\probfr - \epsphilo^{1/4}}$.

\paragraph{Case 2: $\lp(\bar{b}) < \frac{0.5+\eps}{1-\probfr - \epsphilo^{1/4}}$.} In this case, consider the solution 
\[
b = (1 - \epsphilo^{1/4}) \cdot (\tildx - \tildxl) + \tilde{b}.
\]
Solution $b$ is feasible because for $i \in U$, we have 
\[
q_i(b) ~\leq~ (1 - \epsphilo^{1/4}) \cdot q_i(\tildx) + q_i(\tilde{b}) ~\leq~ 1 - \epsphilo^{1/4} + q_i(\tildyl) ~\leq~ 1,
\]
On the other hand, for $t \in V$, we have
\[
q_t(b) ~=~ (1 - \epsphilo^{1/4}) \cdot (q_t(\tildx) - q_t(\tildxl)) + q_t(\tilde{b}) ~\leq~ q_t(\tildx) - q_t(\tildxl) + q_t(\tildxl) ~\leq~ p_t.
\]
Therefore, we have $b \in P$.

Now consider the value of $\stlb(b)$. Let $U_3 = \{i: 2\lp_i(\tildx) \leq \lp_i(\tildy)\}$ and $U_4 = \{i: 2\lp_i(\tildx) > \lp_i(\tildy)\}$. For $i \in U_3$, a feasible baseline algorithm would be setting a threshold $\lp_i(\tildy)$. Since $\tildx_{it} - \tildxl_{it} > 0$ only when $w_{it} < 2\lp_i(\tildx) \leq \lp_i(\tildy)$, setting the threshold to be $\lp_i(\tildy)$ could only take values when $\tilde{b}_{it} > 0$. Note that $q_i(\tilde{b}) \leq q_i(\tildyl) \leq \epsphilo^{1/4}$, so setting the threshold to be $\lp_i(\tildy)$ gets at least $(1 - \epsphilo^{1/4}) \cdot \lp_i(\tilde{b})$. Therefore,
\begin{align}
    \stlb(b) ~\geq&~ \sum_{i \in U_3} (1 - \epsphilo^{1/4}) \lp_i(\tilde{b}) + \sum_{i \in U_4} \frac{1}{2} \cdot \lp_i(b) \notag \\
    ~\geq&~ \sum_{i \in U_3} \frac{1}{2}\cdot \lp_i(b)  + \sum_{i \in U_3} \frac{1}{2} \cdot \big( \lp_i(\tilde{b}) + \lp_i(\tildxl) - \lp_i(\tildx)\big) + \sum_{i \in U_4} \frac{1}{2} \cdot \lp_i(b) - \epsphilo^{1/4} \notag \\
    ~\geq&~ \frac{1}{2} \lp(b) - \epsphilo^{1/4} + \frac{1}{2} \sum_{i \in U_3} \big(\lp_i(\tildyl) - (0.5 + 4\eps^{1/4}) \lp_i(\tildx)\big) + \frac{1}{2} \sum_{i \in U_3} \big(\lp_i(\tilde{b}) - \lp_i(\tildyl)\big),\label{eq:largenorm1}
\end{align}
where the second inequality uses the fact that $\tilde{b} = \frac{1}{2} b + \frac{1}{2} \tilde{b} - \frac{1 - \epsphilo^{1/4}}{2} \cdot (\tildx - \tildxl)$, and the third inequality uses the fact that $\lp_i(\tildxl) \geq (0.5-4\eps^{1/4}) \cdot \lp_i(\tildx)$.

We further bound the last two terms of \eqref{eq:largenorm1}. For the third term, we have
\begin{align}
    \sum_{i \in U_3} \big(\lp_i(\tildyl) - (0.5 + 4\eps^{1/4}) \lp_i(\tildx)\big) ~&\geq~ \sum_{i \in U_3} \big((0.5 - 3\epsphilo^{1/4})\lp_i(\tildy) - (0.5 + 4\eps^{1/4}) \lp_i(\tildx)\big)\notag  \\
    ~&\geq~ \frac{1}{2} \sum_{i \in U_3} \big(\lp_i(\tildy) - \lp_i(\tildx) \big) - 3\epsphilo^{1/4} - 4\eps^{1/4} \notag \\
    ~&\geq~ \frac{1}{2}\yifanb - 3\epsphilo^{1/4} - 4\eps^{1/4}, \label{eq:largenorm2}
\end{align}
where the last inequality follows from the assumption in \Cref{lma:largenorm}. For the last term of \eqref{eq:largenorm1}, we have
\begin{align}
    \sum_{i \in U_3} \big(\lp_i(\tilde{b}) - \lp_i(\tildyl)\big) ~&\geq~ \sum_{i \in U} \big(\lp_i(\tilde{b}) - \lp_i(\tildyl)\big) \notag \\
    ~&=~ \lp(\tilde{b}) - \lp(\tildyl) ~=~ \lp(\tildxl) - \lp(\bar{b}) \label{eq:largenorm3}
\end{align}
Now apply \eqref{eq:largenorm2} and \eqref{eq:largenorm3} back to \eqref{eq:largenorm1}, we have
\begin{align*}
    \stlb(b) ~\geq&~~ \frac{1}{2} \lp(b) - \epsphilo^{1/4} + \frac{1}{4} \yifanb - \frac32 \epsphilo^{1/4} - 2\probfr + \frac{1}{2} \lp(\tildxl) - \frac{1}{2} \lp(\bar{b}) \\
    ~=&~~ \frac{1}{2} (1 - \epsphilo^{1/4}) \cdot \big(\lp(\tildx) - \lp(\tildxl)\big) + \frac{1}{2}\big( \lp(\tildxl) + \lp(\tildyl) - \lp(\bar{b}) \big) \\
    &+ \frac{1}{2} \lp(\tildxl) - \frac{1}{2} \lp(\bar{b}) + \frac{1}{4} \yifanb - \frac52 \epsphilo^{1/4} - 2\eps^{1/4} \\
    ~\geq&~~ \frac{1}{2} \big(\lp(\tildx) + \lp(\tildxl) + \lp(\tildyl) - 2\lp(\bar{b})\big)  + \frac{1}{4} \yifanb - 3 \epsphilo^{1/4} - 2\eps^{1/4} ,
\end{align*}
where the second line uses the definition of $b$ and the fact that $\tilde{b}= \tildxl + \tildyl - \bar{b}$.
Finally, applying the bounds for $\lp(\tildx), \lp(\tildxl), \lp(\tildyl)$ and the assumption $\lp(\bar{b}) < \frac{0.5 + \eps}{1 - \probfr - \epsphilo^{1/4}}$ to the above inequality gives $\stlb(b) \geq 0.5 + \eps_b$, where
\begin{align*}
    \eps_b 
    ~&\geq~\frac{1}{4}\yifanb - \frac12 (3.5 \epsphilo^{1/4} + \epsphilo+ 5.5\eps^{1/4}) - (\eps +\probfr + \epsphilo^{1/4}) - 3 \epsphilo^{1/4} - 2\eps^{1/4} \\
    ~&=~\frac{1}{4}\yifanb - \frac{23}{4}\epsphilo^{1/4} - \frac{23}{4}\probfr - \eps - \epsphilo\\
    ~&\geq~\frac{1}{4}\yifanb - 6\epsphilo^{1/4} -6\eps^{1/4},
\end{align*}
where we use in the first inequality that $\lp(\bar a) \leq \frac{0.5 + \eps}{1 - \probfr - \epsphilo^{1/4}} \leq  0.5 + \eps + \probfr + \epsphilo^{1/4}$, which is true when $\epsphilo, \eps \leq 10^{-4}$, and the second inequality uses the assumption that $\probfr = \eps^{1/4}$ and $\epsphilo, \eps \leq 10^{-4}$, which allows us to further upper bound the extra $O(\epsphilo)$ and $O(\eps)$ terms by $O(\epsphilo^{1/4})$ and $O(\eps^{1/4})$ respectively.

\newpage

\bibliographystyle{alpha}
\bibliography{ref.bib,bib.bib}

\newpage

\appendix

\section{Missing Proofs in Section \ref{sec:small_slackness}}

\subsection{Proof of \Cref{lma:rounding}}
\label{sec:lmarounding-proof}

\lmarounding*

\begin{proof}
We consider the contribution of each offline vertex $i$ to our algorithm separately.
Let $r_i = \Pr{i \text{ is matched in stage } 1}$. We first show that $r_i \le \probfr$:
\begin{align*}
r_i = \Pr{i \text{ is matched in stage }1} ~=&~ \Pr{i \text{ is proposed by some }t \le t_i} \\
=&~ 1 - \prod_{t \le t_i} (1-x^{(t)}_{it}) \le \sum_{t \le t_i} x^{(t)}_{it} \le \sum_{t \le t_i} \tildxl_{it} \le \probfr~,
\end{align*}
where we use the fact that $x^{(t)}_{it}$ is at most the initial value of $x_{it}$ when $(i,t) \in L$.
Then, the expected matching of $\ui$ from the first stage is
\begin{align*}
&\sum_{t\le t_i} \wit \cdot \Pr{\ui \text{ is proposed by } \vt } \cdot \Pr{i \text{ is unmatched before }t} \\
\ge~& \sum_{t\le t_i} \wit \cdot \Pr{\ui \text{ is proposed by } \vt } \cdot \Pr{i \text{ is unmatched at }t_i} \\
\ge~& \sum_{t\le t_i} \wit \cdot x^{(t)}_{it} \cdot (1 - \probfr) = (1-\probfr) \cdot \sum_{t: (i,t) \in E_1} \wit \cdot x^{(t)}_{it}~.
\end{align*}
Next, we consider the expected reward from the second stage. We prove by induction that every edge $(i,t)$ from the second stage (i.e., $t>t_i$) is matched with probability $\frac{1-r_i}{2}$. 
Suppose the statement holds for every $t_i < s <t$. Consider the edge $(i,t)$. According to the design of our algorithm, $(i,t)$ is selected with probability:
\begin{multline*}
\frac{1}{2 (1 - \frac{1}{2} \cdot \sum_{t_i < s < t} x^{(s)}_{is})} \cdot \Pr{\ui \text{ is proposed by } \vt } \cdot \Pr{i \text{ is unmatched before }t} \\
= \frac{1}{2 (1 - \frac{1}{2} \cdot \sum_{t_i < s < t} x^{(s)}_{is})} \cdot x^{(t)}_{it} \cdot \left(1 - r_i - \frac{1-r_i}{2} \cdot \sum_{t_i < s < t} x^{(s)}_{is} \right) = \frac{1-r_i}{2} \cdot x^{(t)}_{it}~,
\end{multline*}
where the first equation holds by induction hypothesis.
Therefore, the contribution from the second stage equals
\[
\sum_{t>t_i} \wit \cdot \frac{1-r_i}{2} \cdot \xit \ge (1-\probfr) \cdot \frac{1}{2} \cdot \sum_{t:(i,t) \in E_2} \wit \cdot \xit~.
\]
Summing the above equations over all $i \in U$ concludes the proof of the lemma.
\end{proof}

\subsection{Proof of \Cref{lma:submod}}
\label{sec:lmasubmod-proof}

\lmasubmod*

\begin{proof}
    Recall that we modeled the Step $4$ of our algorithm as an ``online matching with free disposal'' process, while the main idea of \Cref{lma:submod} is that the increment of the matching value of greedy algorithm achieves at least one-half times the optimal offline increment. However, we find it involved to prove \Cref{lma:submod} via the online matching with free disposal process. To further simplify the proof, we view the Step $4$ of our algorithm as a more general ``online submodular welfare maximization'' process, and show that greedy algorithm is a $\frac{1}{2}$-approximation.

To be specific,  consider to view $i \in U$ as ``online items'', and $t \in V$ are ``offline agents''.
Without loss of generality, we assume that offline vertices $i \in U$ are relabeled according to the time of stage transition, i.e., for $i, j \in U$, we assume $t_i \leq t_{j}$ for $i < j$. Furthermore, when $t_i = t_j$ while $i < j$, we assume $i$ is proceeded in Step $4$ for transitioning to stage $2$ earlier than $j$. 
Then, the order of $i$ represents the arrival order of online items in our submodular welfare maximization process.

For the submodular welfare function, we define the following function $f_t(\{r_{it}\}_{(i,t) \notin L}; \{\tildxl_{it}\}_{(i,t) \in L}\})$. The value of $f_t(\{r_{it}\}_{(i,t) \notin L}; \{\tildxl_{it}\}_{(i,t) \in L}\})$ is defined as the objective of the following LP:
\begin{align*}
\max: \quad & \sum_{(i,t) \in E} \widehat w_{it} \cdot z_{it} - \sum_{(i,t) \in L} \widehat w_{it} \cdot \tildxl_{it} & \lpsubmod \\
\text{subject to}: \quad &  \sum_{i \in U} z_{it}  ~\le~ p_t &  \\
& z_{it} ~\le~ \tildxl_{it} & \forall i:(i,t) \in L \\
& z_{it} ~\le~ r_{it} & \forall i:(i,t) \notin L,
\end{align*}
where we recall $\widehat w_{it}$ is defined as 
\[
\widehat w_{it} = \begin{cases} 2\cdot w_{it} & (i,t) \in L  \\ w_{it} & (i,t) \in E_2 \setminus L \\ 0 & (i,t) \in E_1 \setminus L
\end{cases}~.
\]

For simplicity, we will abbreviate the notation  $f_t(\{r_{it}\}_{(i,t) \notin L}; \{\tildxl_{it}\}_{(i,t) \in L}\})$ as $f_t(\{r_{it}\})$ in the following proof.   Note that function $f_t(\{r_{it}\})$ is monotone and non-negative, where the non-negativity is guaranteed by the observation that setting $z_{it} = \tildxl_{it}$ is a feasible solution of $\lpsubmod$. Furthermore, the following claim shows that $f_t$ is a submodular function:

\begin{Claim}
    \label{clm:submod}
    Function $f_t(\{r_{it}\})$ is submodular, that is, given $\{r_{it}\}$ and $\{r'_{it}\}$ such that for every $(i, t) \notin L$ we have $0 \leq r_{it} \leq r'_{it}$, for $j: (j,t) \notin L$ and $\Delta_r \geq 0$, we have
    \[
    f_t\left(\{r_{it}\}_{-j} \cup \{r_{jt} + \Delta_r\}\} \right) - f_t\left(\{r_{it}\} \right) ~\geq~ f_t\left(\{r'_{it}\}_{-j} \cup \{r'_{jt} + \Delta_r\}\} \right) - f_t\left(\{r'_{it}\} \right),
    \]
    where the notations $\{r_{it}\}_{-j}$ and $\{r_{it}\}_{-j}$ represent $\{r_{it}\}_{i: (i,t) \notin L \land i \neq j}$ and $\{r_{it}\}_{i: (i,t) \notin L \land i \neq j}$ respectively.
\end{Claim}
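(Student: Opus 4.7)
Since $f_t$ differs from the optimal value of the LP $\lpsubmod$ only by the additive constant $\sum_{(i,t) \in L} \widehat w_{it}\, \tildxl_{it}$, it suffices to show that the LP value, viewed as a function of $\{r_{it}\}_{(i,t)\notin L}$, has the diminishing-returns property. My plan is to expose this through LP duality and a monotone comparative-statics argument on the Lagrange multiplier of the capacity constraint.

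Collect the effective upper bounds into a vector $u$ with $u_i = \tildxl_{it}$ for $(i,t) \in L$ (fixed) and $u_i = r_{it}$ for $(i,t) \notin L$, and write
\[
g(u) \;:=\; \max_{z}\left\{ \sum_i \widehat w_{it}\, z_i \ :\ \sum_i z_i \leq p_t,\ 0 \leq z_i \leq u_i \right\}.
\]
Dualizing only the capacity constraint and eliminating the box multipliers in closed form yields
\[
g(u) \;=\; \min_{\lambda \geq 0} \left[ \lambda\, p_t + \sum_i (\widehat w_{it} - \lambda)^+ u_i \right],
\]
and I write $\lambda^*(u)$ for the minimizer. The key monotonicity fact to establish is that $u \leq u'$ coordinatewise implies $\lambda^*(u) \leq \lambda^*(u')$: the minimizer is characterized as the value where the subgradient $p_t - \sum_i u_i\, \mathbf{1}[\widehat w_{it} > \lambda]$ vanishes, and enlarging any $u_i$ shifts this subgradient downward at every fixed $\lambda$, pushing the zero-crossing upward.

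With this in hand, the diminishing-returns inequality follows from the envelope theorem applied to the dual representation: for $j$ with $(j,t) \notin L$ and $\Delta_r > 0$,
\[
g(u + \Delta_r\, e_j) - g(u) \;=\; \int_0^{\Delta_r} \bigl(\widehat w_{jt} - \lambda^*(u + s\, e_j)\bigr)^+ ds.
\]
Applying this identity to both $r$ and $r'$ with $r \leq r'$, the monotonicity of $\lambda^*$ gives $(\widehat w_{jt} - \lambda^*(r + s\, e_j))^+ \geq (\widehat w_{jt} - \lambda^*(r' + s\, e_j))^+$ pointwise in $s$, and integrating yields Claim~\ref{clm:submod}. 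The only delicate step I anticipate is justifying the envelope identity at the finitely many values of $s$ where ties among the $\widehat w_{it}$ make $\lambda^*$ jump; piecewise-linear concavity of $g$ in each coordinate makes the identity hold almost everywhere, which is sufficient for the integral. A fully combinatorial alternative, which I would fall back on if this analytic bookkeeping became cumbersome, reads off the same marginal formula directly from the greedy LP solution that sorts items in decreasing order of $\widehat w_{it}$ and fills to capacity: if $k(u)$ denotes the marginal index in that ordering, the per-unit effect of raising $u_j$ is $(\widehat w_{jt} - \widehat w_{k(u)})^+$, and $k(u)$ moves toward higher-weight items as $u$ grows, recovering the same comparative statics without the envelope machinery.
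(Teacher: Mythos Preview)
Your proof is correct and captures the same mechanism as the paper's argument, though through a different lens. The paper works directly on the primal: it defines the threshold $\tau = \mathbf{1}\bigl[\sum_i z^*_{it} = p_t\bigr] \cdot \min_{i: z^*_{it} > 0} \widehat{w}_{it}$ from the greedy optimal solution, shows via explicit infinitesimal update rules that $\tau$ is non-decreasing as $r$ grows coordinatewise, and reads off the marginal gain as $(\widehat{w}_{jt} - \min\{\widehat{w}_{jt}, \tau\})\,\delta_r$. Your approach recovers the same threshold as the Lagrange multiplier $\lambda^*(u)$ via duality, establishes its monotonicity through the subgradient of the dual objective, and obtains the marginal via the envelope theorem; your $\lambda^*$ and the paper's $\tau$ are the same object. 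The dual framing is arguably cleaner and generalizes more readily, while the paper's primal-greedy argument is more self-contained and avoids the minor almost-everywhere bookkeeping you flag at tie points. Your closing ``combinatorial alternative'' is essentially the paper's own proof.
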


Intuitively, \Cref{clm:submod} holds because $\lpsubmod$ can be viewed as the solution of a fractional knapsack problem, and function $f_t(\{r_{it}\})$ can be viewed as taking the concave closure of some unit-demand function. We defer the proof of \Cref{clm:submod} to the end of this subsection.

Now, we show how to understand the Step $4$ of our algorithm as an online submodular welfare maximization process. For each $i \in [|U|]$ and $t:(i,t) \notin L$, define $\hat r_{it} = x^{(t_i)}_{it}$, i.e., the initial value we set for deterministic edges when $i$ is transitioned from stage $1$ to stage $2$. For $j \in \{0,1, ..., |U|\}$, further define $\hat r^{(j)}_{it} = \one[j \geq i] \cdot \hat r_{it}$. Then, the Step $4$ of our algorithm can be viewed as setting $\{\hat r_{jt}\}$ subject to the constraint $\sum_{t:(j,t) \notin L} \hat r_{jt} \leq 1 - \probfr$ for each $j = 1, 2, \cdots, |U|$, such that the marginal gain $\sum_{t \in V} f_t(\{\hat r^{(j)}_{it}\}) - \sum_{t \in V} f_t(\{\hat r^{(j-1)}_{it}\})$
is maximized. 

Define $\tilde r_{it} = (1 - \probfr) \cdot y^*_{it}$ for $(i,t) \notin L$. Note that  $\{\tilde r_{it}\}$ also satisfies the constraint that $\sum_{t:(i,t) \notin L} \tilde r_{it} \leq 1 - \probfr$. Since Step $4$ of our algorithm is a greedy approach that maximizes the marginal gain, the total welfare of $\{\hat r_{it}\}$ should be at least one half times the total welfare of $\{\tilde r_{it}\}$. To be specific, define $\bar r_{it} = \max\{\hat r_{it}, \tilde r_{it}\}$. For each $t \in V$, we have
\begin{align*}
     f_t(\{\tilde r_{it}\}) -  f_t(\{\hat r_{it}\}) ~\leq&~  f_t(\{\bar r_{it}\}) - f_t(\{\hat r_{it}\}) \\
    ~=&~ \sum_{j = 1}^{|U|} \left( f_t\big(\{\bar r_{it}\}_{i < j} \cup \{\bar r_{jt}\} \cup \{\hat r_{it}\}_{i > j}\big) - f_t\big(\{\bar r_{it}\}_{i < j} \cup \{\hat r_{jt}\} \cup \{\hat r_{it}\}_{i > j}\big) \right) \\
    ~\leq&~ \sum_{j = 1}^{|U|} \left( f_t\big(\{\hat r_{it}\}_{i < j} \cup \{\bar r_{jt}\} \cup \{0\}_{i > j}\big) - f_t\big(\{\hat r_{it}\}_{i < j} \cup \{\hat r_{jt}\} \cup \{0\}_{i > j}\big) \right) \\
    ~\leq&~ \sum_{j = 1}^{|U|} \left( f_t\big(\{\hat r_{it}\}_{i < j} \cup \{\tilde r_{jt}\} \cup \{0\}_{i > j}\big) - f_t\big(\{\hat r_{it}\}_{i < j} \cup \{r_{jt} = 0\} \cup \{0\}_{i > j}\big) \right), 
\end{align*}
where the second line is the telescoping sum, the third line uses the submodularity (\Cref{clm:submod}) and the fact that $\hat r_{it} \leq \bar r_{it}$, and the the last line uses the fact that $\bar r_{jt} - \hat r_{jt} = (\tilde r_{jt} - \hat r_{jt})^+ \leq \tilde r_{jt}$, and the submodularity. Summing the above inequality for $t \in V$, we have
\begin{align*}
    \sum_{t \in V} f_t(\{\tilde r_{it}\}) - \sum_{t \in V} f_t(\{\hat r_{it}\}) ~&\leq~ \sum_{j = 1}^{|U|} \left(\sum_{t \in V} f_t(\{\hat r^{(j-1)}_{it}\}_{-j} \cup \{\tilde r_{jt}\}\}) - \sum_{t \in V} f_t(\{\hat r^{(j-1)}_{it}\})\right) \\
    ~&\leq~ \sum_{j = 1}^{|U|} \left(\sum_{t \in V} f_t(\{\hat r^{(j-1)}_{it}\}_{-j} \cup \{\hat r_{jt}\}\}) - \sum_{t \in V} f_t(\{\hat r^{(j-1)}_{it}\})\right) \\
    ~&=~ \sum_{j = 1}^{|U|} \left(\sum_{t \in V} f_t(\{\hat r^{(j)}_{it}\}) - \sum_{t \in V} f_t(\{\hat r^{(j-1)}_{it}\})\right) ~=~\sum_{t \in V} f_t(\{\hat r_{it}\}),
\end{align*}
where the second line uses the fact that $\{\hat r_{jt}\}$ maximizes the marginal gain of $\sum_{t \in V} f_t(\{\hat r_{it}\})$. Rearrange the above inequality, we have
\begin{align}
    \sum_{t \in V} f_t(\{\hat r_{it}\}) ~\geq~ \frac{1}{2} \sum_{t \in V} f_t(\{\tilde r_{it}\}). \label{eq:submod-main}
\end{align}

It remains to further bound the values of $\sum_{t \in V} f_t(\{\hat r_{it}\})$ and $\sum_{t \in V} f_t(\{\tilde r_{it}\})$. For $\sum_{t \in V} f_t(\{\hat r_{it}\})$, we note the following two properties of our algorithm:
\begin{itemize}
    \item At each $t \in V$, the value of $\{x^{(t)}_{is}\}$ records the optimal solution of $\lpsubmod$ with respect to the current $\sum_{t \in V} f_t(\{\hat r^{(j)}_{it}\})$, where $j$ is the last vertex which is transitioned to stage $2$ before $t$. This is true by observing that the Step $4$ of our algorithm is a greedy algorithm that maintains the highest weight edges for $\{\hat r^{(j)}_{it}\}$.
    \item The value of $f_t(\{\hat r_{it}\})$ is settled when online vertex $t$ arrives. This is true because for every $j$ which is still in stage $1$ (i.e., $t_j \geq t$), there must be $\widehat w_{jt} = 0$. Therefore, the arrival of $j$ can't change the value of $f_t(\{\hat r_{it}\})$.
\end{itemize}
With the above two observations, we have
\begin{align}
    \sum_{t \in V} f_t(\{\hat r_{it}\})  ~=~ \sum_{(i,t) \in E} \widehat w_{it} \cdot x^{(t)}_{it} - \sum_{(i,t) \in L} \widehat w_{it} \cdot \tildxl_{it}. \label{eq:submod-hat}
\end{align}

For $\sum_{t \in V} f_t(\{\tilde r_{it}\})$, note that the following is a feasible solution of $\lpsubmod$ with respect to $\{\tilde r_{it}\}$:
\[
z_{it} = \begin{cases} \min(\tildxl_{it}, \yit^*) & (i,t) \in L  \\ (1-\probfr) \cdot \yit^* & (i,t) \in E_2 \setminus L \\ 0 & (i,t) \in E_1 \setminus L
\end{cases}~
\]
Therefore, we have
\begin{align}
    \sum_{t \in V} f_t(\{\tilde r_{it}\}) ~&\geq~ (1 - \probfr) \cdot \sum_{(i,t) \in E_2 \setminus L} \widehat w_{it} \cdot y^*_{it} + \sum_{(i,t) \in L} \widehat w_{it} \cdot \left(\min(\tildxl_{it}, \yit^*) - \tildxl_{it}\right) \notag \\
    ~&\geq~ (1 - \probfr) \cdot \sum_{(i,t) \in E_2 \setminus L} \widehat w_{it} \cdot y^*_{it} - \sum_{(i,t) \in L} \widehat w_{it} \cdot \left(\tildxl_{it}- \yit^*\right)^+ \label{eq:submod-tilde}
\end{align}

Applying \eqref{eq:submod-hat} and \eqref{eq:submod-tilde} to \eqref{eq:submod-main} proves \Cref{lma:submod}.

Finally, we finish the proof of \Cref{lma:submod} by presenting the missing proof of \Cref{clm:submod}.

\paragraph{Proof of \Cref{clm:submod}}
It's sufficient to consider the case where $\Delta_r = \delta_r \to 0$. For the general $\Delta_r$, applying the inequality for $\delta_r \to 0$ together with an integral is sufficient.

    Let ${z^{it}}$ be the optimal solution of $\lpsubmod$ with respect to ${r{it}}$. Now, consider increasing the value of $r_{jt}$ by $\delta_r$. The following greedy approach updates ${z^{it}}$ to maintain its optimality for $\lpsubmod$ after $r_{jt}$ is incremented by $\delta_r$:
    \begin{itemize}
        \item Case 1: If $\sum_{i \in U} z^*_{it} < p_t$, increase $z^*_{jt}$ by $\delta_r$.
        \item Case 2: Otherwise, let $\ell = \arg \min_{i: z^*_{it} > 0} \widehat w_{it}$. Decrease $z^*_{\ell t}$ by $\delta_r$ and increase $z^*_{jt}$ by $\delta_r$ if $\widehat w_{jt} > \widehat w_{\ell t}$.
    \end{itemize}
    Recall that we assume $\delta_r \to 0$. Therefore, the feasibility of above approach is guaranteed by the fact that we have $\sum_{i \in U} z^*_{it} + \delta_r \leq p_t$ for Case 1 and $z^*_{\ell t} - \delta_r \geq 0$ for Case 2 when $\delta_r$ is sufficiently small. 
    
    Define 
    \[
    \tau = \one\left[\sum_{i \in U}z^*_{it} = p_t\right] \cdot \min_{i: z^*_{it} > 0} \widehat w_{it}.
    \]
    to be the ``threshold value'' of the above greedy process. Note that in Case 1, the value of $\tau$ is $0$ before making the update, while in Case 2, we only decrease the value of the LP variable with a minimum weight. Therefore, the value of $\tau$ is non-decreasing during the above greedy process.

    Let $\tau_r$ be the threshold value $\tau$ with respect to solution $\{r_{it}\}$. Starting from $\{r_{it}\}$, consider to repeatedly perform the above approach for each coordinate of $\{r_{it}\}$ and maintain the optimality of solution $\{z^*_{it}\}$, until $\{r_{it}\}$ is updated to $\{r'_{it}\}$. Let $\tau_{r'}$ be the corresponding threshold value with respect to the converted solution $\{r'_{it}\}$. Then, we have $\tau_r \leq \tau_{r'}$.

    We finish the proof by noting that when $\delta_r \to 0$, we have
    \begin{align*}
        f_t\left(\{r_{it}\}_{-j} \cup \{r_{jt} + \Delta_r\}\} \right) - f_t\left(\{r_{it}\} \right) ~&=~ (\widehat w_{jt} - \min\{\widehat w_{jt}, \tau_r\}) \cdot \delta_r \\
        ~&\geq~ (\widehat w_{jt} - \min\{\widehat w_{jt}, \tau_{r'}\}) \cdot \delta_r \\
        ~&=~f_t\left(\{r'_{it}\}_{-j} \cup \{r'_{jt} + \Delta_r\}\} \right) - f_t\left(\{r'_{it}\} \right),
    \end{align*}
    where the two equalities follow from the fact that $(\widehat w_{jt} - \min\{\widehat w_{jt}, \tau_{r}\}) \cdot \delta_r$ and $(\widehat w_{jt} - \min\{\widehat w_{jt}, \tau_{r'}\}) \cdot \delta_r$ are the marginal gains of the greedy process for solutions $\{r_{it}\}$ and $\{r'_{it}\}$ respectively. 
\end{proof}

\end{document}